\newtheorem{definition}{Definition}
\newtheorem{lemma}{Lemma}
\newtheorem{theorem}{Theorem}
\newtheorem{remark}{Remark}
\newtheorem*{example}{Example}
\newtheorem{thm}{Theorem}[section]
\global\long\def\RR{\mathbb{R}}
\global\long\def\EE{\mathbb{E}}
\global\long\def\II{\mathbb{I}}
\newcommand{\bfa}{\mathbf{a}}
\newcommand{\bfe}{\mathbf{e}}
\newcommand{\bfg}{\mathbf{g}}
\newcommand{\bfu}{\mathbf{u}}
\newcommand{\bfA}{\mathbf{A}}
\newcommand{\bfB}{\mathbf{B}}
\newcommand{\bfL}{L}
\newcommand{\bfV}{\mathbf{V}}
\newcommand{\bfZ}{\mathbf{Z}}
\def \yhat{\hat{y}}
\def \Yhat{\hat{\text{Y}}}
\def \tensor {\otimes}
\def \bomega {\boldsymbol{\omega}}
\def \mubar {\bar{\mu}}
\def\<{\langle}
\def\>{\rangle}
\def \tti {\texttt{i}}
\def \ttb {\texttt{b}}
\def \ttt {\texttt{t}}
\def \ttT {\texttt{T}}
\def \calB {\mathcal{B}}
\def \calE {\mathcal{E}}
\def \calF {\mathcal{F}}
\def \calH {\mathcal{H}}
\def \calM {\mathcal{M}}
\def \calN {\mathcal{N}}
\def \calR {\mathcal{R}}
\def \calX {\mathcal{X}}
\def \calY {\mathcal{Y}}
\DeclarePairedDelimiter\set{\{}{\}}
\DeclarePairedDelimiter\parens{(}{)}
\DeclarePairedDelimiter\tuple{(}{)}
\DeclarePairedDelimiter\bracks{[}{]}
\DeclareMathOperator*{\Cov}{Cov}
\newcommand{\ketphi}[1]{|\phi_{#1}\>}
\def \deq {:=}
\def \eqand {\text{ and }}
\newcommand{\add}[1]{\textcolor{black}{#1}}
\def \fubini{\textsf{F}}
\def \textth {\text{th}}
\def \Qx {\text{Q}_{\text{X}}}
\def \Qxy {\text{Q}_{\text{XY}}}
\def \Pyx {\text{P}_{\text{Y}|\text{X}}}
\def\Loss{\mathcal{L}} 
\def \nlayer {\bfL}  
\def \gradL {\nabla \Loss}
\def \state{|\phi\>}
\def \nparams {c}
\def \mqfi {\calF^{\calE}}
\def \mqfitildeij {\tilde{\calF}^{\calE}{(i,j)}}
\def \ap {\mathsf{a_p}}
\def \bq {\mathsf{b_q}}
\def \a {\mathsf{a}}
\def \b {\mathsf{b}}
\def \p {\mathsf{p}}
\def \q {\mathsf{q}}
\def \sigmaap {\sigma^{\a}_\p}
\def \sigmabq {\sigma^{\b}_\q}
\def \Sigmaap {\Sigma^{\a}_\p}
\def \Sigmabq {\Sigma^{\b}_\q}
\def \Fab {\mqfi_{[\ap,\bq]}}
\def\Zab{\bfZ_{[\ap,\bq]}} 
\def\Zbarab{\Bar{\bfZ}{\tuple{\ap,\bq}}} 
\def\Zbartij{\Bar{\bfZ}^{\ttt}{(i,j)}}
\def\Ztildeij{\Tilde{\bfZ}_{[i,j]}} 
\def \git {g_i^{\ttt}} 
\def \gjt {g_j^{\ttt}} 
\def\Loss{\mathcal{L}}
\def\param{\boldsymbol{\theta}}
\def\dparam{d\boldsymbol{\theta}}
\def\paramstar{\boldsymbol{\theta}^\star}
\newcommand{\prob}[1]{\mathbb{P}\{#1\}}
\newacro{iid}[i.i.d.]{independent and identically distributed} 
\newacro{QML}[QML]{quantum machine learning}
\newacro{QNN}[QNN]{quantum neural network}
\newacro{SGD}[SGD]{stochastic gradient descent}
\title{Quantum Natural Stochastic Pairwise Coordinate Descent 
}
\author{
\IEEEauthorblockN{Mohammad Aamir Sohail \IEEEauthorrefmark{1},  Mohsen Heidari \IEEEauthorrefmark{2}, and S. Sandeep Pradhan \IEEEauthorrefmark{1}}\\
      \IEEEauthorrefmark{1}   Department of EECS, University of Michigan, Ann Arbor, USA\\
 \IEEEauthorrefmark{2}   Department of Computer Science, Indiana University, Bloomington, USA \\
 \IEEEauthorrefmark{1} mdaamir@umich.edu 
 \IEEEauthorrefmark{2} mheidar@iu.edu
 \IEEEauthorrefmark{1} pradhanv@umich.edu
}
\date{}
\begin{document}

\maketitle
\pagenumbering{arabic}
\begin{abstract}
Variational quantum algorithms, optimized using gradient-based methods, often exhibit sub-optimal convergence performance due to their dependence on Euclidean geometry. Quantum natural gradient descent (QNGD) is a more efficient method that incorporates the geometry of the state space via a quantum information metric. However, QNGD is computationally intensive and suffers from high sample complexity. 
In this work, 
we formulate a novel quantum information metric 
and construct an unbiased estimator for this metric using single-shot measurements. 
We develop a quantum optimization algorithm that leverages the geometry of the state space via this estimator while avoiding full-state tomography, as in conventional techniques. 
We provide the convergence analysis of the algorithm under mild conditions. 
Furthermore, we provide experimental results that demonstrate the better sample complexity and faster convergence of our algorithm compared to the state-of-the-art approaches. Our results illustrate the algorithm's ability to avoid saddle points and local minima.

\end{abstract}
\section{Introduction}
The potential of quantum computing (QC) to solve complex problems exponentially faster than classical computers promises revolutionary advancements in data analysis, optimization, and encryption \cite{Elben2022,shor1994algorithms,grover1996fast,harrow2009quantum,Bacon2005}. With these rapid advancements in QC, there has been significant research interest in the development of quantum algorithms for applications involving quantum data (such as data generated by quantum many-body systems \cite{tasaki2020physics} and quantum sensors
\cite{degen2017quantum}). Examples of such applications include quantum anomaly detection \cite{liu2018quantum}, quantum state discrimination \cite{Chen2018,slussarenko2017quantum}, quantum pattern matching \cite{sasaki2002quantum}, quantum sensor networks \cite{xia2021quantum}, and
quantum data classification \cite{sentis2019unsupervised,dixit2021searching}. 

Toward this endeavor, substantial efforts have been made in designing quantum learning and optimization algorithms through a broad class of variational quantum algorithms (VQAs) \cite{farhi2018classification,Mitarai2018}, which include popular frameworks like the variational quantum eigensolver (VQE) \cite{tilly2022variational}, the quantum approximate optimization algorithm (QAOA) \cite{farhi2014quantum}, \add{\acp{QNN} \cite{Beer2020,schuld2014quest}, quantum convolutional neural networks (QCNNs) \cite{cong2019quantum}, quantum capsule networks (QCapsNets) \cite{liu2022quantum}, and quantum recurrent neural networks (QRNNs) \cite{bravo2022quantum}.} 
These models involve parameterized quantum circuits (PQC) with tunable parameters that can be \textit{learned} using quantum-classical optimization loops. Typically, gradient descent (GD) is used as a classical optimizer in VQAs due to its ease of implementation.  
However, it often suffers in terms of performance. For example, GD has a greater possibility of getting trapped at saddle points and may not converge to a local minimum \cite{dauphin2014identifying}. This is because, GD strongly relies on Euclidean or $\ell_2$ geometry within the parameter space \cite{amari1998whyng}, which poses a significant drawback because a notable distance between parameters (measured using Euclidean distance) may not necessarily have an equivalent influence on the underlying parameterized quantum states. This means that for two different sets of parameters
with a large Euclidean distance, the corresponding parameterized quantum states
may be indistinguishable. Therefore, it is preferable to 
measure the distance between parameterized quantum states and perform the steepest descent by considering the geometry of the space of these parametrized quantum states. 

In light of this, quantum natural gradient descent (QNGD) \cite{stokes2020quantum} has been proposed as a more suitable optimization method alternative to GD. QNGD moves in the steepest direction with respect to the quantum information (Riemannian) metric tensor associated with the space of quantum states, and it remains invariant under any smooth and invertible re-parameterizations \cite{amari1998natural,martens2020new}. Previous studies have shown that QNGD provides an advantage in optimizing parameterized quantum systems by taking optimization paths more aligned with the underlying geometric structure of quantum states, compared to other strategies \cite{wierichs2020avoiding}.  

Despite these advantages, using QNGD presents notable challenges in learning from quantum data for the following reasons: 
$(i)$ In the case of a density matrix (or mixed state), the quantum information metric tensor reduces to the Bures metric tensor \cite{liu2020quantum,petz1996geometries,petz1998information}, which relies on the spectral decomposition of the density matrix. Thus, evaluating the Bures metric tensor necessitates a full-state tomography of the density matrix, which in turn requires an exponential number of identical copies of the quantum state with the unknown density matrix {\cite{haah2016sample,flammia2012quantum,rath2021quantum,vitale2023estimation,gross2010quantum}}. $(ii)$ No-cloning theorem implies that an unknown quantum state cannot be replicated \cite{wootters1982single}. This means that a quantum state can only be measured once to extract any information. $(iii)$ Every iteration of QNGD requires evaluating the quantum information metric tensor, which has a cost that scales quadratically with the number of PQC parameters \add{\cite{stokes2020quantum}}. We further elaborate on these challenges and related works in Supplementary Note 1.

To the best of our knowledge, QNGD has not been thoroughly explored due to these constraints in the context of \ac{QML} for quantum data. The majority of the prior works have focused on the following aspects: {single-shot} gradient estimation \cite{heidari2022toward}, reducing the computational complexity of the quantum information metric tensor, and improving methods for its estimation \cite{gacon2021simultaneous,koczorQNGmixed,van2021measurement}. Some of these works have primarily focused on finding the ground state energy of a quantum system using VQE. Moreover, the convergence performance of QNGD in training a quantum learning model has not received enough attention despite several experiments demonstrating faster convergence than GD. 

\add{In this work, we introduce a novel metric-based quantum optimization algorithm in the framework of VQA. Our solution is based on a new ensemble-based quantum Fisher information metric (E-QFIM), in which each element can be efficiently estimated without requiring full-state tomography of the density matrix, thereby addressing the limitations of the Bures metric tensor. We construct an unbiased estimator for the full E-QFIM by first estimating a randomly selected 
$2\times 2$ submatrix using single-shot measurements. Then, we perform classical post-processing to design an unbiased estimator for the entire E-QFIM. Using the resulting unbiased estimator of E-QFIM, we design a metric-based single-shot optimization algorithm, namely, quantum natural stochastic pairwise coordinate descent (2-QNSCD). The empirical results show that 2-QNSCD outperforms single-shot stochastic gradient descent (SGD) despite both algorithms using the same number of samples. 
This approach aims to make the metric-based optimization method practical and efficient for QML applications involving quantum data, such as quantum simulation, quantum sensing, and quantum many-body physics.
}


In what follows, we present an (informal) overview of our main results, followed by a more formal and detailed discussion (Theorems \ref{thm:EQFI},\ref{thm:unbiasedQFIM}, \ref{thm:2QNCDconv}, and \ref{thm:2QNSCDsample-comp}) and leaving proofs for supplementary notes.  
Our notations are as follows: We use $[\nparams] \deq \set{1, 2, \ldots, \nparams}$. 
\add{Vectors are represented using boldface lowercase letters (including Greek letters) such as $\param$ and $\bfg$.}
Matrices are denoted by boldface uppercase letters, such as $\bfZ$. For a given matrix $\bfZ$, we denote the element of $\bfZ$ in the $i^{\text{th}}$ row and $j^{\text{th}}$ column as $\bfZ_{(i,j)}$, and $\bfZ_{[i,j]}$ denotes the $2\times 2$ submatrix of $\bfZ$ formed by selecting the $i^\textth\eqand j^\textth$ rows and columns.
For a given pure quantum state $\state$, the corresponding density matrix is denoted as $\Phi \deq |\phi\>\<\phi|$. \add{ To represent a series of quantum operations, we introduce the following shorthand notation: $1. \ W_{[\a:\b]} \deq W_{\a}W_{\a+1} \cdots W_{\b}$ and $W_{[\a:\b)} \deq W_{\a}W_{\a+1} \cdots W_{\b-1}$ when $\a \leq \b$.  $2.\ W_{[\a:\b]} \deq W_{\a}W_{\a-1} \cdots W_{\b}$ and $W_{[\a:\b)} \deq W_{\a}W_{\a-1} \cdots W_{\b-1}$, when $\a > \b$. Finally, we define the
 the covariance between two observables $\bfA$ and $\bfB$ acting on a mixed state $\rho$ as:}
 \begin{equation*}
     \add{\Cov(\bfA,\bfB)_{\rho} := \tfrac{1}{2}\text{Tr}(\{\bfA,\bfB\}\rho) - \text{Tr}(\bfA\rho)\text{Tr}(\bfB\rho).}
 \end{equation*}
\section{Results}
\noindent \textbf{Overview of the results.}
We summarize the key contributions by informally stating the main results below. 
Let $\Loss(\param)$ be a loss function associated with a PQC described by a parameterized unitary $U(\param)$, followed by a measurement (POVM) $\set{\Lambda_y}_{y\in \calY}$. The unitary $U(\param)$ acts on an input ensemble of quantum state $\set{(\Qx(x), \ket{\phi_x})}_{x\in \calX}$ with $\rho$ being the corresponding density matrix. Let $\nparams$ denote the number of PQC parameters.

 $\bullet \ \textbf{Ensemble-based Quantum Fisher Information Metric.}$ We present a new quantum information metric tensor for an ensemble of pure states, namely, \emph{ensemble-based quantum Fisher information metric} (E-QFIM), denoted as $\mqfi$. 
We derive the following characterization stated below (see Theorem \ref{thm:EQFI} for the formal statement) by constructing a new measure of closeness between ensembles, namely, ensemble fidelity, which serves as a lower bound to the Uhlmann fidelity. 
   \add{ \begin{thm}[informal]\label{thm:EQFI_informal}
    For a given PQC and an input density matrix $\rho$,
    the matrix elements of the E-QFIM are given as:
\begin{equation}\label{eqn:thm:EQFI_informal}
        \mqfi_{(k,l)}(\param) \deq 
        \Cov(\Upsilon_{k}(\param),\Upsilon_{l}(\param))_{\rho},
    \end{equation}
    \noindent where 
   $\Upsilon_{k}(\param)\deq -i(\partial_{k} U^{\dagger}(\param) )U(\param) = i U^\dagger(\param)(\partial_{k}U(\param))$ is an observable. 
\end{thm}}
\add{This new ensemble-based metric can be efficiently estimated with a sample complexity that does not depend on the system size. More formally, each element of the E-QFIM involving bounded observables $\Upsilon_{k}\eqand \Upsilon_{l}$ can be estimated within an accuracy of $\epsilon$ with probability at least 
$(1-\delta)$ by using $\Theta\left(\tfrac{1}{\epsilon^2}\log\tfrac{1}{\delta}\right)$ samples. This proof follows from the standard argument using Hoeffding’s inequality. In contrast, some existing studies, utilizing quantum state tomography \cite{flammia2024quantum,haah2016sample} and randomized measurements \cite{rath2021quantum,vitale2024robust}, suggest that the sample complexity for estimating the Bures metric grows exponentially with the system size.} 


\vspace{5pt}    
$\bullet \ \textbf{Estimation of E-QFIM.}$
\add{We construct an unbiased estimator of E-QFIM, denoted as $\bar{\bfZ}(i,j)$, by first taking into account the underlying geometry of the space of quantum states corresponding to only the pair of random parameter coordinates $(i,j)$. For estimating the $2\times 2$ submatrix corresponding to parameters coordinates $i$ and $j$, we conceptualize a sequential measurement strategy (see Lemma \ref{lem:seq_meas_anticommutator}) that employs single-shot and mid-circuit measurements to estimate the following terms using only four quantum samples: the anticommutator of two observables and the product of their expectation values. Furthermore, it does not require any additional ancilla qubits, and its gate complexity is $\Theta(1)$ to that of the PQC.} 

\add{Once this $2\times2$ estimate is obtained, it is embedded into the full 
$\nparams \times \nparams$ matrix in a highly sparse structure (all off-diagonal entries set to zero except at positions $(i,j)$ and $(j,i)$). Finally, we construct an unbiased estimator of a complete matrix by adding an appropriate regularization term and performing classical post-processing to ensure that the estimator $\bar{\bfZ}(i,j)$ is both non-singular and unbiased. This approach helps to avoid the quadratic cost of estimating $O(\nparams^2)$ terms of E-QFIM. The algorithm is summarized below (see Theorem \ref{thm:unbiasedQFIM} for the formal statement).}




    \begin{thm}[Unbiased Estimation of $\mqfi$ (informal)]\label{thm:unbiasedQFIM_informal}
    For every pair of randomly selected coordinates $(i,j)$, there exists an algorithm (\text{Algorithm \ref{alg:qfimEst}} in Methods) that takes only four training quantum data to provide an unbiased estimation of the metric, i.e., $\EE[\bar{\bfZ}(i,j)] = \mqfi$, where the expectation is over the random pair of coordinates, input quantum state, and the outcome of quantum measurements. 
    \end{thm}
   
$\bullet$ \textbf{Quantum Natural Stochastic Pairwise Coordinate Descent.} 
At each iteration, 2-QNSCD uses six training quantum data, constructs an unbiased estimator of the E-QFIM using four training quantum data as described above, and updates only two parameters according to the following update rule:
    \begin{equation}
   \param^{(\ttt+1)} = \param^{(\ttt)} - \eta_\ttt \big|\bar{\bfZ}^{\ttt}(i,j)\big|^{-1} \bfg^\ttt(i,j),
\end{equation}
    where $(i,j)$ is a pair of coordinates chosen randomly at every iteration, $\eta_\ttt$ is the learning rate, $|\bar{\bfZ}^{\ttt}| \deq ((\bar{\bfZ}^{\ttt}) ^{\dagger}\bar{\bfZ}^{\ttt})^{1/2}$ and  $\bfg^\ttt$ is an unbiased estimator of the gradient $\gradL$ constructed using two training quantum data. 
    
\add{ $\bullet$ \textbf{Convergence of 2-QNSCD.} We provide a convergence analysis of the 2-QNSCD optimization method under a mild assumption on the loss function. The following theorem summarizes the convergence rate (see Theorem \ref{thm:2QNCDconv} for the formal statement).
    \begin{thm}[Convergence of 2-QNSCD (informal)]\label{thm:2QNCDconv_informal}Under the mild assumption, 2-QNSCD with a fixed learning rate achieves an expected exponential convergence rate up to a residual error bounded by $\Delta$, 
\begin{equation}\label{eq:2QNCDconvergence_rate_informal}
     \EE[\Loss(\param^{(k)}) - \Loss(\paramstar)]
\leq (1-\sigma)^{k}r_0 +\Delta,
\end{equation}
where $r_0\deq (\Loss(\param^{(0)}) - \Loss(\paramstar))$, $\param^*$ is the global optimizer for $\Loss(\param)$, $\sigma$ proportional to $1/c^2$ is the contraction factor, and $\Delta$ is the asymptotic error bound, which depends on the assumptions on the loss function, regularization constant, and moments of the estimators. 
\end{thm}
\emph{Sample Complexity}: Assuming $\epsilon>\Delta$, the number of iterations needed to ensure $\mathbb{E}[\mathcal{L}(\boldsymbol{\theta}^{(k)})-\mathcal{L}(\boldsymbol{\theta}^*)] \leq \epsilon$ is given by 
\(
k \geq \!\left(\frac{1-\sigma}{\sigma}\ln\!\left(\frac{r_0}{\epsilon - \Delta}\right)\right).
\)
Since we are using a fixed number of samples at each iteration, the overall sample complexity is 
\[O \!\left(c^2\ln\!\left(\frac{r_0}{\epsilon - \Delta}\right)\right).\]
Note that the overall sample complexity depends on $c^2$ and $\Delta$ (see Theorem \ref{thm:2QNSCDsample-comp} for the formal statement). 
The factor $c^2$ corresponds to the penalty for the 2-coordinate approximation of the full metric, and the parameter $\Delta$ represents the penalty for using estimates of the metric and the gradient in place of the original quantities. The latter is not unique to 2-QNSCD and varies depending on the particular estimation technique used. 
\begin{remark}
    For a fixed total sample budget, there is a trade-off among the following four factors: $(i)$ the contraction factor $\sigma$, $(ii)$ the asymptotic error bound $\Delta$,  $(iii)$ the number of iterations (updates) performed,
$(iv)$ condition for convergence (Assumption 2 of Theorem 3). 
In our case, we are using six samples to estimate the gradient and the metric and correspondingly update the parameters. Assume that samples are generated at a constant rate of time. 
Now, if we can spend $n$ samples per update of the parameters, then we can expect the following: (i) a smaller asymptotic error bound $\Delta'$, because of smaller estimation errors. 
(ii) a larger contraction factor $\sigma'$ because more gradient and metric elements are being estimated. (iii) a fewer (by factor $n/6$) number of iterations (parameter updates) as compared to 2-QNSCD, since we must collect \( n \) samples for a parameter update, while 2-QNSCD updates the parameters using just six samples. (iv) a milder condition for convergence (a relaxation on Assumption 2 of Theorem 3). As a result, we get the following inequality:
\[\mathbb{E}[\mathcal{L}(\boldsymbol{\theta}^{(6k/n)})-\mathcal{L}(\boldsymbol{\theta}^*)] \leq (1-\sigma')^{(6k/n)} \ r_0 + \Delta'.\]
\end{remark}}


$\bullet$ \textbf{Empirical Validation.} We support our theoretical findings with experimental results on a binary classification problem. Our evaluations, using multiple PQCs, demonstrate 2-QNSCD's data efficiency and faster convergence rate compared to the single-shot SGD method, which potentially converges to a local minimum or gets trapped at saddle points. 
 
     $\bullet$ \textbf{QGI Inequality.} In the pursuit of the 2-QNSCD convergence analysis, we also characterize a sufficient condition required for the exponential convergence rate of standard QNGD (see Theorem A.1 in Supplementary Note 2), assuming complete access to the metric tensor and the gradient. Particularly, we introduce the \emph{quadratic geometric information} (QGI) inequality (Def. \!4 in Supplementary Note 2), which serves as a more general metric-dependent criterion compared to the Polyak-Łojasiewicz (PL) inequality \cite{polyak1963gradient} (Def. \!3 in Supplementary Note 2) and ensures an exponentially faster rate of convergence. In Supplementary Notes 2 and 3, we provide examples of non-convex functions that satisfy the QGI inequality but do not meet the PL inequality. This illustrates that QNGD can achieve exponential convergence even when GD might not guarantee convergence.

     Note here that if a loss function satisfies the PL inequality, GD achieves an exponential convergence rate to the global minimum. It was first shown by Polyak in 1963; for more details, please refer to \cite{karimi2016linear,polyak1963gradient}. 

In the following subsections, we formally present the main results of this paper. We begin by discussing the PQC architecture considered in this work and then formalize the QML model involving quantum data under the supervised learning framework. 

\noindent \textbf{PQC architecture:} We consider a standard $d$-qubit multi-layered parameterized unitary operator 
as our PQC to train a quantum learning model. Each layer is composed of a parameterized unitary operator, denoted as $U_{\a}(\param_\a)$, followed by a non-parameterized (fixed) unitary operator, denoted as $V_\a$, where $\a$ is the layer index, and $\param_\a \in [0,2\pi)^{d}$ is a parameter vector corresponding to the layer $\a$. For simplicity of exposition, the parameterized unitaries consist of tensor products of single-qubit Pauli rotations,
i.e., 
$U_\a(\param_\a) \deq 
\bigotimes_{\p=1}^{d} R_{\sigmaap}(\param_{(\a,\p)}),$ where $\p$ denotes the qubit index, $\param_{(\a,\p)} \in [0,2\pi)$ is the parameter corresponding to the ($\a^\textth$ layer, $\p^\textth$ qubit) unitary operator, $\sigmaap$ are Pauli operators, and $R_{\sigmaap}(\param_{(\a,\p)})\deq \exp(-i\param_{(\a,\p)} \sigmaap/2)$ is the rotation operator corresponding to $\sigmaap$. Thus, mathematically, our PQC with $\nlayer$ layers is a parameterized unitary operator, defined as: 
\begin{equation}\label{eq:utheta}
        U(\param) \deq V_{\nlayer}U_{\nlayer}(\param_\nlayer)\cdots V_2U_2(\param_2)V_1U_1(\param_1),
\end{equation}
where  $\param \in [0,2\pi)^{\nparams}$ represents a vector comprising all the parameters across each layer and $\nparams = dL$ is the number of PQC parameters. 
For conciseness, we use the shorthand \(W_{\alpha}(\param_{\alpha}) \deq V_{\alpha}U_{\alpha}(\param_{\alpha})\) throughout the paper and omit the explicit parameter dependence (e.g., writing \(W_\alpha\) instead of \(W_\alpha(\param_\alpha)\)) when it is clear from context.

\begin{figure}
    \centering
    \includegraphics[height=3.25in,width=4.25in]{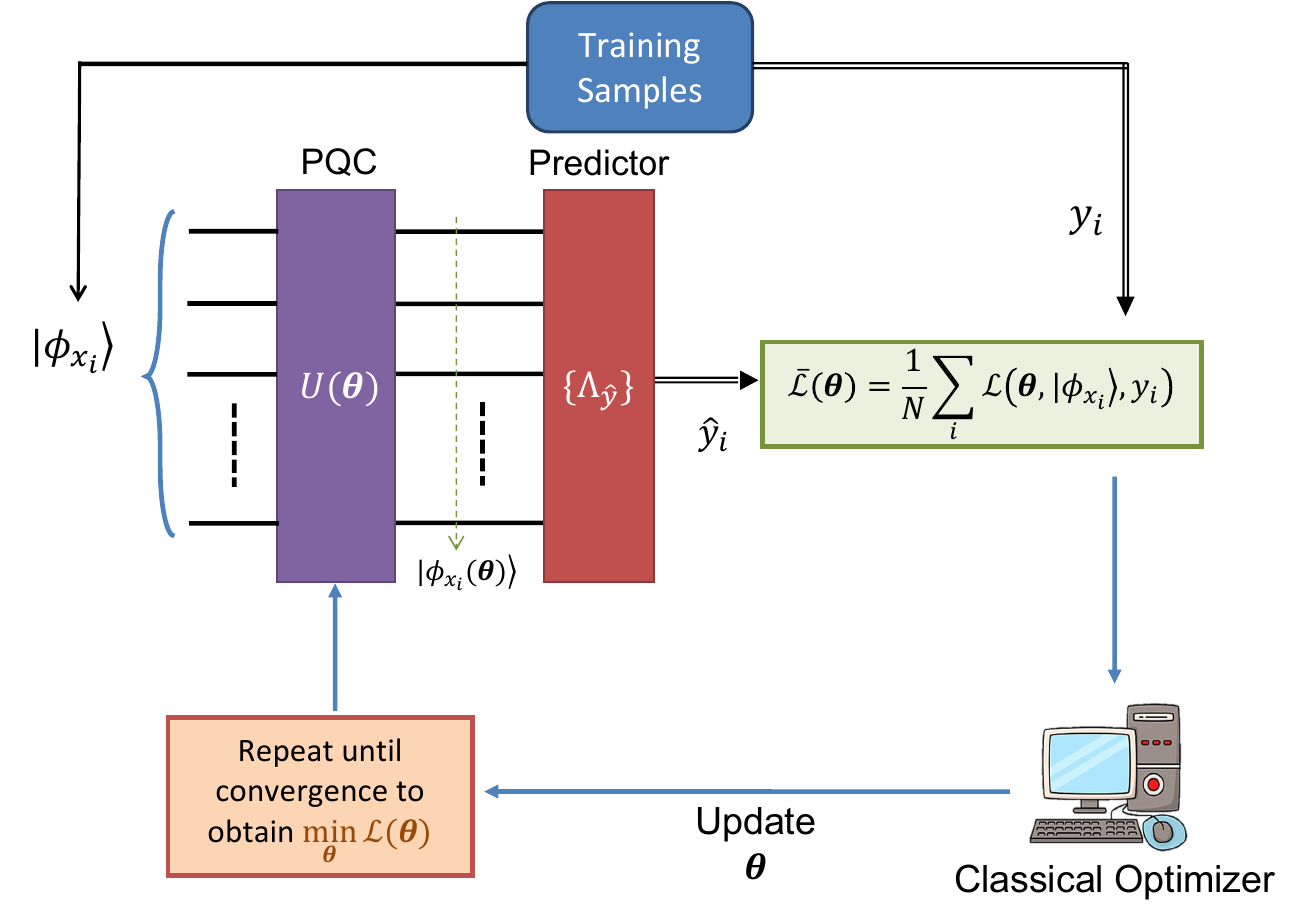}
    \caption{{Quantum Learning Model.} Quantum data labeled by classical variables is processed using a parameterized quantum circuit 
$U(\param)$, followed by a POVM $\set{\Lambda_{\yhat}}$ to predict labels. The model aims to find an optimal parameter
by minimizing the average per-sample expected loss $\bar{\Loss}(\param)$.}
    \label{fig:QLmodel}
\end{figure}
\noindent \textbf{Quantum Learning Model:} We focus on a supervised learning model with quantum data labeled by a classical variable (see Fig.~\ref{fig:QLmodel}). Consider an unknown but fixed probability distribution $\Qxy$ on $\calX \times \calY$, where $\calX \eqand \calY$ are finite sets. 
The feature set is an ensemble of pure quantum states $\calE \deq \set{(\Qx(x),\ketphi{x})}$, where $\rho \deq \sum_{x \in \calX} \Qx(x) |\phi_x\>\<\phi_x|$ represents the corresponding density matrix, and $\calY$ represents the set of all possible classical labels. 

We are given $n$ \ac{iid} samples $\set{(\ketphi{x_i},y_i)}_{i \in [n]}$, where $(x_i,y_i)$ are randomly generated according to $\Qxy$. A supervised quantum classification learning algorithm processes the training samples using $U(\param)$, followed by a quantum measurement characterized via a POVM (positive operator-valued measure)  $\Lambda \deq \set{\Lambda_{\yhat} \geq 0 : \sum_{\yhat \in \calY } \Lambda_{\yhat} = I}$ (referred to as predictor) to predict the labels of unseen samples. The goal of the learning algorithm is to find an optimal $\param^*$ that ensures the predicted label closely matches the given training labels for each training quantum sample.

\noindent \textit{Loss function}: Given a sample $\ketphi{x}$, the model generates a label $\yhat$ as the output of the measurement according to a conditional probability distribution, as determined by Born's rule: 
\begin{equation}
    \Pyx(\yhat | \ketphi{x},\param) = \Tr\{\Lambda_{\yhat}U(\param)\Phi_xU^{\dagger}(\param)\}.
\end{equation}
Let $\Yhat$ denote the corresponding random variable.
To measure the incurred prediction loss, we employ a loss function $\ell:\calY\times\calY \rightarrow[0,\infty)$, for example, 0-1 loss, absolute loss, and squared loss.
Then, conditioned on a fixed sample $\parens{\ket{\phi_x},y}$, the per-sample expected loss taken with respect to $\Yhat$ is expressed as:
\begin{equation}\label{eqn:persampleloss}
    \Loss(\param,\ketphi{x},y) \deq \EE_{\Pyx}[\ell(y,\Yhat)] = \sum_{\yhat\in \calY} \ell(y,\yhat) \Tr\{\Lambda_{\yhat}U(\param)\Phi_xU^{\dagger}(\param)\}.
\end{equation}
After taking the expectation over the sample's distribution, we get the incurred expected loss as a function of $\param$: $\Loss(\param) \deq \EE_{\text{Q}_{\text{X}\text{Y}}}[\Loss(\param,\ketphi{\text{X}},\text{Y})].$ 
Since the underlying distribution $\Qxy$ is unknown, directly minimizing the expected loss is not feasible. Consequently, given the sample set $\set{\parens{\ketphi{x_i},y_i}}_{i\in [n]}$, we aim to minimize the incurred average per-sample expected loss:
\begin{equation}\label{eqn:avgpersamplexpectedloss}
    \bar{\Loss}(\param) \deq \frac{1}{n}\sum_{i=1}^n 
    \Loss(\param,\ketphi{x_i},y_i) =  \frac{1}{n}\sum_{i=1}^n \sum_{\yhat \in \calY} 
    \ell(y_i,\yhat) \Tr\set{\Lambda_{\yhat}U(\param)\Phi_{x_i}U^{\dagger}(\param)}.
\end{equation}

We now present the main findings of our work. Firstly, we establish the framework for constructing E-QFIM and then derive the E-QFIM for our PQC model. Subsequently, we will introduce our 2-QNSCD optimization method, its convergence analysis, and experimental results demonstrating the 2-QNSCD's data efficiency and faster convergence rate.

\noindent \textbf{Ensemble Quantum Fisher Information Metric (E-QFIM).} For a given ensemble of quantum states $\calE \deq \set{\Qx(x),\ketphi{x}}_{x\in \calX}$, define a parameterized ensemble $\calE(\param) \deq \set{\Qx(x), \ket{\phi_x(\param)}}_{x\in\calX}$ with corresponding density matrix $\rho(\param) \deq \sum_x\Qx(x) |\phi_x(\param)\>\<\phi_x(\param)|$, where $\ket{\phi_x(\param)} \deq U(\param)\ketphi{x}.$ The central idea behind the E-QFIM is constructing a distance measure that avoids any use of density matrices and can be calculated solely using the ensemble representation. To achieve this, we establish a lower bound on the Uhlmann fidelity such that the resulting expression is free from the density matrix representation.
Consider the following inequalities:
\begin{align}\label{eqn:lowerbound_UhlmannFidelity}
    f_\rho(\param,\param^{'}) 
& \geq \sum_x\Qx(x)|\<\phi_x(\param)|\phi_x(\param^{'})\>|^2 \geq \big|\sum_x\Qx(x)\<\phi_x(\param)|\phi_x(\param^{'})\>\big|^2, 
\end{align}
where $f_\rho(\param,\param^{'}) \deq (\Tr\{({\sqrt{\rho(\param)}\rho(\param^{'})\sqrt{\rho(\param)}})^{1/2}\})^2$ is the Uhlmann fidelity \cite{wilde2013quantum}, the first and second inequalities follow from the joint concavity of Uhlmann fidelity \cite[Ch.9]{wilde2013quantum} and Jensen's inequality, respectively. It can be observed that the lower bound expression above can be described using only the ensemble of pure states. 
 Now, we select the lower bound expression of \eqref{eqn:lowerbound_UhlmannFidelity} as a measure of closeness between ensembles of quantum states. We define \emph{ensemble fidelity} between two ensembles of pure states characterized by parameters $\param \eqand \param^{'}$ as 
\begin{equation}
    \label{eqn:ensemble_fid}
    f_\calE(\param,\param^{'})\deq \Big|\sum_{x\in \calX} \Qx(x) \<\phi_x(\param)|\phi_x(\param^{'})\>\Big|^2.
\end{equation} 
Subsequently, define a distance measure between ensembles of quantum states using \eqref{eqn:ensemble_fid} as follows:
\begin{equation}\label{eqn:ensemble_dist}
    d_{\text{E}}(\calE(\param),\calE(\param^{'})) \deq \big(2-2\sqrt{f_\calE(\param,\param^{'})}\big)^{1/2}.
\end{equation}

 \noindent Using \eqref{eqn:lowerbound_UhlmannFidelity}, it can be easily observed that \eqref{eqn:ensemble_dist} serves as an upper bound to the Bures distance (see Supplementary Note 1). The ensemble distance measure is useful as a measure of the distinguishability of two 
ensembles $\calE(\param)\eqand\calE(\param^{'})$ because 
it is non-negative and 
 equals zero if and only if every corresponding quantum state within $\calE(\param)$ and $\calE(\param^{'})$ is essentially the same, i.e., differing solely by a constant global phase: $|\phi_x(\param^{'})\>= e^{i\delta}  \ket{\phi_x(\param)}$ for all $x \in \mathcal{X}$, where $\delta \in [0,2\pi)$. Furthermore, the distance measures $d_{\text{E}}$ is monotonic, i.e., it does not increase under the action of any quantum channel (\add{Completely Positive Trace-Preserving} map \cite{wilde2013quantum}). We discuss these in more detail in Supplementary Note 4.
 
In order to understand how a change in the parameter $\param$ affects the underlying ensemble, we examine how a slight perturbation of the parameter $(\param + \dparam)$ reflects in the ensemble distance. We develop this by analyzing the ensemble distance between two infinitesimally close ensembles  
using the Taylor series expansion around $\dparam = 0$. 
Noting the simple fact that the ensemble distance measure achieves its minimum for $\dparam=0$, i.e., $d_{\text{E}}(\calE(\param),\calE(\param)) = 0$, we infer that the first-order contribution vanishes around the minimum, and the second-order term will be the first contribution of the Taylor series expansion.
The squared infinitesimal ensemble distance is expressed as:
\begin{equation*}
    \text{d}s^2 := d_{\text{E}}^2(\calE(\param),\calE(\param+\dparam)) = \dparam^{\ttT}\mqfi(\param)\dparam + O(\|\dparam\|^3),
\end{equation*}
where we refer to $\mqfi(\param)$ as E-QFIM, whose entries are given as, 
\begin{equation*}
    \mqfi_{(\ap,\bq)}(\param) = \frac{1}{2}\frac{\partial^2 }{\partial \dparam_{(\a,\p)} \partial \dparam_{(\b,\q)}} d_{\text{E}}^2(\param,\param + \dparam) \Big\vert_{\dparam = 0}.
\end{equation*}
Here, we use the short-hand notation $\ap \deq (\a,\p)$. The information metric $\mqfi(\param)$ essentially encapsulates all the information about the local vicinity of $\param$, measured by $d_{\text{E}}$ in the space of quantum states. The E-QFIM can be calculated by the following theorem:

\begin{theorem}\label{thm:EQFI}
    For a given unitary operator $U(\param)$ and ensemble $\calE = \set{(\Qx(x),|\phi_{x}\>)}$,
    the elements of the ensemble quantum Fisher information metric (E-QFIM) are given as:
    \begin{equation}\label{eqn:thm:EQFI}
        \mqfi_{(\ap,\bq)}(\param) \deq \Re{\Tr(\Upsilon_{\ap}(\param)\Upsilon_{\bq}(\param)\rho) - \Tr(\Upsilon_{\ap}(\param)\rho)\Tr(\Upsilon_{\bq}(\param)\rho)} = \Cov(\Upsilon_{\ap}(\param),\Upsilon_{\bq}(\param))_{\rho},
    \end{equation}
    where $\rho \deq \sum_{x\in\calX} \Qx(x) |\phi_{x}\>\<\phi_{x}|$ is the density matrix corresponding to the ensemble $\calE$ and $\Upsilon_{\ap}$ is a Hermitian operator defined as \cite{taddei2013quantum}:
    \begin{equation*}
        \Upsilon_{\ap}(\param)\deq -i\left(\frac{\partial U^{\dagger}(\param)}{\partial \param_{(\a,\p)}}\right)U(\param) = i U^\dagger(\param)\left(\frac{\partial U(\param)}{\partial \param_{(\a,\p)}}\right).
    \end{equation*}
\end{theorem}

 A detailed proof is provided in Supplementary Note 5. It can be easily seen that the $\mqfi$ metric reduces to the Fubini-Study metric (see Def. \!1 in Supplementary Note 1) for pure quantum states, i.e., $\calE = \set{(1,\ket{\phi})}$. In this sense, E-QFIM can be considered a generalization of the Fubini-Study metric to an ensemble of quantum states. We derive an explicit expression of E-QFIM for our PQC architecture, as outlined in Lemma \ref{lem:QNN_EQFI}. The complete derivation is provided in Supplementary Note 6.

\begin{lemma}\label{lem:QNN_EQFI} For the PQC $U(\param)$ described in Results and an ensemble $\set{(\Qx(x),|\phi_x\>)}$ with the corresponding the density matrix $\rho$, the entry of E-QFIM is obtained as 
    \begin{align}\label{eqn:QNN_EQFI}
    \mqfi_{(\ap,\bq)}(\param) = \Cov(\Upsilon_{\ap},\Upsilon_{\bq})_{\rho}
\end{align}
where $\Upsilon_{\ap} \deq W_{[1:\a)}^{\dagger}\parens{\II^{\tensor [1:\p)}\tensor (\sigmaap/2) \tensor \II^{\tensor (\p:d]}} W_{(\a:1]}$ is a Hermitian matrix, $\Upsilon_{\ap}^2 = {\Upsilon_{\bq}^2} = \II/4$.
\end{lemma}
 For brevity, we omit the dependence of $\Upsilon_{\ap}$ and $\Upsilon_{\bq}$ on $\param$, as this is clear from the context.

\noindent \noindent\textbf{Estimation of E-QFIM.}
Looking at the expression of E-QFIM, we see it requires designing estimators for $\text{Tr}(\Upsilon_{\ap}\rho)$, $[\text{Tr}(\Upsilon_{\ap}\rho)]^2$,  and $\text{Tr}(\set{\Upsilon_{\ap},\Upsilon_{\bq}}\rho)$ for all coordinates $\ap$ and $\bq$. 
One can see that the estimation for the first term is relatively simple, as $\Upsilon_\ap$ is observable by itself. For the next term, it amounts to creating an estimator for a term of the form $(\EE[A])^2$, where $A$ is a random variable with a probability distribution $P_A$. To accomplish this, we can generate a pair of independent random variables $A_1$ and $A_2$ with the distribution $P_A$. By taking the product $A_1 A_2$, we obtain an unbiased estimator for $(\EE[A])^2$. In this context, the measurement outcomes are treated as independent random variables, and their product is used to form the desired estimator.

{For the last term, note that $\{\Upsilon_\ap,\Upsilon_\bq\}$ is an observable. However, we do not construct a circuit for the anti-commutator directly because it may not admit an explicit decomposition in terms of standard quantum gates. Even if such decomposition exists, there is no assurance that the resulting implementation would have practical or efficiently realizable depth. 
To address these challenges, we devise a sequential measurement strategy, which leverages the inherent interdependence among these terms, all of which involve $\Upsilon_\ap$ and $\Upsilon_\bq$ in different forms. The measurement strategy is outlined below in Lemma \ref{lem:seq_meas_anticommutator}, and a detailed proof can be found in Supplementary Note 7.}

\begin{lemma}[Sequential Measurement for Anti-Commutator]\label{lem:seq_meas_anticommutator}
Consider Hermitian matrices $\bfA$ and $\bfB$ on Hilbert space $\calH$, such that $\bfA^2 = \II$, i.e., $\bfA$ has only two eigenvalues $\pm1$, and the eigenvectors corresponding to these eigenvalues form a complete orthonormal basis for $\calH$. 
Given a quantum state $\rho$, the expectation of anti-commutator $\set{\bfA,\bfB}$, i.e., $\Tr(\set{\bfA,\bfB}\rho)$, can be computed using the following sequential measurement strategy:
\begin{enumerate}[label=\normalfont \Roman*.]
    \item Perform a measurement on the state $\rho$ along the eigenvectors of $\bfA$, given as $\calM \deq \{\bfA_+,\bfA_-\}$,
    where $\bfA_+$ and $\bfA_-$ are projectors onto eigenspace of $\bfA$ corresponding to $+1$ and $-1$ eigenvalues, respectively. 
    \item Measure the collapsed (post-measurement) state along the eigenvectors of $\bfB$.
\end{enumerate}
Suppose $A$ and $B$ are random variables denoting the measurement outcomes of Steps  
 $\Romannum{1}$ and $\Romannum{2}$, respectively. Then the following equality holds: $\frac{1}{2}\Tr(\set{\bfA,\bfB}\rho) = \EE[AB]$. 
\end{lemma}


Let $\bar{\bfZ}(\ap,\bq)$ be a random $\nparams\times \nparams$ matrix with the following sparse structure. All the off-diagonal entries of the matrix $\bar{\bfZ}$ are zero except the entries corresponding to coordinates $(\ap,\bq)\eqand(\bq,\ap)$. For brevity, we drop the dependence of $\nparams\times\nparams$ matrices, such as $\bar{\bfZ}$, on the pair of coordinates $(\ap, \bq)$ when it is clear from the context. Now, to construct $\bar{\bfZ}$, we begin by randomly selecting a pair of coordinates, say $\ap \eqand \bq$, and then construct a sparse $\nparams\times \nparams$ random matrix, denoted as $\bfZ(\ap,\bq)$. All entries of $\bfZ$ are zero except for the ones corresponding to its $2\times 2$ submatrix $\Zab$.
The seed of the idea for constructing the elements of $\Zab$ is introduced above, and further details are provided in the Methods section. Furthermore, as outlined in Lemma \ref{lem:2x2estimateQFIM}, $\Zab$ needs to be an unbiased estimator of the $2\times 2$ sub-matrix $\mqfi_{[\ap,\bq]}$. The proof is provided in Supplementary Note 8.

\begin{lemma}[Unbiased Estimator of $\Fab$]\label{lem:2x2estimateQFIM}
    For a given pair of coordinates $(\ap, \bq)$, let $\Zab$ be the matrix generated using Algorithm \ref{alg:qfimEst} in Methods. Then, $\EE\left[\Zab\right]=\Fab,$ where the expectation is over measurement outcomes and input quantum states.   
\end{lemma}

Next, we perform classical post-processing on $\bfZ$ to construct a $\nparams\times \nparams$ matrix, denoted as $\Tilde{\bfZ}(\ap,\bq)$. \add{The structure of the $\nparams\times\nparams $ matrix $ \Tilde{\bfZ}$ is as follows: all entries zero except for the corresponding to the $2\times2$ submatrix $\Tilde{\bfZ}_{[\ap,\bq]}$, which is a scaled and shifted version of ${\bfZ}_{[\ap,\bq]}$.} We add a (scaled) regularization constant $\beta$ to the diagonal terms of $\Tilde{\bfZ}$ to ensure the resulting unbiased estimator is numerically stable and positive-definite. Finally, the unbiased estimator of E-QFIM is given as,
\begin{equation*}
     \bar{\bfZ}(\ap,\bq) \deq \frac{\nparams(\nparams-1)}{2}\Big(\Tilde{\bfZ} - \frac{2\beta}{\nparams}\II\Big),
 \end{equation*}
 where $\II$ denotes $\nparams\times \nparams$ identity matrix. The following theorem shows that the final matrix $\bar{\bfZ}(\ap,\bq)$ is an unbiased estimator of $\mqfi$. The proof is provided in Supplementary Note 9.

\begin{theorem}[Unbiased Estimation of $\mqfi$]\label{thm:unbiasedQFIM}
    Let $\bar{\bfZ}{(\ap,\bq)}$ be the matrix generated from Algorithm \ref{alg:qfimEst} (see Methods) for a pair of randomly selected coordinates  $(\ap,\bq)$. The expectation of $\bar{\bfZ}{(\ap,\bq)}$ over the choice of a pair of coordinates, measurement outcomes, and input quantum states satisfies: \begin{equation*}
        \EE[\bar{\bfZ}{(\ap,\bq)}] 
    =\mqfi. 
    \end{equation*}
\end{theorem}

\noindent \textbf{Estimation of the Gradient.}
We begin by computing the derivative of the per-sample expected loss with respect to $\param_{(\a,\p)}$, which can be expressed as \begin{equation*}
    \partial_{\ap}\Loss(\param,\ketphi{x},y) = \sum_{\yhat \in \calY} -\frac{i}{2}\ell(y,\yhat) \Tr\{\Lambda_{\yhat}W_{[L:a]}\big[\Sigmaap, \Phi^\a_x\big] W^{\dagger}_{[a:L]}\},
\end{equation*}
where $\Phi^\a_x \deq W_{(\a:1]}{\Phi_x}W_{[1:\a)}^{\dagger}\eqand \Sigmaap \deq (\II^{\tensor [1:\p)} \tensor \sigmaap \tensor \II^{\tensor (\p:d]})$. We provide the derivation in Supplementary Note 10. Here, note that the derivative involves a commutator term, which poses a challenge because it is an anti-Hermitian matrix and cannot be measured directly. Addressing this challenge, Heidari et al.  \cite{heidari2022toward} introduced an innovative approach to constructing an unbiased estimate for the commutator and, hence, for the gradient. Below, we present a lemma summarizing the gradient estimation method proposed in Lemma 1 (Heidari et al.). For completeness, a detailed proof is given in Supplementary Note 11.

\begin{lemma}[Observable for Commutator]\label{lem:obs_commutator}
  Consider Hermitian matrices $\bfA$ and $\bfB$ on Hilbert space $\calH$, such that $\bfA^2 = \II$, i.e., $\bfA$ has only two eigenvalues $\pm1$, and the eigenvectors corresponding to these eigenvalues form a complete orthonormal basis for $\calH$. For a given quantum state $\rho \in \calH$, the expectation of commutator $\bracks{\bfA,\bfB}$ can be expressed as 
  $\Tr(\bracks{\bfA,\bfB}\rho) =2i \Tr(O\bfV \tilde{\rho} \bfV^{\dagger}),$ where $\tilde{\rho} \deq \rho \tensor |+\>\<+|$, $O$ is an observable, and $\bfV$ is a unitary operator defined as: 
  \begin{equation*}
      O \deq \bfB\tensor \ketbra{0} - \bfB\tensor\ketbra{1} \ \eqand \ \bfV \deq e^{i\pi \bfA/4} \tensor \ketbra{0} + e^{-i\pi \bfA/4} \tensor \ketbra{1}.
  \end{equation*}   
\end{lemma}

To build an unbiased estimator of $\gradL$, we first randomly select a pair of coordinates, say $\ap$ and $\bq$ and then construct a random $\nparams\times 1$ vector, denoted as $\bfg(\ap,\bq)$, with the following sparse structure:  It contains only two non-zero elements corresponding to the chosen coordinates \(\ap\) and \(\bq\). The seed of the idea for constructing these elements using one quantum sample each is outlined in Lemma \ref{lem:obs_commutator}, and additional details are provided in the Methods section. The following theorem shows that $\bfg$ is an unbiased estimator of $\gradL$, and the proof is provided in Supplementary Note 12.

\begin{lemma}[Unbiased Estimation of $\gradL$]\label{thm:unbiasedgrad}
    Let ${\bfg}{(\ap,\bq)}$ be the vector generated from Algorithm \ref{alg:gradientEst} (see Methods) for a pair of randomly selected coordinates  $(\ap,\bq)$. The expectation of ${\bfg}{(\ap,\bq)}$ over the choice of the pair of coordinates, measurements outcome, and input quantum states satisfies: \begin{equation*}
        \EE[{\bfg}{(\ap,\bq)}] =\gradL.
    \end{equation*}
\end{lemma}

\noindent \textbf{Quantum Natural Stochastic Pairwise Coordinate Descent.}
We now introduce the novel metric-based 2-QNSCD optimization algorithm with the following update rule:
\begin{equation}\label{eq:2QNCD update}
  \param^{(\ttt+1)} = \param^{(\ttt)} - \eta_\ttt |\Bar{\bfZ}^{\ttt}{\tuple{\ap,\bq}}|^{-1}\bfg^{\ttt}{\tuple{\ap,\bq}}.
\end{equation}
Here, $\mathrm{\ap}$ and $\bq$ are distinct coordinates randomly chosen from the set $[\nparams]$ and 
$\eta_\ttt$ denotes the learning rate.
In this approach, there are three levels of stochasticity: $(a)$ the selection of the input quantum state, $(b)$ the selection of the coordinate pair, and $(c)$ the measurement outcome of a quantum circuit.

At each iteration, 2-QNSCD selects a random pair of coordinates, denoted as $\ap$ and $\bq$, and constructs the random vector $\bfg^{(\ttt)}{(\ap, \bq)}$ and the random matrix $\Bar{\bfZ}^{\ttt}{\tuple{\ap,\bq}}$.
Our estimation process has a constant overhead, as it only uses six samples per iteration of the {2-QNSCD} algorithm: four for the metric and two for the gradient. Finally, we emphasize that in the update rule \eqref{eq:2QNCD update} only two coordinates of the PQC parameter $\param$, particularly, $\param_{(\a,\p)}\eqand\param_{(\b,\q)}$ are updated. We provide additional details in the Methods section and summarize the 2-QNSCD algorithm in Algorithm \ref{alg:2-QNSCD}.
\begin{algorithm}[ht]
\caption{2-QNSCD}
\label{alg:2-QNSCD}
\DontPrintSemicolon
\setstretch{1.25}
\LinesNumbered
\SetKwFunction{FSum}{Gradient\_Estimator}
\SetKwFunction{FSub}{E-QFIM\_Estimator}

\KwIn{Training data $\{(\ketphi{\ttt}, y_{\ttt})\}_{\ttt=0}^{(6n-1)},$ learning rate $\eta_{\ttt}$, and regularization constant $\beta >0$}
\KwOut{Updated PQC parameters: $\param^{(n)}$}
\tcc{Initialization}
Randomly select the parameters $\param^{(0)}$ over $[0,2\pi)^\nparams$\;
\For{$ \normalfont \ttt = 0$ to $(n-1)$}{
    Randomly select a pair of coordinates $(\ap,\bq)$\; 
    $\bfg^\ttt(\ap,\bq) = $ \FSum{$\{(\ketphi{(6\ttt+k)},y_{(6\ttt+k)}),\ap,\bq\}_{k \in \set{0,1}}$}\;
    $\bar{\bfZ}^{\ttt}{(\ap,\bq)} =$ \FSub{$\set{\ketphi{(6\ttt+k)}}_{k \in [2:5]},\ap,\bq$}\;
    Update the parameter as :\;
    $\param^{(\ttt+1)} \leftarrow \param^{(\ttt)} - \eta_{\ttt} |\bar{\bfZ}^{\ttt}(\ap,\bq)|^{-1}\bfg^\ttt(\ap,\bq)$\;
}
\KwRet $\param$
\end{algorithm}

\noindent \add{\textbf{Convergence Analysis of 2-QNSCD.}
We first state the following assumptions on the loss function $\Loss(\param)$.}

\add{\noindent\textit{Assumption 1.} (Pairwise $\mathsf{L}_2$-smooth) The function $\Loss(\param)$ is pairwise $\mathsf{L}_2$-smooth (or has a pairwise $\mathsf{L}_2$-Lipchitz continuous gradient), i.e., for all pair of coordinates $(i,j) \in [c]^2, 
 \param \in \RR^c, \eqand \alpha_i, \alpha_j \in \RR$, we have
\begin{equation*}
    \Loss(\param + (\alpha_i\bfe_i +\alpha_j\bfe_j)) \leq \Loss(\param) + \gradL(\param)^{\ttT}(\alpha_i\bfe_i+\alpha_j\bfe_j) + \frac{\mathsf{L}_2}{2}\|\alpha_i\bfe_i+\alpha_j\bfe_j\|^2,
\end{equation*}
for some $\mathsf{L}_2 > 0$, where $\bfe_{i}$ and $\bfe_{j}$ represent unit vectors corresponding to $i^{\textth}$ and $j^\textth$ coordinates, respectively.
In this paper, $\|\cdot\|$ denotes the $\ell_2$ (Euclidean) norm for vectors and the spectral norm for matrices unless otherwise specified. }
 
\add{\noindent \textit{Assumption 2.} Let $\paramstar \in \RR^c$ be the global minimum of the loss function $\Loss$. For all $\param \in \RR^c$, it holds that
\begin{equation*}
    \frac{1}{2}\gradL(\param)^\ttT\left(\frac{1}{(c-1)}\mqfi(\param) + 2\beta\II\right)^{-1}\!\gradL(\param) \geq \mubar (\Loss(\param) - \Loss(\param^*)),\quad \text{for some 
 } \beta,\mubar>0.
\end{equation*}
The second assumption can be considered a regularized version of QGI inequality. The QGI inequality condition is independent of convexity, and it can be satisfied by non-convex functions with multiple saddle points. The QGI inequality is a slightly more general condition than the PL inequality. 
Below, we state the 2-QNSCD convergence theorem, where we refer to these assumptions in the proof provided in Supplementary Note 13. The proof mainly relies on the structure of estimators $\bar{\bfZ}$ and $\bfg$ and uses inequalities such as the Kiefer inequality \cite{kiefer1959optimum} and the operator Jensen's inequality \cite{nordstrom2011convexity}.}

\begin{theorem}[Convergence of 2-QNSCD]\label{thm:2QNCDconv} \add{Under the assumptions mentioned above, 2-QNSCD with a fixed learning rate $\eta = \beta/\mathsf{L}_2 
$, number of parameters $\nparams>2$, and the update rule \eqref{eq:2QNCD update} achieves an expected exponential convergence rate up to a residual error bounded by ${(\alpha^2\beta}/{4\mubar)}$. In particular, for every iteration $\ttt > 0$, 
\begin{enumerate}
    \item The expected difference between the loss at iteration 
$\ttt$ and the optimal loss is bounded by
\begin{equation}\label{eq:2QNCDconvergence_rate}
     \EE[\Loss(\param^{(\ttt)}) - \Loss(\paramstar)]
\leq \bigg(1-\frac{4\mubar\beta}{\nparams^2\mathsf{L}_2}\bigg)^{\ttt}(\Loss(\param^{(0)}) - \Loss(\paramstar)) +\frac{\alpha^2\beta}{4\mubar}
\end{equation}
\text{for some } $\beta >0$  and
$\alpha^2 =\max_{(i,j)} \EE\big[\||\Ztildeij^{\ttt}- ({2\beta}/{\nparams})\II_2|^{-1}\|^2\big]\EE\big[\|[\git,\gjt]\|^2\big]$.
\item As a consequence, for sufficiently large $\ttt$, 2-QNSCD converges within a residual neighborhood of the optimal solution, with the size given as
\begin{equation}
    \limsup_{\ttt\rightarrow\infty} \EE[\Loss(\param^{(\ttt)}) - \Loss(\paramstar)] \leq \frac{\alpha^2\beta}{4\mubar}. 
\end{equation}
\end{enumerate}}
\end{theorem}
\add{Note that $r_0\deq (\Loss(\param^{(0)}) - \Loss(\paramstar))$ is the difference between initial and optimal loss, \(\sigma \deq ({4 \mubar \beta}/{\nparams^2 \mathsf{L}_2}) =  ({4 \mubar \beta}/{(dL)^2\mathsf{L}_2})\) represents the contraction factor, where $c=dL$ (see the subsection PQC architecture). Additionally, \(\Delta \deq ({\alpha^2 \beta}/{4 \mubar})\) represents the asymptotic error bound. It depends on the moments of the E-QFIM and gradient estimators, assumptions on the loss function, and the regularization constant.}


\add{\emph{Sample Complexity:} We now provide the overall sample complexity bound for the 2-QNSCD algorithm. 
\begin{theorem}[Sample Complexity of 2-QNSCD]\label{thm:2QNSCDsample-comp}
Consider the 2-QNSCD convergence rate given by \eqref{eq:2QNCDconvergence_rate}. For any $\epsilon>\Delta$, the number of samples required by the 2-QNSCD algorithm to converge to the $\epsilon$-residual neighborhood of the optimal solution 
is 
    \[O\!\left(\nparams^2\ln\!\left(\frac{r_0}{\epsilon - \Delta}\right)\right).\]
\end{theorem}}
It is important to note that the sample complexity does not grow exponentially with the number of qubits 
$d$; instead, it scales quadratically, since 
$c=dL$. 
However, there are a few limitations: 
$1.$ The asymptotic error bound is constant and does not diminish with the number of iterations. $2.$ The contraction factor decreases quadratically with \(d\). As a consequence, for large systems, 2-QNSCD can potentially converge slowly to its asymptotic error bound. $3.$ Assumption 2 in Theorem 3 presents a regularized version of the QGI inequality. For large systems, this condition effectively reduces to the PL inequality. In the supplementary material, we provide examples where the metric-based QGI inequality holds, but the PL inequality does not, particularly in scenarios involving multiple local minima or saddle points. The failure of PL inequality in such cases may impede the convergence of the 2QNSCD algorithm for large systems.

\add{Therefore, for large systems, Theorems \ref{thm:2QNCDconv} and \ref{thm:2QNSCDsample-comp} suggest considering a hybrid approach. This involves starting with the 2-QNSCD to reach within the $\epsilon$-residual neighborhood of the optimal solution using fewer quantum samples and then gradually increasing the number of samples per iteration to reach an optimal solution.} 



\noindent \textbf{Numerical Results.}
To illustrate the utility of 2-QNSCD, we numerically assess the performance of the 2-QNSCD optimization method to train our QML model. In our experiments, we focus on the binary classification of quantum states with labels $\set{+1,-1}$ and conventional 0-1 loss to measure the predictor's accuracy. We use a synthetic dataset that generalizes the well-known dataset used in \cite{Mohseni2004} to multi-qubit systems. We provide the dataset generation process in the Methods section. 

\noindent \textit{Demonstration of Training Progress}:  
To demonstrate the progress during the learning phase, we group samples into multiple batches of size N$=600$. 
During each training step, we use a different batch, and the PQC processes every training sample of that batch and updates the model parameters accordingly. To investigate the speed of convergence, after each step, we compute the empirical loss of the PQC with the updated parameters given as 
\begin{equation}
{\Loss}_{\texttt{Emp}}(\param^{(\tti)}) = \frac{1}{\text{N}}\sum_{j=1}^\text{N}\mathds{1}_{\set{\yhat_j \neq y_j}}(\param^{(\tti)}),
\end{equation}
where $\yhat_j, y_j$ are the predicted and true labels, respectively, corresponding to the $j^{\textth}$ sample of the batch, $\param^{(\tti)}$ is the parameters at the end of $\tti^\textth$ step, and $\mathds{1}_{\set{\cdot}}$ denotes indicator random variable. In addition, we compute the average of per-sample expected loss 
(as described in \eqref{eqn:avgpersamplexpectedloss}) for all samples in the batch. We then plot the progress of training empirical loss and average per-sample expected loss against the number of steps. We provide additional details on the experiment setup and PQC architecture for different qubit configurations considered in the experiment in Supplementary Note 14. 

\noindent \textit{Comparison with different optimization schemes}: 
We compare the 2-QNSCD performance with the corresponding single-shot stochastic gradient descent, namely, RQSGD (randomized quantum stochastic gradient descent) \cite{heidari2022toward}. With N $= 600$, the 2-QNSCD performs $100$ parameter updates or iterations at each step. Therefore, to ensure a fair comparison, we provide 600 samples to RQSGD at each step to perform 100 iterations, but in two different ways. In the first method (2-RQSGD), six samples are used in each iteration to update only two parameters. 
In the second method (6-RQSGD), six samples are used in each iteration to update six parameters. Further details are provided in Supplementary Note 14.
Eventually, we compare the performance with the optimal expected loss within each batch. The closed-form expression of the optimal expected loss is given as
\begin{equation}\label{eqn:optimalLoss}
    \text{Optimal Expected Loss } \Loss
_{\texttt{Opt}} =  \frac{1}{2}\Big(1-\Big\|\frac{1}{\text{N}}\sum_{j=1}^\text{N}y_j\Phi_{x_j}\Big\|_1\Big),
\end{equation}
where $\set{|(\phi_{x_j}\>),y_j}_{j=1}^N$ is the labeled quantum samples in a batch of size N and $\|\cdot\|_1$ is the trace norm. The optimal expected loss for the binary quantum state classification problem is derived in (Lemma 3)\cite{heidari2022toward} using the Holevo-Helstrom theorem \cite{Holevo2012}. 
\begin{figure}
     \centering 
          \centering     
       \includegraphics[height=8.55in,width =\textwidth]{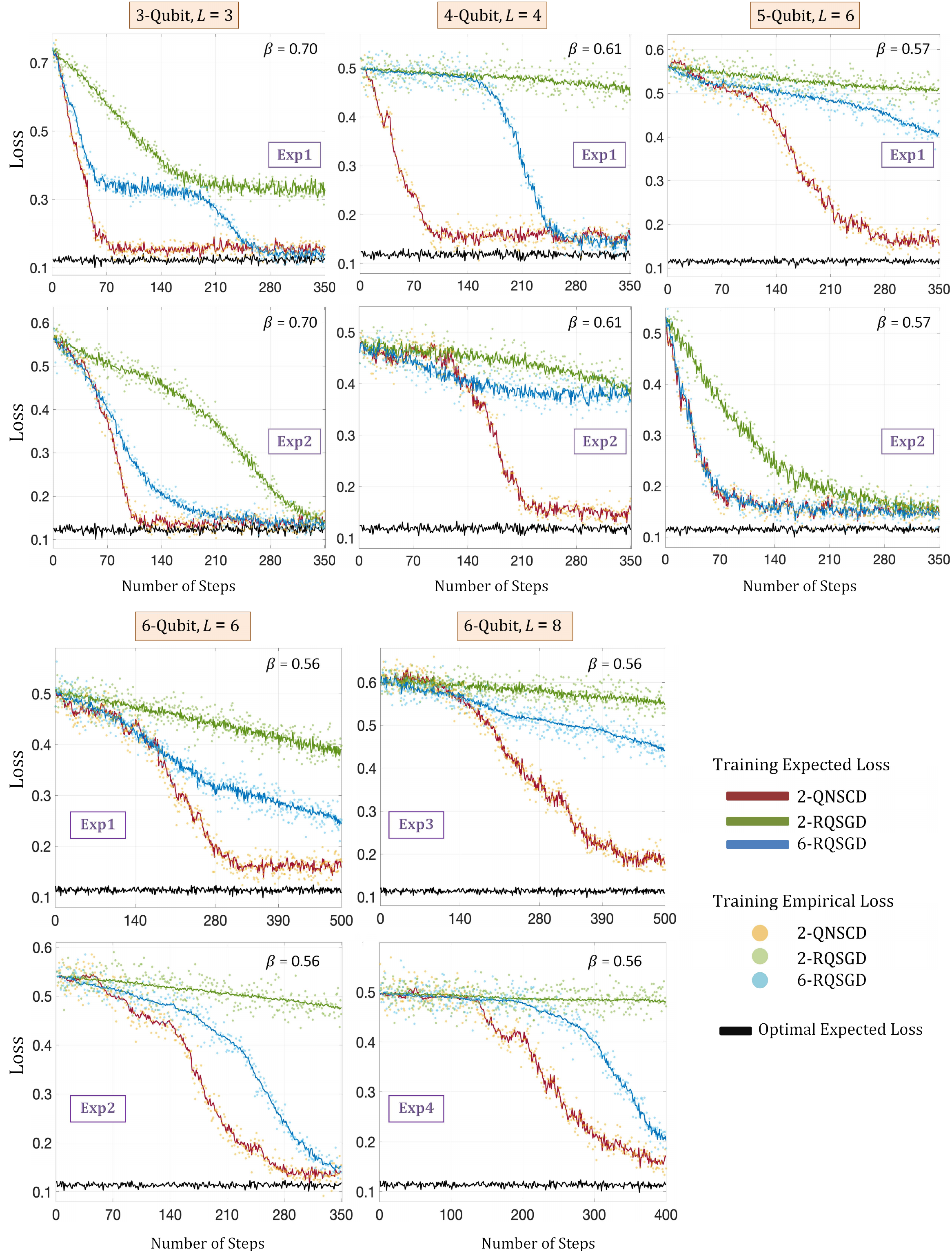}
        \caption{Performance of 2-QNSCD for 3, 4, 5, and 6 qubits. The results demonstrate that 2-QNSCD achieves faster convergence and demonstrates the ability to avoid saddle points and/or local minima.}
\label{fig:plot2QNSCDvs2QRSGD}
\end{figure}
Figure  \ref{fig:plot2QNSCDvs2QRSGD} illustrates the performance comparison between 2-QNSCD, 2-RQSGD, and 6-RQSGD. Notably, 2-QNSCD converges nearly to the optimal loss faster than 2-RQSGD and 6-RQSGD. For 3Q (3-Qubit) and 4Q (4-Qubit) cases, Exp-1 and Exp-2 correspond to two different initial points. In 3Q Exp-1, 2-RQSGD probably gets trapped inside a local minimum or around a saddle point, whereas 6-RQSGD converges to the optimal loss. However, for nearly $100$ steps, 6-RQSGD is stuck in a flat region of the loss function where the gradient is almost zero, whereas 2-QNSCD easily converges to optimal loss in just $70$ steps. In 3Q Exp-2, both RQSGD and 2-QNSCD converge to the optimal loss, but 6-RQSGD and 2-RQSGD require roughly $100 \times 600 = 6\times10^4$ and $14 \times10^4$ 
  more samples than 2-QNSCD, respectively, to converge to the optimal loss.
  
  In 4Q Exp-1, 2-RQSGD and 6-RQSGD are affected by the region, where the gradient is almost zero, for nearly $140$ steps. After this, 6-RQSGD drastically converges to the optimal loss, using roughly 
$9\times10^4$
  additional samples compared to 2-QNSCD, whereas 2-RQSGD fails to cross $60\%$ accuracy even after using 
$2.1\times10^5$
  samples (i.e., $350$ steps). In 4Q Exp-2, 2-QNSCD quickly converges to the optimal loss, whereas 2-RQSGD barely crosses $60\%$ accuracy after $350$ steps. Although 6-RQSGD initially converges faster than 2-QNSCD and 2-RQSGD, it likely gets trapped around saddle points or local minima.
  
For the 5Q case, Exp-1 and Exp-2 correspond to 5Q PQC-1 and 5Q PQC-2, respectively, with different initial points. In 5Q Exp-1, 2-QNSCD shows faster convergence, while in Exp-2, the performance of 6-RQSGD is similar to that of 2-QNSCD. However, it is important to note that 6-RQSGD has higher gate complexity than 2-QNSCD. This is because the gate complexity for E-QFIM estimation is lower than that for gradient estimation. The gate complexity for E-QFIM is, at most, the depth of the PQC, whereas the gradient estimation circuit uses an additional ancilla qubit and a controlled unitary operator, increasing the gate complexity beyond that of the PQC (for more details, see Methods). For the 6Q case, Exp-1 and Exp-2 use 6Q PQC-1, while Exp-3 and Exp-4 use 6Q PQC-2, which has eight layers, all with different initial points. These experiments also demonstrate the advantage of 2-QNSCD over RQSGD optimization methods. 

Furthermore, we generated a validation set of $1000$ samples to test and compare the accuracy of the training methods. The results are summarized in Table \ref{table:accuracy}. We note that the accuracy of 2-QNSCD is significantly better than 2-RQSGD and 6-RQSGD, and it is also close to the optimal value computed using \eqref{eqn:optimalLoss}.

\begin{table}[!htb]
\centering
\renewcommand{\arraystretch}{1.3} 
\setlength{\tabcolsep}{12pt} 
\begin{tabular}{|c|c|c|c|c|c|c|}
\hline
\multirow{11}{*}{\rotatebox[origin=c]{90}{\textbf{Accuracy}}} & \multicolumn{2}{|c|}{\textbf{No. of Qubits}} & \textbf{2-QNSCD} & \textbf{2-RQSGD} & \textbf{6-RQSGD} & \textbf{Optimal} \\ \cline{2-7}
 & \multirow{2}{*}{3Q} & Exp-1 & 84.6 $\pm$ 0.8\% & 65.9 $\pm$ 0.9\% & 85.8 $\pm$ 0.8\% & \multirow{2}{*}{87.3\%} \\ \cline{3-6}
&  & Exp-2 & 85.9 $\pm$ 1.1\% & 85.4 $\pm$ 0.8\% & 85.6 $\pm$ 0.9\% &  \\ \cline{2-7}
 & \multirow{2}{*}{4Q} & Exp-1 & 84.6 $\pm$ 1.2\% & 54.7 $\pm$ 0.7\% & 85.1 $\pm$ 0.7\% & \multirow{2}{*}{87.6\%} \\ \cline{3-6}
  &  & Exp-2 & 84.8 $\pm$ 0.6\% & 60.8 $\pm$ 1.5\% & 61.7 $\pm$ 1.3\% &  \\ \cline{2-7}
   & \multirow{2}{*}{5Q} & Exp-1 & 81.8 $\pm$ 0.8\% & 49.9 $\pm$ 1.3\% & 58.7 $\pm$ 1.7\% & \multirow{2}{*}{87.7\%} \\ \cline{3-6}
  &  & Exp-2 & 84.0 $\pm$ 1.0\% & 83.6 $\pm$ 1.1\% & 84.3 $\pm$ 1.1\% &  \\ \cline{2-7}
   & \multirow{4}{*}{6Q} & Exp-1 & 83.1 $\pm$ 1.0\% & 61.9 $\pm$ 1.7\% & 75.6 $\pm$ 1.1\% & \multirow{4}{*}{88\%} \\ \cline{3-6}
  &   & Exp-2 & 85.6 $\pm$ 1.2\% & 52.8 $\pm$ 1.3\% & 84.3 $\pm$ 1.4\% &  \\ \cline{3-6}
  &  & Exp-3 & 80.7 $\pm$ 1.1\% & 44.7 $\pm$ 1.8\% & 56.1 $\pm$ 1.4\% &  \\ \cline{3-6}
  &  & Exp-4 & 81.4 $\pm$ 1.0\% & 52.4 $\pm$ 2.2\% & 78.5 $\pm$ 1.1\% &  \\ \hline
\end{tabular}
\vspace{5pt}
\caption{Accuracy comparison of 2-QNSCD, 2-RQSGD, and 6-RQSGD for different qubit configurations and experimental setups. We note that the accuracy of 2-QNSCD is significantly better than 2-RQSGD and 6-RQSGD, and it is also close to the optimal value computed using \eqref{eqn:optimalLoss}.}
\label{table:accuracy}
\end{table}

\section{Discussion}
In this work, we develop a novel ensemble-based quantum information metric, E-QFIM, and design a quantum circuit capable of constructing an unbiased estimator of E-QFIM using only a constant number of samples and single-shot measurements. To reduce the quadratic computational cost to a constant computational cost per iteration, we ensure the estimator has a sparse structure. {
Furthermore, we provide an asymptotic error bound of the 2-QNSCD optimization method. 2-QNSCD exploits the underlying structure of state space to accelerate convergence and enhance the overall performance of VQAs, particularly in cases where stochastic gradient descent fails to achieve convergence.} Another significant advantage of 2-QNSCD is its data efficiency, requiring only a constant number of samples per iteration. This is a critical step forward, particularly in scenarios where obtaining a large number of quantum samples is not feasible. For example, in quantum sensing applications, quantum data is often scarce and costly to obtain. 2-QNSCD's data efficiency makes it an excellent option in such cases. These advancements mark a significant step toward developing a single-shot and metric-dependent optimization method for learning from quantum data.

We introduce 2-QNSCD within the framework of learning from quantum data, but its applicability can be extended to other variational algorithms, including VQE and QAOA. In the context of noisy quantum circuits, QNGD typically requires using a quantum information metric tensor on the space of density matrices, such as the Bures metric tensor. As discussed in the Introduction and Supplementary Note 1, estimating the Bures metric tensor has an exponential sample complexity. However, 2-QNSCD offers a solution to these challenges by treating the output of noisy quantum circuits as an ensemble of pure states and utilizing E-QFIM as a more efficient alternative to the Bures metric tensor.

Additionally, our current framework considers a tensor product of single-qubit Pauli rotations, but Algorithm \ref{alg:qfimEst} for constructing the unbiased estimator of E-QFIM can be easily extended to handle tensor products of multi-qubit Pauli rotations. This framework can also be extended to a more general PQC of the form $U(\param)  = \exp\{i\sum_{j} \param_j \bar{\sigma}_j\}$, where $\bar{\sigma}_j$ is a $d$-qubit Pauli strings, by designing the estimators using Suzuki-Trotter product formula \cite{hatano2005finding}. This flexibility enhances the robustness and adaptability of our approach for a wide range of applications.  Furthermore, we define E-QFIM for ensembles of pure states, but it can be extended to ensembles of density matrices. In Theorem \ref{thm:EQFI}, the density matrix $\rho$ (as a convex sum of pure states) is used. However, it can be replaced by the convex sum of density matrices from the ensemble, and a similar technique can be applied to construct an unbiased estimator of E-QFIM for an ensemble of density matrices. \add{This extension has significant applications in various quantum learning tasks involving real quantum data, such as entanglement detection and the quantum phase classification task.} An intriguing avenue of research involves exploring the potential of E-QFIM in quantum sensing or quantum communications applications. For instance, single-shot estimators can be used in designing quantum receivers or learning from entangled sensor networks. 

\section{Methods}
In this section, we present a series of quantum circuits to estimate $\mqfi$ and $\gradL$ independently and separately using only six quantum samples: four samples for $\mqfi$ and two samples for the gradient. 

\noindent \textbf{Estimation of the E-QFIM.}
For constructing an unbiased estimator of $\mqfi$, we start by randomly selecting a pair of coordinates $(\ap,\bq)$ corresponding to the parameters $\param_{(\a,\p)}\eqand \param_{(\b,\q)}$, respectively. Then, consider a $2\times 2$ sub-matrix of $\mqfi$ corresponding to the parameters $\param_{(\a,\p)}$ and $\param_{(\b,\q)}$, which we denote as $\Fab$. Thus, from \eqref{eqn:QNN_EQFI} it follows:
\begin{equation}\label{eq:2x2_EQFI}
    \Fab = \begin{bmatrix}
       0.25- [\Tr\parens{{\Upsilon_{\ap}}\rho}]^2 &  \Cov(\Upsilon_{\ap},\Upsilon_{\bq})_{\rho}\\ 
       \Cov({\Upsilon_{\ap}},\Upsilon_{\bq})_{\rho} & 0.25- [\Tr\parens{\Upsilon_{\bq}\rho}]^2
    \end{bmatrix}.
\end{equation}  
We first construct a $\nparams\times\nparams$ random matrix $\bfZ(\ap,\bq)$ such that 
its submatrix $\Zab$ is an unbiased estimate of $\Fab$, and 
all other entries of $\bfZ$ are zero. 
Subsequently, through some post-processing steps, we derive an unbiased estimate of $\mqfi$, denoted as $\Zbarab$. Now, the construction of $\Zab$ involves designing estimators for $\Tr(\Upsilon_{\ap}\rho)$, $\Tr(\Upsilon_{\bq}\rho)$, $[\Tr(\Upsilon_{\ap}\rho)]^2$, $[\Tr(\Upsilon_{\bq}\rho)]^2$, and $\Tr(\set{\Upsilon_{\ap},\Upsilon_{\bq}}\rho)$.
To understand this process, without loss of generality, we assume $\a \leq \b$. Let $z_{11}, z_{12}, z_{21}, z_{22}$ be the elements of the matrix $\Zab$. In the following, we use four samples $\set{\ketphi{x_i}}_{i\in[4]}$ to construct these elements. We provide a method to estimate the entries of the matrix as follows:

\noindent \textit{Estimator $z_{22}$}: We independently apply the first $(\b\!-\!1)$ layers of the PQC on the first two samples $\ketphi{x_1} \eqand \ketphi{x_2}$. Then, measure the $\q^\textth$ qubit along the basis of $\sigmabq$. Let $(v_1,v_2) \in \set{-1,+1}^2$ be the outcomes corresponding to the first and second measurements, respectively. Then, we compute $z_{22} \deq {0.25\cdot(1-v_1v_2)}$. 

For the off-diagonal terms, $z_{12} = z_{21}$  (due to the symmetry of $\mqfi$),
we provide a sequential measurement strategy to estimate the expectation of the anti-commutator, given below. Moreover, using this sequential strategy, we can construct the estimator $z_{11}$ as a bonus.

\noindent \textit{Estimators $z_{12},z_{11}$}: Using the above sequential strategy, for each of the remaining samples $\ketphi{x_3} \eqand \ketphi{x_4}$, we independently apply the first $(\a\!-\!1)$ layers of the PQC and measure the $\p^\textth$ qubit along the eigenvectors of $\sigmaap$.
The collapsed (post-measured) state is then passed through layer $\a$ up to layer $(\b\!-\!1)$ of the PQC, and the $\q^\textth$ qubit of the output state is measured along the eigenvectors of  $\sigmabq$.   
Let $(u_1,w_1), (u_2,w_2) \in\{-1,+1\}$ be the first and second measurement outcomes for $\ketphi{x_3} \eqand \ketphi{x_4}$, respectively. Then, we compute off-diagonal entries as follows:
    $z_{21} = z_{12} \deq 0.125\cdot(u_1w_1+u_2w_2) - 0.0625\cdot(u_1+u_2)(v_1+v_2)$. Finally, we compute the diagonal term $z_{11} \deq  {0.25\cdot(1-u_1u_2)}$. Now, combining these estimators, we obtain the matrix $\Zab$.
    


 \noindent \textit{Estimator of $\mqfi$}: After constructing the submatrix $\Zab$,  we now aim to construct a non-singular and unbiased estimator of the entire E-QFIM matrix. To achieve this, we introduce a positive parameter $\beta > 0$, which serves as a regularization constant. Toward this, scale the  diagonal terms of $\Zab$ by $\frac{1}{(\nparams-1)}$ and add $\beta$
 to it. This yields a $2 \times 2$ matrix, denoted as $\tilde{\bfZ}_{[\ap,\bq]}$. 
The remaining $\nparams\times\nparams$ matrix $\tilde{\bfZ}$ is completed with zeros as other entries. 
 Finally, we construct the estimator of $\mqfi$ as \begin{equation}\label{eqn:QFIMestimatorZbar}
     \bar{\bfZ}(\ap,\bq) \deq \frac{\nparams(\nparams-1)}{2}\Big(\Tilde{\bfZ} - \frac{2\beta}{\nparams}\II\Big).
 \end{equation}
 Here, $\II$ denotes $\nparams\times \nparams$ identity matrix. The parameter $\beta$ is used to ensure the positive definiteness of $\bar{\bfZ}_{[\ap,\bq]}$, which primarily contributes to the update rule of 2-QNSCD. It prevents the condition number of the random matrix $\bar{\bfZ}_{[\ap,\bq]}$ from scaling with the number of model parameters and thus mitigates any numerical instability. The final matrix  $\bar{\bfZ}{(\ap,\bq)}$ obtained after the classical post-processing of $\Zab$ is an unbiased estimator of $\mqfi$, as stated in Theorem \ref{thm:unbiasedQFIM}. Moreover, it can be easily seen that the gate complexity for constructing the unbiased estimator of $\mqfi$ is, at most, the gate complexity of $U(\param)$. This procedure is summarized in Algorithm \ref{alg:qfimEst}. 

\begin{algorithm}[ht]
\caption{\textit{One-Shot} E-QFIM Estimator}
\label{alg:qfimEst}
\DontPrintSemicolon
\setstretch{1.25}
\LinesNumbered

  \SetKwFunction{FMain}{Main}
  \SetKwFunction{FSum}{E-QFIM\_Estimator}
  \SetKwFunction{FSub}{Sub}
 
  \SetKwProg{Fn}{Function}{:}{}
  \Fn{\FSum{$\set{\ketphi{x_i}}_{i\in[4]},\ap,\bq,\beta$}}{
  \%\% $\{\ketphi{x_i}\}_{i\in[4]}:$ i.i.d. quantum samples\;
  \%\% $(\ap,\bq):$ a random pair of distinct coordinates (assume $\a \leq \b$)\;
  \%\% $\beta > 0:$ regularization constant\;
  \tcc{Consider $\ketphi{x_1} \eqand \ketphi{x_2}$}
  \For{$i=1$ to $2$}{
    Apply the first $(\b\!-\!1)$ layers of $U(\param)$ on $\ketphi{x_i}$\;
    Measure $\q^\textth$ qubit along the eigenvectors of $\sigmabq$\;
    $v_i \in \set{-1,+1} \leftarrow$ measurement outcome
    }

    \tcc{Consider $\ketphi{x_3} \eqand \ketphi{x_4}$}
  \For{$i=3$ to $4$}{
    Apply the first $(\a\!-\!1)$ layers of $U(\param)$ on $\ketphi{x_i}$\;
    Measure $\p^\textth$ qubit along the eigenvectors of $\sigmaap$\;
    $u_{i-2} \in \set{-1,+1} \leftarrow$ measurement outcome\;
    $\ket{\phi_{x_i}^\a} \leftarrow$ post-measured state\;
    Apply layers from $\a$ to $(\b\!-\!1)$ of $U(\param)$ on $\ket{\phi_{x_i}^\a}$\;
    Measure $\q^\textth$ qubit along the eigenvectors of $\sigmabq$\;
    $w_{i-2} \in \set{-1,+1} \leftarrow$ measurement outcome\;
    }
\tcc{Construct E-QFIM estimator}    
$\tilde{\bfZ} \leftarrow 0_{\nparams \times \nparams}$ (all-zero matrix)\;
        $\Tilde{\bfZ}_{[\ap,\bq]} = \begin{bmatrix}
            \frac{(1-u_1u_2)}{4\cdot(\nparams-1)} +\beta & \frac{(u_1w_1+u_2w_2)}{8} - \frac{1}{4}\frac{(u_1+u_2)}{2}\frac{(v_1+v_2)}{2}\\
            \frac{(u_1w_1+u_2w_2)}{8} - \frac{1}{4}\frac{(u_1+u_2)}{2}\frac{(v_1+v_2)}{2} &
            \frac{(1-v_1v_2)}{4\cdot(\nparams-1)}+\beta
        \end{bmatrix}$\;
        \vspace{2pt}
\KwRet $\bar{\bfZ} \deq \frac{\nparams(\nparams-1)}{2}(\tilde{\bfZ}-(\frac{2\beta}{\nparams})\II)$
}
\end{algorithm}
This completes the construction of an unbiased estimator of the E-QFIM. Next, we design an unbiased estimator of the gradient of the loss function.

\noindent \textbf{Estimation of the Gradient.}
Consider a pair of coordinates $(\ap,\bq)$ corresponding to the parameters $\param_{(\a,\p)} \eqand \param_{(\b,\q)}$.
We use two samples along with their true labels $(\ketphi{x_1},y_1)\eqand(\ketphi{x_2},y_2)$ to construct unbiased estimators $g_{\ap} \eqand g_{\bq}$ of the elements of $\gradL$ corresponding to these parameters, respectively. The method is as follows:

\noindent \textit{Estimator $g_{\ap}$}: We apply the first $(\a\!-\!1)$ layers of the PQC and add an ancilla qubit $|+\>$ to the output quantum state $|\phi_{x_1}^\a\>$. This creates the state $\Tilde{\Phi}_{x_1}^\a \deq (W_{(\a:1]} {\Phi}_{x_1} W_{[1:\a)}^{\dagger}) \tensor |+\>\<+|$. Then, we apply the following unitary matrix $\bfV \deq e^{i\pi \Sigmaap/4} \tensor |0\>\<0| + e^{-i\pi \Sigmaap/4} \tensor |1\>\<1|$ on $\Tilde{\Phi}_{x_1}^\a$. Next, we apply the remaining 
layers of the PQC 
on the state $\bfV\Tilde{\Phi}_{x_1}^\a\bfV^{\dagger}$, and  measure the state by the quantum measurement $\tilde{\Lambda} \deq\set{\Lambda_{\yhat}\tensor \ketbra{\ttb}: \yhat \in \calY, \ttb \in \set{0,1}}$. Let $(\yhat,\ttb)$ be the outcome of the measurement. Finally, we compute $g_{\ap} \deq (-1)^{(1+\ttb)}\ell(y_1,\yhat)$. 

\noindent \textit{Estimator $g_{\bq}$}: Following a similar procedure as for $g_{\ap}$, we construct $g_{\bq}$ with the correspondence $\ap \leftrightarrow \bq$ and using the pair $(\ketphi{x_2},y_2)$. Note that, using Lemma \ref{lem:obs_commutator}, one can easily show 
\begin{equation}\label{eqn:1cGrad_est}
    \EE[g_{\ap}|\ketphi{x_1},y_1] = \gradL_{(\a,\p)}(\param,\ketphi{x_1},y_1) \eqand \EE[g_{\bq}|\ketphi{x_2},y_2] = \gradL_{(\b,\q)}(\param,\ketphi{x_2},y_2),
\end{equation}
where the expectation is over the measurement outcome and the input quantum state.

\noindent \textit{Estimator of $\gradL$}: After constructing $g_{\ap} \eqand g_{\bq}$, we construct the estimator of $\gradL$ as
\begin{equation}\label{eqn:gradEst}
     \bfg{(\ap,\bq)} \deq \left(\frac{c}{2}\right)(g_{\ap} \bfe_{\ap} + g_{\bq} \bfe_{\bq}), 
\end{equation}
where $\bfe_{\ap}$ and $\bfe_{\bq}$ represent unit vectors corresponding to $\ap$ and $\bq$, respectively. 
This gradient estimation procedure is summarized in Algorithm \ref{alg:gradientEst}. 

\begin{algorithm}[ht]
\caption{\textit{One-Shot} Gradient Estimator}
\label{alg:gradientEst}
\DontPrintSemicolon
\setstretch{1.25}
\LinesNumbered

  \SetKwFunction{FSum}{Gradient\_Estimator}
  
  \SetKwProg{Fn}{Function}{:}{}
  \Fn{\FSum{$\{(\ketphi{x_i},y_i)\}_{i\in[2]},\ap,\bq$}}{
  \%\% $\{(\ketphi{x_i},y_i)\}_{i\in[2]}:$ i.i.d. quantum samples with true labels\;
  \%\% $(\ap,\bq):$ a random pair of distinct coordinates (assume $\a \leq \b$)\;
  Let $\mathsf{l}_1 \leftarrow \a \eqand \mathsf{l}_2 \leftarrow \b$\; 
  \tcc{Consider $\ketphi{x_1}\eqand \ketphi{x_2}$}
    \For{$i=1$ to $2$}{
    Apply the first $(\mathsf{l}_i\!-\!1)$ layers of $U(\param)$ on $\ketphi{x_i}$\;
    Add an ancilla qubit $|+\>$ to output of layer $(\mathsf{l}_i\!-\!1)$\;
    $({\Phi}_{x_i}^{\mathsf{l}_i} \tensor |+\>\<+|)\leftarrow$ output state\;
    Apply $\bfV \deq e^{i\pi \Sigmaap/4} \tensor |0\>\<0| + e^{-i\pi \Sigmaap/4} \tensor |1\>\<1|$ on $({\Phi}_{x_i}^{\mathsf{l}_i} \tensor |+\>\<+|)$\;
    Apply layers from $\mathsf{l}_i$  to $L$ on $\bfV ({\Phi}_{x_i}^{\mathsf{l}_i} \tensor |+\>\<+|)\bfV^\dagger$\;  
    Measure the resulting state with $\set{\Lambda_{\yhat} \tensor |\ttb\>\<\ttb|: \yhat\in \calY, \ttb\in \set{0,1}}$\;
    $(\yhat_i,\ttb_i) \leftarrow$ measurement outcomes
}
$g_{\ap} \leftarrow (-1)^{(1+\ttb_1)}\ell(\yhat_1,y_1)$\;
$g_{\bq} \leftarrow (-1)^{(1+\ttb_2)}\ell(\yhat_2,y_2)$\;
\KwRet $\bfg{(\ap,\bq)} \deq \left(\frac{c}{2}\right)(g_{\ap} \bfe_{\ap} + g_{\bq} \bfe_{\bq})$
}
\end{algorithm}

\noindent \textbf{2-QNSCD Update Rule.}
We construct the following update rule that updates only two randomly selected parameters (indexed by $\ap \eqand \bq$) at each iteration,
\begin{equation}\label{eqn:updaterule2QNSCDfinal}
  \param^{(\ttt+1)} = \param^{(\ttt)} - \eta_\ttt \big|\bar{\bfZ}^{\ttt}(\ap,\bq)\big|^{-1} \bfg^\ttt(\ap,\bq),
\end{equation}
where $\beta$ is chosen according to Remark \ref{rem:beta} as stated below. 
\begin{remark}\label{rem:beta}
     It is important to note that, given the structure of $\bar{\bfZ}$, by appropriately choosing $\beta$, the sub-matrix $\big(\tilde{\bfZ}_{[\ap,\bq]}-\big(\frac{2\beta}{\nparams}\big)\II_{2}\big)$ can be made positive definite with probability 1.
\end{remark}
Because of the structure of $\bar{\bfZ}\eqand \bfg$, \eqref{eqn:updaterule2QNSCDfinal} can be re-written as
\begin{equation}\label{eqn:2QNSCDfinalequivalent}[\param^{(\ttt+1)}_{(\a,\p)},\param^{(\ttt+1)}_{(\b,\q)}]^\ttT \leftarrow [\param^{(\ttt)}_{(\a,\p)},\param^{(\ttt)}_{(\b,\q)}]^\ttT - \eta_{\ttt} \bigg|\tilde{\bfZ}^{\ttt}_{[\ap,\bq]} - \bigg(\frac{2\beta}{\nparams}\bigg)\II_2\bigg|^{-1}[g^{\ttt}_{\ap}, g^{\ttt}_{\bq}]^\ttT,
\end{equation}
where $\II_2$ is the $2\times 2$ identity matrix and $\eta_\ttt$ contains the normalizing constant $(\nparams\!-\!1)$.
From the description of the 2-QNSCD algorithm in \eqref{eqn:2QNSCDfinalequivalent}, it is evident that only $\big(\tilde{\bfZ}_{[\ap,\bq]}-\big(\frac{2\beta}{\nparams}\big)\II_{2}\big)$ contributes to the update of parameters. 
Therefore, for a given $\nparams$, choosing an appropriate $\beta$ (Remark \ref{rem:beta}), makes the use of $|\bar{\bfZ}^\ttt|$ equivalent to that of $\bar{\bfZ}^\ttt$ in the update rule \eqref{eqn:updaterule2QNSCDfinal}. However, we keep the former for the sake of mathematical rigor in Algorithm \ref{alg:2-QNSCD}.

\noindent \textbf{Dataset Generation}: In the dataset generation process, first, consider the following three $d$-qubit quantum states:
       \[\ket{\phi_1(\bfu)} \deq 
    \sum_{j=0}^{(2^{d}-2)/2} \frac{\bfu_j}{\|\bfu\|}\ket{\texttt{bin}(2j)}, \
    \ket{\phi_2(\bfu)} \deq 
    \sum_{j=0}^{(2^{d}-2)/2} (-1)^{(j\!\texttt{ mod 
 }\!2+1)}\frac{\bfu_j}{\|\bfu\|}\ket{\texttt{bin}(2j+\mathds{1}_{\set{j\!\texttt{ mod }\!2 = 0}})}, \]
    \[\ket{\phi_3(\bfu)} \deq 
    \sum_{j=0}^{(2^{d}-2)/2} \frac{\bfu_j}{\|\bfu\|}\ket{\texttt{bin}(2j+\mathds{1}_{\set{j\!\texttt{ mod }\!2 =0}})},\]
where the vector $\bfu \in [0,1]^{2^{d-1}}$ and $\texttt{bin}$ is the function that converts a decimal number to its binary representation. To understand the structure of the above quantum states, consider the case where 
$d=3$. In this scenario, the quantum states are given as follows:
\begin{align*}
     \ket{\phi_1(\bfu)} &= (\bfu_0\ket{000} + \bfu_1\ket{010} +\bfu_2\ket{100}+\bfu_3\ket{110})/{\|\bfu\|},\\ 
        \ket{\phi_2(\bfu)} &= (-\bfu_0\ket{001} + \bfu_1\ket{010} -\bfu_2\ket{101} +\bfu_3\ket{110})/{\|\bfu\|},\\
        \ket{\phi_3(\bfu)} &= (\bfu_0\ket{001} + \bfu_1\ket{010} +\bfu_2\ket{101} +\bfu_3\ket{110})/{\|\bfu\|}.
\end{align*}
The quantum states to be classified are: $\ket{\phi_1(\bfu)}$ with label $y=+1$ and $\set{\ket{\phi_2(\bfu)},\ket{\phi_3(\bfu)}}$ with label $y=-1$. For each quantum sample, a state is generated from the set $\set{\ket{\phi_1(\bfu)},\ket{\phi_2(\bfu)},\ket{\phi_3(\bfu)}}$ with equal probability. This implies, $\prob{y=-1} = 2\cdot\prob{y=+1} = 2/3$. Additionally, for each sample, the vector $\bfu$ is selected randomly, independently, and with the uniform distribution on $[0,1]^{2^{d-1}}$. 
For better understanding, Fig. \ref{fig:DATASET} illustrates the dataset for the $2$-qubit case. 
\begin{figure}[!htb]
    \centering
\includegraphics[scale=0.13]{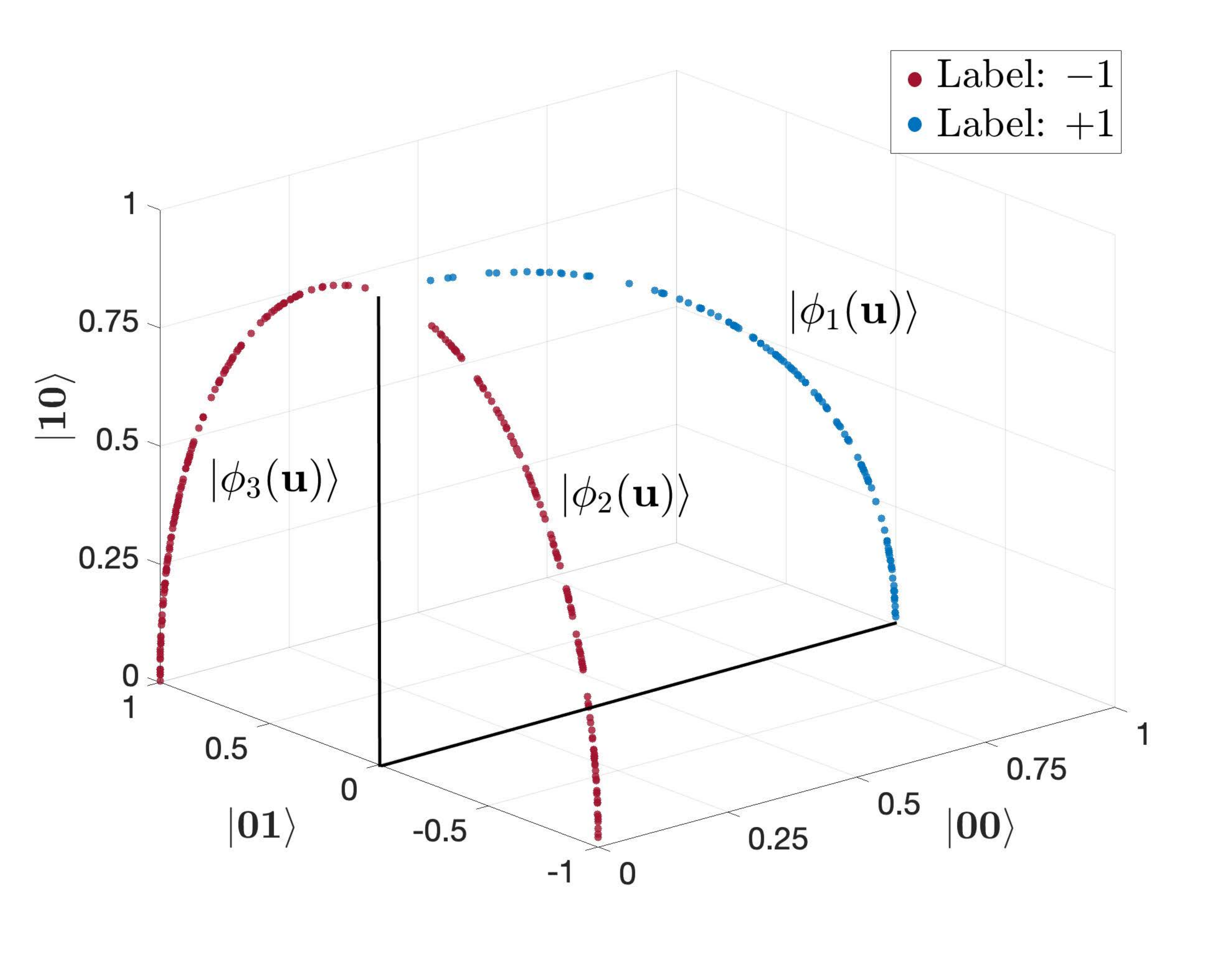}
    \caption{2-qubit synthetic datasets, where the quantum states are given as $\ket{\phi_1(\bfu)} = (\bfu_0\ket{00} + \bfu_1\ket{10})/\|\bfu\|, 
        \ket{\phi_2(\bfu)} = (-\bfu_0\ket{01} + \bfu_1\ket{10})/\|\bfu\|, \eqand 
        \ket{\phi_3(\bfu)} = (\bfu_0\ket{01} + \bfu_1\ket{10})/{\|\bfu\|}.$ The state $\ket{\phi_1(\bfu)}$ has label +1 and is colored in blue. The states $\ket{\phi_2(\bfu)}\eqand \ket{\phi_3(\bfu)}$ have label -1 and are colored in red.}
    \label{fig:DATASET}
\end{figure}

\noindent \textbf{{\large Data availability}} 

\noindent All data generated or analyzed during this study are included in this article.

\noindent \textbf{{\large Code availability}}

\noindent All the experimental results and source code implementations are available at \href{https://github.com/mdaamirQ/2-QNSCD}{https://github.com/mdaamirQ/2-QNSCD}.

\noindent \textbf{{\large Acknowledgments}} 

\noindent This work is supported in part by a gift from Accenture, the 2023 QUAD Fellowship, and NSF grant CCF-2211423.
We thank Touheed Anwar Atif  (with Los Alamos National Labs), and Hassan Naseri, Carl Dukatz, and Kung-Chuan Hsu (with Accenture)  for the initial discussions on this topic. \add{We thank the anonymous reviewers for their insightful comments and suggestions, which significantly enhanced the quality of the manuscript.}

\noindent \textbf{{\large Authors contributions}:} 

\noindent The project was conceived by M.A.S, M.H., and S.S.P. Experiments were conducted by M.A.S, and reviewed by M.H. and S.S.P. The manuscript was written by M.A.S., M.H., and S.S.P. All authors reviewed the final manuscript. M.H. and S.S.P. supervised the work and verified the main results. 

\noindent \textbf{{\large Competing Interests}}

\noindent The authors declare no competing financial or non-financial interests.

\bibliographystyle{ieeetr}
\bibliography{references}

\title{Supplement Material: Quantum Natural Stochastic Pairwise Coordinate Descent }
\maketitle

\renewcommand{\thesection}{\arabic{section}} 
\setcounter{section}{0}

\setcounter{page}{1}  
\renewcommand{\thepage}{A\arabic{page}}

\renewcommand{\theequation}{A.\arabic{equation}}
\setcounter{equation}{0} 

\renewcommand{\thefigure}{A\arabic{figure}} 
\setcounter{figure}{0} 

\renewcommand{\thetheorem}{A.\arabic{theorem}}
\setcounter{theorem}{0}

\renewcommand{\thelemma}{A.\arabic{lemma}}
\setcounter{lemma}{0}
\section{Background and Related Work}
\label{app:background}


Recent demonstrations, such as the parameter shift rule and finite differencing \cite{Mitarai2018,Schuld2018,Harrow2021}, showcase the ability to compute gradients for quantum circuits by estimating the loss function 
using multiple identical copies of a quantum state\footnote{Note that, unlike classical learning, in quantum domain exact computation of loss function is not possible because quantum states $\ketphi{x}$ are unknown, i.e., we do not have complete knowledge of $\ketphi{x}$. If we perform measurement on a quantum state to access information, it will result in a state collapse. Therefore, one can only estimate the loss function using multiple identical copies of a quantum state.}. Therefore, 
the \ac{SGD} method can be considered to train the quantum learning model via an update rule of the form:
\vspace{-5pt}
\begin{align}\label{eqn:GDideal}
\param^{(\ttt+1)} = \param^{(\ttt)} - \eta_\ttt \gradL(\param^{(\ttt)},\ket{\phi_{x_\ttt}}, y_\ttt),
\vspace{-8pt}
\end{align}
where $\eta_\ttt$ is the learning rate at iteration $t$. 
Further, to mitigate the need for multiple identical copies of a quantum state, \cite{heidari2022toward} 
proposed a single-shot gradient estimations method, namely, a randomized quantum stochastic gradient descent (RQSGD) optimization technique.
Nevertheless, as mentioned in the introduction, GD is inefficient at handling saddle points. These points are often surrounded by a plateau of small curvature, causing the gradient to diminish rapidly and significantly slowing down the training process.
To address this, one may consider a second-order optimization method, such as the Newton method \cite{boyd2004convex}, which utilizes the Hessian of the loss function. The update step of the Newton method replaces the gradient by multiplying the gradient with the inverse of the Hessian matrix. However, this does not address saddle points satisfactorily, and instead, saddle points become attractive under Newton dynamics, as argued in \cite[Section 4]{dauphin2014identifying}. 
In 1998, Amari \textit{et al.} \cite{amari1998natural} proposed the concept of natural gradient descent (NGD), where the Hessian is replaced with Fisher information. 
NGD has been shown to effectively address certain saddle point structures, as argued by  \cite{rattray1998natural,inoue2003line}. Furthermore, NGD is invariant under arbitrary smooth and invertible reparameterizations, whereas Newton method is invariant only under affine coordinate transformation \cite{amari1998natural,martens2020new}. 


In a similar spirit, QNGD \cite{stokes2020quantum} has been proposed as an optimization method based on the information geometry of the space of quantum states, which uses a quantum information (Reimannian) metric tensor as 
\begin{align}\label{eqn:QNGD}
\param^{(\ttt+1)} = \param^{(\ttt)} - \eta_\ttt \fubini(\param^{(\ttt)})^{-1} \gradL(\param^{(\ttt)},\ket{\phi_{x_\ttt}}, y_{\ttt}),
\end{align}
where $\fubini(\param^{(t)})$ denotes quantum information metric tensor. In \eqref{eqn:QNGD}, $\fubini^{-1}\gradL$ is the steepest direction in the space of quantum states. Essentially, here, each optimization step computes the steepest descent direction of the per-sample expected loss around the local value of \( \param^{(\ttt)} \) on the space of quantum states and updates $( \param^{(\ttt)} \rightarrow \param^{(\ttt+1)} )$ accordingly. A number of studies have demonstrated the performance gains achieved by QNGD over GD. For instance, \cite{wierichs2020avoiding} shows that QNGD provides an advantage for optimizing {parameterized} quantum systems by taking an optimal parameters path compared to other optimization strategies. For more details, we suggest \cite{stokes2020quantum,yamamoto2019natural,wang2023experimental,FQNGD,meyer2021fisher,kolotouros2023random}. 

In the literature, the quantum information metric tensor is derived using a suitably defined distance measure, denoted as $d(\param,\param^{'})$, in the space of quantum states. The squared infinitesimal distance can be expressed in terms of quantum information metric tensor (Taylor series around $\dparam = 0$) as:
\begin{equation*}
    \text{d}s^2 \deq d^2(\param,\param +\dparam) \approx  \sum_{i,j}\fubini(\param)_{(i,j)} \dparam_{i}\dparam_j,
\end{equation*}
where the first-order term goes to zero as $\dparam =0$ corresponds to a minimum\footnote{A distance measure $d(\param,\param^{'})$ is non-negative and equals zero for identical parameters, i.e., $d(\param, \param) = 0$ is a minimum.}, and the second-order term is the first non-zero contribution of the Taylor series expansion around $\dparam = 0$. In the case of pure states, the distance measure is defined using the fidelity between pure states as 
 \begin{equation}\label{eqn:fid_distance}
     d^2(|\phi(\param)\>,|\phi(\param^{'})\>) = \big(2-2\sqrt{f_\phi(\param,\param^{'})}\big),
 \end{equation}
 where $f_\phi(\param,\param^{'}) \deq |\<\phi(\param)|\phi(\param^{'})\>|^2$, and $\fubini$ reduces to the Fubini-Study metric tensor $\mathsf{F}^\phi$ (see Def.~\ref{def:fubinistudymetric} below), as demonstrated by Petz et al. \cite{petz1996geometries,petz1998information}.
 \vspace{-8pt}
 \begin{definition}
[Fubini-Study Metric Tensor]\label{def:fubinistudymetric}
The Fubini-Study metric tensor, denoted as $\mathsf{F}^\phi$, is a Riemannian metric tensor defined on the complex projective space $\mathbf{C}\mathbf{P}^n$, which is the space of pure states with global phase factored out. For a given parameterized pure state $|\phi(\param)\>$, the entries of $\mathsf{F}^\phi$ are given as: 
\vspace{-8pt}
\begin{equation*}
    \mathsf{F}_{(i,j)}^\phi(\param) = \Re\{\bra{\partial_i \phi(\param)}\ket{\partial_j \phi(\param)\> - \<{\partial_i \phi(\param)}\ketbra{\phi(\param)}\partial_j \phi(\param)}\},
    \vspace{-8pt}
\end{equation*} 
where $i,j\in [c]$ and $c$ is the number of parameters. It measures the effect of changing parameters on the underlying parameterized pure state.
\end{definition} 
 For density matrices, $\fubini$ is generalized to the Bures metric tensor $\mathsf{F}^\rho$ \cite{braunstein1994statistical,bures1969extension,liu2020quantum,liu2014fidelity} (see Def.~\ref{def:Buresmetric} below) using Bures distance \cite{nielsen2010quantum} between density matrices given as
 \begin{equation}\label{eqn:buresdistance}
     d_{\text{B}}^2(\rho(\param),\rho(\param^{'})) = \big(2-2\sqrt{f_\rho(\param,\param^{'})}\big),
 \end{equation} 
 where $f_\rho(\param,\param^{'}) \deq (\Tr\{({\sqrt{\rho(\param)}\rho(\param^{'})\sqrt{\rho(\param)}})^{1/2}\})^2$ is the Uhlmann fidelity \cite{wilde2013quantum} and $\rho(\param)$ is a parameterized density matrix. 

\vspace{-8pt}
\begin{definition}[Bures Metric Tensor]\label{def:Buresmetric}
    The Bures metric tensor, denoted as $\mathsf{F}^\rho$, is a Riemannian metric tensor defined on the space of density matrices. 
    For a given parameterized density matrix $\rho(\param)$, the entries of $\mathsf{F}^\rho$  are given as
    \vspace{-8pt}
\begin{align*}
    \mathsf{F}^\rho_{(i,j)} &\deq \sum_{k} \frac{1}{4}\frac{(\partial_i \lambda_k)(\partial_j \lambda_k)}{\lambda_k}  + \sum_{k} \lambda_k \Re{\<\partial_i \lambda_k|\partial_j \lambda_k\>} - \sum_{k_1,k_2} \frac{2 \lambda_{k_1}\lambda_{k_2}}{\lambda_{k_1}+\lambda_{k_2}}\Re{\<\partial_i \lambda_{k_1}|\lambda_{k_2}\>\<\lambda_{k_2}|\partial_j \lambda_{k_1}\>},
    \vspace{-8pt}
\end{align*}
where $i,j\in [c]$, $c$ is the number of parameters, and $\rho(\param) = \sum_k \lambda_k|\lambda_k\>\<\lambda_k|$ is the spectral decomposition of $\rho(\param)$ with parameterized eigenvalues $\lambda_k (\neq 0)$ and parameterized eigenvectors $|\lambda_k\>$. 
\end{definition}

Unfortunately, none of these information metric tensors can be efficiently used in QNGD methods for learning from quantum data due to the following reasons. The Fubini-Study metric tensor is defined only for pure states, {whereas} the feature set in the model is an ensemble of pure states.
Therefore, 
one may consider the Bures metric tensor,  defined on the density matrix of the feature set. 
However, computing the Bures metric tensor is much more involved and requires
a full tomography of the feature density matrix 
to access the eigenvalue and eigenvectors of the density matrix to compute $\mathsf{F}^\rho$,  requiring an exponential number of identical copies of $\rho$. Precisely, to measure a $d$-qubit quantum state $\rho$ of rank $r$ using quantum state tomography with $\varepsilon$-accuracy in terms of trace distance, requires $O(2^dr^2/\varepsilon^2)$ copies \cite{gross2010quantum,haah2016sample}. Alternatively, many approximation methods have been proposed, including a pure-state approximation of mixed-quantum state \cite{koczorQNGmixed,koczor2021dominant,koczor2021exponential} and truncated $\mathsf{F}^\rho$ \cite{sone2021generalized}. In these works, the authors approximated the Bures metric tensor to a few dominant eigenvectors of $\rho$ with an additional error, which decreases exponentially in the number of copies of $\rho$. For more information, refer to \cite{beckey2022variational,rath2021quantum,vitale2023estimation}. While these works offer various approaches, they share a crucial drawback: requiring an exponentially increasing number of copies of $\rho$ to achieve negligible error in metric estimation.


Furthermore, QNGD requires evaluation of the quantum information metric tensor at each iteration, which can be computationally intensive. This becomes particularly demanding when dealing with a large number of PQC parameters as computing $\fubini$ requires evaluating $O(\nparams^2)$ terms. 
Various techniques have been proposed to mitigate this computational cost, reducing it to linear complexity, such as block-diagonal approximation \cite{stokes2020quantum} and second-order simultaneous perturbation methods \cite{gacon2021simultaneous} for the Fubini-Study metric tensor. However, it is important to note that such approximations may not precisely capture parameter correlations. Therefore, QNGD can not perform well in situations where parameters are highly correlated. In addition, computing $\fubini$ (exact or approximation) involves calculating fidelity between two quantum states, which can be done using quantum circuits such as the SWAP test \cite{buhrman2001quantum}, the Hadamard test \cite{cleve1998quantum}, and the compute-uncompute method \cite{havlivcek2019supervised}. However, these techniques require an exponential number of measurements for the estimation of fidelities associated with the metric tensor. 
Precisely, pure state fidelity is approximated with up to $\varepsilon$ error by performing $\Theta(1/\varepsilon^2)$ measurements  \cite{wierichs2022general}, where $\varepsilon \in (0,1)$.

To address these challenges, we propose a novel ensemble-based quantum information metric tensor, known as E-QFIM \(\mqfi\). This metric relies on a covariance structure and is designed for efficient estimation without bias, the need for quantum state replication, or full-state tomography. 
Additionally, we proposed the 2-QNSCD optimization method, which focuses solely on the underlying geometry of the space of quantum states corresponding to a random pair of parameter coordinates at each iteration. By using just this pair of coordinates, we can construct an unbiased estimator of the E-QFIM. This approach significantly reduces the computational cost associated with evaluating \(O(\nparams^2)\) terms at every iteration.

\section{Convergence Analysis of QNGD}
In this section, we present the convergence analysis of the vanilla QNGD, assuming complete access to a quantum information metric tensor and gradients. Traditionally, exponential convergence of GD has been proven for a certain class of functions, such as strongly convex functions and the functions that satisfy PL inequality (see Definition \ref{def:PLinequality} below) \cite{gower2018convergence,karimi2016linear,polyak1963gradient}. 

\begin{definition}
    [Polyak-Lojasiewicz (PL) Inequality]\label{def:PLinequality}
    A function $\Loss:\RR^c \rightarrow \RR$ is said to satisfy the PL inequality if for all $\param \in \RR^c$, the following inequality holds:
\begin{equation}\label{eqn:PLinequality}
    \frac{1}{2}\norm{\gradL(\param)}^2 \geq \mu (\Loss(\param) -\Loss(\param^*)) \quad \text{for some $\mu >0$},
\end{equation}
where $\param^* \in \RR^c$ is a global minimizer for $\Loss$.
\end{definition}

However, despite the advantages of NGD over GD, there are limited analytical proofs of convergence that have been established for NGD, for example, for strongly convex loss functions in classical neural networks \cite{martens2020new,zhang2019fast}, whereas almost no formal proof exists within the quantum setup. Therefore, this motivates us to characterize a metric-dependent sufficient condition, \textit{quadratic geometric information} (QGI) inequality, that ensures an exponentially faster rate of convergence.

\begin{definition}[Quadratic Geometric Information (QGI) Inequality]\label{eqn:QGI_inequality}
    For a given metric tensor $\fubini$, a function $\Loss:\RR^c \rightarrow \RR$ is said to satisfy the QGI inequality if the following inequality holds for some $\mu >0$ and for all $\param \in \RR^c$:
\begin{equation}
    \frac{1}{2}\gradL^{\ttT}(\param)\mathsf{F}(\param)^{-1} \gradL(\param) \geq \mu (\Loss(\param) -\Loss(\param^*)),
\end{equation}
where $\param^* \in \RR^c$ is a global minimizer for $\Loss(\param)$.
\end{definition}
To further understand the significance of the QGI inequality, consider an example from \cite{amari1998whyng}. \begin{example} Consider the following non-convex loss function using the polar coordinates: 
\begin{equation*}
    \Loss(r,\theta) = \frac{1}{2}[(r\cos(\theta)-1)^2 + r^2\sin^2(\theta)],
\end{equation*}
    where $r\geq 0 \eqand \theta < |\pi|$. The stationary points of $\Loss$ are $(1,0), (0,\pi/2), \eqand (0,-\pi/2)$ with $(1,0)$ being the global minimum. The Riemannian metric tensor for polar coordinates can be written as:
   \begin{equation*}
       \mathsf{F}(r,\theta) = \begin{pmatrix}
        1 &0\\ 0 &r^2
    \end{pmatrix}.
   \end{equation*}
Clearly, the above loss function does not satisfy the PL inequality because for saddle points $(0,\pi/2)$ or $ (0,-\pi/2)$, \eqref{eqn:PLinequality} does not hold for any $\mu > 0$. Conversely, $\Loss$ satisfies the QGI inequality for all 
$(r,\theta) \in [0,\infty) \!\times\! (-\pi,-\pi)$ 
because 
\begin{equation*}
    \lim_{(r,\theta)\rightarrow(0,\pi/2)} \gradL(r,\theta)^\ttT\mathsf{F}^{-1}(r)\gradL(r,\theta) = \lim_{(r,\theta)\rightarrow(0,-\pi/2)} \gradL(r,\theta)^\ttT\mathsf{F}^{-1}(r)\gradL(r,\theta) = 1.
\end{equation*}
This implies if we choose $\mu \in (0,1],$ the QGI inequality holds. 
\end{example} 
\noindent Furthermore, the VQE-inspired example in Fig.~\ref{fig:QNGDexample}, also satisfies only QGI inequality (for more details, see Supplementary Note 3). This implies that QNGD can provide an exponential rate of convergence when GD fails to do so. Below, we provide the exponential convergence theorem for QNGD under the following assumption:

\noindent\textbf{Assumption} ($\mathsf{L}$-smooth with respect to a Quadratic Norm \cite{nesterov2018lectures})
    For a given metric tensor $\fubini$, the function $\Loss(\param)$ is $\mathsf{L}$-smooth  with respect a quadratic norm $\|\param\|_{\text{$\mathsf{F}$}} \deq (\param^\ttT \mathsf{F}  \param )^{1/2}$, i.e., for all $\param, \param_1,\param_2 \in \RR^c$, the following inequalities hold:
 \begin{align}
        &\text{(a)} \ \Loss(\param_2) \leq \Loss(\param_1) + \gradL(\param_1)^\ttT(\param_2-\param_1) + \frac{\mathsf{L}}{2}\norm{\param_2 - \param_1}_{\text{$\mathsf{F}(\param_1)$}},\label{eqn:LsmoothQuadratic1}\\
        &\text{(b)} \ \|\gradL(\param)\|_{\mathsf{F}(\param)^{-1}}^2 \leq {2}\mathsf{L}(\Loss(\param)-\Loss(\param^*)),\label{eqn:LsmoothQuadratic2}
    \end{align}
for some $\mathsf{L} >0$, where $\param^*$ is the global minimum of $\Loss(\param)$.
\begin{theorem}[Convergence of QNGD]\label{thm:QNGDconv}
    Consider a $\mathsf{L}$-smooth (with respect to $\|\cdot\|_\mathsf{F}$) loss function $\Loss(\param)$ that satisfies the QGI inequality $\eqref{eqn:QGI_inequality}$, for some $\mu >0$. Let $\paramstar \in \RR^c$ be the global minimum of $\Loss(\param)$. Then, QNGD with a fixed learning rate $\eta = 1/\mathsf{L}$ and the update rule:
     \begin{equation*}
         \param^{(\ttt+1)} = \param^{(\ttt)} - \eta \mathsf{F}(\param^{(\ttt)})^{-1}\gradL(\param^{(\ttt)}),
     \end{equation*} 
     achieves a global exponential convergence rate, given by 
     \begin{equation}\label{eq:SQNGDconvergence_rate}
     \Loss(\param^{(\ttt)}) - \Loss(\paramstar) \leq \Big(1-\frac{\mu}{\mathsf{L}}\Big)^\ttt (\Loss(\param^{(0)}) - \Loss(\paramstar)).\end{equation} 
\end{theorem}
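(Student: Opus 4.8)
The plan is to follow the standard template for linear convergence under a gradient‑domination (PL‑type) condition, but with the Euclidean geometry replaced throughout by the $\mathsf{F}$‑geometry. First I would invoke the $\mathsf{L}$‑smoothness bound \eqref{eqn:LsmoothQuadratic1} with $\param_1 = \param^{(\ttt)}$ and $\param_2 = \param^{(\ttt+1)} = \param^{(\ttt)} - \eta\,\mathsf{F}(\param^{(\ttt)})^{-1}\gradL(\param^{(\ttt)})$. Since the displacement is $\param^{(\ttt+1)} - \param^{(\ttt)} = -\eta\,\mathsf{F}(\param^{(\ttt)})^{-1}\gradL(\param^{(\ttt)})$ and $\mathsf{F}(\param^{(\ttt)})$ (hence its inverse) is symmetric positive definite, the linear term equals $-\eta\,\gradL(\param^{(\ttt)})^\ttT \mathsf{F}(\param^{(\ttt)})^{-1}\gradL(\param^{(\ttt)})$, while the quadratic penalty $\tfrac{\mathsf{L}}{2}\norm{\param^{(\ttt+1)}-\param^{(\ttt)}}_{\mathsf{F}(\param^{(\ttt)})}^2$ collapses, via $\mathsf{F}^{-1}\mathsf{F}\,\mathsf{F}^{-1} = \mathsf{F}^{-1}$, to $\tfrac{\mathsf{L}\eta^2}{2}\,\gradL(\param^{(\ttt)})^\ttT \mathsf{F}(\param^{(\ttt)})^{-1}\gradL(\param^{(\ttt)})$. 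Writing $G_\ttt \deq \gradL(\param^{(\ttt)})^\ttT \mathsf{F}(\param^{(\ttt)})^{-1}\gradL(\param^{(\ttt)}) \ge 0$, this gives the one‑step descent inequality $\Loss(\param^{(\ttt+1)}) \le \Loss(\param^{(\ttt)}) - \eta\bigl(1 - \tfrac{\mathsf{L}\eta}{2}\bigr)G_\ttt$.

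Second, I would substitute the prescribed step size $\eta = 1/\mathsf{L}$, so that $\eta\bigl(1 - \tfrac{\mathsf{L}\eta}{2}\bigr) = \tfrac{1}{2\mathsf{L}}$, yielding $\Loss(\param^{(\ttt+1)}) \le \Loss(\param^{(\ttt)}) - \tfrac{1}{2\mathsf{L}}\,G_\ttt$. Third, I would apply the QGI inequality of Definition~\ref{eqn:QGI_inequality} at $\param = \param^{(\ttt)}$, which states precisely $\tfrac12 G_\ttt \ge \mu\,(\Loss(\param^{(\ttt)}) - \Loss(\paramstar))$, i.e. $G_\ttt \ge 2\mu\,(\Loss(\param^{(\ttt)}) - \Loss(\paramstar))$. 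Plugging this in and subtracting $\Loss(\paramstar)$ from both sides produces the contraction $\Loss(\param^{(\ttt+1)}) - \Loss(\paramstar) \le \bigl(1 - \tfrac{\mu}{\mathsf{L}}\bigr)\bigl(\Loss(\param^{(\ttt)}) - \Loss(\paramstar)\bigr)$, after which a trivial induction on $\ttt$ (base case $\ttt = 0$) gives \eqref{eq:SQNGDconvergence_rate}. I would also record that $0 \le 1 - \mu/\mathsf{L} < 1$, so the bound is a genuine geometric decay: combining the QGI inequality with the companion smoothness bound \eqref{eqn:LsmoothQuadratic2} gives $2\mu\,(\Loss(\param)-\Loss(\paramstar)) \le G \le 2\mathsf{L}\,(\Loss(\param)-\Loss(\paramstar))$, forcing $\mu \le \mathsf{L}$ whenever $\param$ is not already a minimizer.

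The argument is essentially routine; the only place needing care is the algebra in the first step — keeping both occurrences of the metric in \eqref{eqn:LsmoothQuadratic1} frozen at the same iterate $\param^{(\ttt)}$, so that both the $\mathsf{F}(\param^{(\ttt)})$‑pairing of the descent direction with $\gradL(\param^{(\ttt)})$ and the $\mathsf{F}(\param^{(\ttt)})$‑norm of the descent direction reduce to the single scalar $G_\ttt$. (Here I read \eqref{eqn:LsmoothQuadratic1} with the quadratic, i.e. squared, norm on the right, which is the upper model consistent with \eqref{eqn:LsmoothQuadratic2}.) Once the one‑step bound is in the form $\Loss(\param^{(\ttt+1)}) \le \Loss(\param^{(\ttt)}) - c\,G_\ttt$, the QGI inequality plays exactly the role the PL inequality plays in the classical linear‑convergence proof of gradient descent \cite{karimi2016linear,polyak1963gradient}, and the remaining steps are immediate.
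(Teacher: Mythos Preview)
Your proposal is correct and essentially identical to the paper's own proof: both apply \eqref{eqn:LsmoothQuadratic1} along the QNGD step, reduce the quadratic term via $\mathsf{F}^{-1}\mathsf{F}\,\mathsf{F}^{-1}=\mathsf{F}^{-1}$, specialize $\eta=1/\mathsf{L}$ to get the $\tfrac{1}{2\mathsf{L}}G_\ttt$ descent, invoke the QGI inequality, and recurse; the paper likewise uses \eqref{eqn:LsmoothQuadratic2} together with QGI to confirm $0<\mu/\mathsf{L}<1$.
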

\begin{proof}
    We begin with \eqref{eqn:LsmoothQuadratic1} and apply the update rule.
    Consider the following inequalities:
    \begin{align*}
        \Loss(\param^{(\ttt+1)})  &\overset{}{\leq}   \Loss(\param^{(\ttt)}) 
        -\eta \nabla  \Loss(\param^{(\ttt)}) ^{\ttT} 
 \mathsf{F}(\param^{(\ttt)})^{-1}\gradL(\param^{(\ttt)}) +\eta^2 \frac{\mathsf{L}}{2} \| \mathsf{F}(\param^{(\ttt)})^{-1}\gradL(\param^{(\ttt)})\|_{\mathsf{F}(\param^{(\ttt)})}^2 \\
 &\overset{a}{=} \Loss(\param^{(\ttt)})  -\frac{1}{2\mathsf{L}} \nabla  \Loss(\param^{(\ttt)}) ^{\ttT} 
 \mathsf{F}(\param^{(\ttt)})^{-1}\gradL(\param^{(\ttt)}) \\
&\overset{b}{\leq}  \Loss(\param^{(\ttt)}) -\frac{\mu}{\mathsf{L}}(\Loss(\param^{(\ttt)}) - \Loss(\param^*)),
\end{align*}
where $(a)$ follows by putting $\eta = 1/\mathsf{L}$ and from the definition of the quadratic norm, and $(b)$ follows from QGI inequality \eqref{eqn:QGI_inequality}.
Thus, after re-arranging and subtracting $\Loss(\param^*)$ from both sides, we get,
\begin{align*}
\Loss(\param^{(\ttt+1)}) -  \Loss(\param^*) &\leq \Big(1-\frac{\mu}{\mathsf{L}}\Big) (\Loss(\param^{(\ttt)}) -  \Loss(\param^*)). \end{align*}
Applying this inequality recursively, we get,
\[ \Loss(\param^{(\ttt)}) - \Loss(\param^*) \leq \Big(1-\frac{\mu}{\mathsf{L}}\Big)^\ttt (\Loss(\param^{(0)}) - \Loss(\param^*)).\]
Note that from the QGI inequality and \eqref{eqn:LsmoothQuadratic2}, we conclude that $0 < {\mu}/{\mathsf{L}} < 1$. This completes the proof of Theorem \ref{thm:QNGDconv}.  
\end{proof}
The proof is notably simple and does not require $\Loss$ to be convex or strongly convex. In addition, this is a significant general result for achieving an exponential convergence rate using metric-dependent optimization methods for non-convex problems.
\section{Illustrating QNGD convergence.}
In this section, we illustrate an advantage of QNGD over GD by presenting a one-qubit VQE-inspired example that satisfies the QGI inequality (Def. \ref{eqn:QGI_inequality}), whereas it violates the PL inequality (Def. \ref{def:PLinequality}). 
We illustrate that QNGD has the potential to escape saddle points, while GD tends to get trapped at those points.
\begin{figure}
    \centering
    \includegraphics[height=4.6in,width =\textwidth]{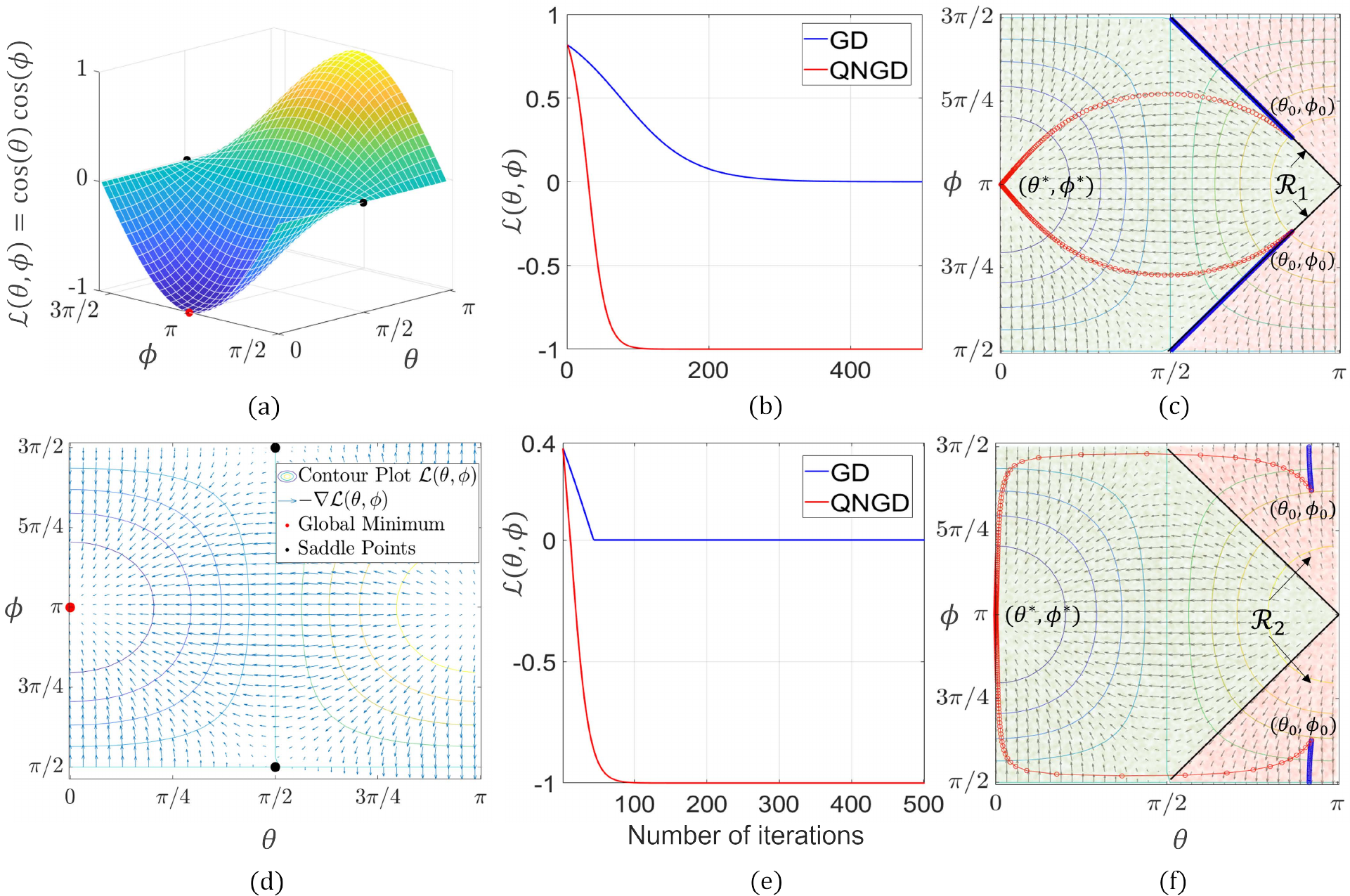}
    \caption{Comparison of GD and QNGD in finding the ground state of the Pauli-X matrix: QNGD escapes saddle points and reaches the global minimum more efficiently than GD. 
    }
    \label{fig:QNGDexample}
\end{figure}

Consider a single-qubit parameterized unitary
\begin{equation*}
    U(\theta,\phi) \deq \begin{pmatrix}
    \cos(\theta/2) & -e^{i\phi}\sin(\theta/2)\\
    \sin(\theta/2) & e^{i\phi}\cos(\theta/2)
\end{pmatrix},
\end{equation*}
where the parameters' domain is defined as $\calR \deq \set{(\theta,\phi):0\leq \theta < \pi \eqand \pi/2 \leq \phi \leq 3\pi/2}$.
The goal is to optimize the parameters of the state $\ket{\psi(\theta,\phi)} \deq U(\theta,\phi)\ket{\psi_0}$, using QNGD and GD optimization methods, to reach the ground state of the Hamiltonian $H = \sigma_X$, starting from the initial state $\ket{\psi_0} = \left(\frac{|0\>+|1\>}{\sqrt{2}}\right)$. The loss function (or energy function) is given as $\Loss(\theta,\phi) \deq \expval{\sigma_X}{\psi(\theta,\phi)} = \cos(\theta)\cos(\phi)$, and is shown in Fig.~1(a). The corresponding contour plot with vector field of steepest descent direction in the Euclidean parameter space, i.e., $-\gradL(\theta, \phi)$ is shown in Fig.~\ref{fig:QNGDexample}(d). The stationary points are characterized by $\gradL(\theta,\phi) = 0 \text{ for } (\theta,\phi) \in \calR$. This gives, a global minimum $(\theta^*,\phi^*) = (0,\pi)$ and saddle points $(\pi/2,\pi/2) \eqand (\pi/2,3\pi/2)$. In Fig~\ref{fig:QNGDexample}(a) and (d), the global minimum is marked as a red dot, and the saddle points are marked in black. Using Def. \ref{def:fubinistudymetric}, the Fubini-Study metric is calculated as: 
\begin{equation*}
    \fubini = \frac{1}{4}\begin{pmatrix}
    {\cos^2(\phi)} &0\\
    0 & 1
\end{pmatrix}.
\end{equation*} 
We compare the convergence of QNGD and GD against the number of iterations and for a fixed learning rate $\eta = 0.01$. Convergence for GD (blue) and QNGD (red) against the number of iterations for initial points in regions $\calR_1 \eqand \calR_2$ 
    are shown in Fig.\ref{fig:QNGDexample}(b) and (e), respectively. The parameters trajectory taken by GD (blue) and QNGD (red) starting from an initial point in regions $\calR_1$ and $\calR_2$ are shown in Fig.\ref{fig:QNGDexample}(c) and (f), respectively.
    In particular, we note the following:

\noindent $\bullet$ For all initial points in the region \( \mathcal{R}_1 \deq \{(\theta_0, \phi_0) \in \mathcal{R} : \theta_0 = \phi_0 \text{ or } \theta_0 + \phi_0 = 2\pi\} \), GD tends to get trapped at saddle points. This occurs because the steepest descent direction in the Euclidean (or $\ell_2$) geometry, given by \(-\gradL\), guides towards these saddle points for all \((\theta, \phi) \in \calR_1\). In contrast, QNGD follows a different trajectory, allowing it to avoid these saddle points. Considering the geometry of the space of quantum states, the direction of steepest descent given by \(-\fubini^{-1}\gradL\) tends to bend away from saddle points and converges to the global minimum, as illustrated in Fig.~1(b) and (c).

\noindent $\bullet$ Let $\calB_1 \deq \set{(\theta,3\pi/2):\theta\in [0,\pi)}$ and $\calB_2 \deq \set{(\theta,\pi/2):\theta\in [0,\pi)}$ denote the top and bottom boundaries of the region $\calR$, respectively. For all initial points in the region $\calR_2 \deq \set{(\theta_0, \phi_0) \in \mathcal{R}: \theta_0 > \phi_0 \text{ or } \theta_0 + \phi_0 > 2\pi}$, GD fails to converge to a global minimum, as shown in Fig.~1(e) and (f). This is because $-\gradL$ directs towards boundaries $\calB_1\eqand\calB_2$, and thus getting trapped indefinitely. Conversely, QNGD, by following the steepest descent direction in the geometry of the space of quantum states, successfully converges to the global minimum for all $(\theta_0, \phi_0) \in \mathcal{R}_2$. This is because the top and bottom boundaries of the region $\mathcal{R}$ correspond to singular points of the Fubini-Study metric, i.e., $\det(\fubini) = 0$ for all $(\theta, \phi) \in (\calB_1 \cup \calB_2)$.
    
    The quantum states corresponding to the singular region $\calB_1$, given by $e^{i\theta/2} \left(\frac{|0\rangle - i |1\rangle}{\sqrt{2}}\right)$, are indistinguishable, meaning the fidelity between any two quantum states in $\calB_1$ is one. As a result, the loss function remains constant for all parameterized quantum states within $\calB_1$, irrespective of the Hamiltonian. Similarly, for the singular region $\calB_2$, the quantum states, given by $e^{-i\theta/2} \left(\frac{|0\rangle + i |1\rangle}{\sqrt{2}}\right)$, are also indistinguishable, and loss function remains constant over $\calB_2$ for any Hamiltonian. This implies that for an arbitrary Hamiltonian that leads to a non-constant loss function over $\mathcal{R}$, there is a possibility of finding a global (or local) minimum inside $\mathcal{R}$. Therefore, the QNGD prevents the parameter trajectory from approaching the boundaries of $\mathcal{R}$. In other words, when the parameter trajectory gets near these singular regions, the volume of the metric contracts along the axis corresponding to the parameter $\theta$, and the QNGD stretches the descent direction along the same axis and guides it toward the global minimum.

 \noindent \textbf{Convergence using GD}: For all $(\theta, \phi) \in \mathcal{R}$, the function $\Loss(\theta, \phi)$ does not satisfy the PL inequality for any $\mu > 0$. This is because the PL inequality implies that every stationary point is a global minimum. However, when the loss function contains saddle points, this condition fails. For these points, we have $\|\gradL(\theta,\phi)\|^2 = 0$, but $(\Loss(\theta,\phi) -\Loss(\theta^*,\phi^*)) \neq 0$. This means that GD fails to guarantee convergence to the global minimum, as illustrated in the example above.

\noindent \textbf{Convergence using QNGD}: The two saddle points are correctly characterized by the singular points of $\fubini$, i.e., $\det(\fubini(\pi/2,\pi/2)) = \det(\fubini(\pi/2,3\pi/2)) = 0$. Thus, for saddle points, we have \begin{equation*}
    \lim_{(\theta,\phi)\rightarrow (\frac{\pi}{2},\frac{\pi}{2})} \gradL(\theta,\phi)\fubini^{-1}(\phi)\gradL(\theta,\phi) = \lim_{(\theta,\phi)\rightarrow (\frac{\pi}{2},\frac{3\pi}{2})} \gradL(\theta,\phi)\fubini^{-1}(\phi)\gradL(\theta,\phi) = 1.
\end{equation*} This implies if we choose $\mu \in (0,1/2]$, the QGI inequality holds for these points. Therefore, for all $(\theta, \phi) \in \mathcal{R}$, $\Loss(\theta, \phi)$ does satisfy the QGI inequality for a sufficiently small $\mu \in (0,1/2]$. This signifies that QNGD can provide an exponential convergence rate to a global minimum even for multi-modal functions with several local-saddle points, particularly encountered in VQAs, whereas GD fails to provide a guaranteed exponential rate of convergence.

\section{Properties of Ensemble Distance}
\noindent \textit{$\bullet\ d_{\text{E}}(\calE(\param),\calE(\param')) = 0$ if and only if the corresponding quantum state within $\calE(\param)\eqand \calE(\param')$ are same up to a constant global phase.} Consider the following inequalities: 
\begin{equation*}
    \big|\sum_{x\in \calX} \Qx(x) \<\phi_x(\param)|\phi_x(\param^{'})\>\big| \overset{a}{\leq} \sum_{x\in \calX} \Qx(x) \big|\<\phi_x(\param)|\phi_x(\param^{'})\>\big| \overset{b}{\leq}\sum_x \Qx(x) = 1,
\end{equation*} where $(a) \eqand (b)$ follows from Jensen's inequality and Cauchy–Schwarz inequality, respectively. Note that the first equality holds if and only if $\<\phi_x(\param)|\phi_x(\param^{'})\> = r_x e^{i\delta}$ for all $x\in \calX$ and second equality holds if and only if $|\phi_x(\param^{'})\>= e^{i\delta_x}  |\phi_x(\param)\>$ for all $x \in \mathcal{X}$, where $r_x \geq 0, \eqand \delta, \delta_x \in [0,2\pi)$. This implies $d_{\text{E}} = 0$ iff $|\phi_x(\param')\> = e^{i\delta}|\phi_x(\param')\>$ for all $x\in \calX$ and $\delta \in [0,2\pi).$

\noindent \textit{$\bullet\ d_{\text{E}}$ is monotonic.} Consider the following ensembles of pure states $\calE_\phi \deq \set{(\Qx(x),\ket{\phi_x})} \eqand \calE_\psi \deq \set{(\Qx(x),\ket{\psi_x})}$. The action of the quantum channel $\calN$ on the quantum ensemble of pure states is described using the convex linear postulate of a quantum channel as follows: $\calN(\calE_\phi) = \sum_{x\in\calX} \Qx(x) \ \calN(\Phi_x).$ This implies that a quantum channel acting on an ensemble of pure states produces a density matrix of the form $\rho^\phi \deq \sum_x\Qx(x)\rho^\phi_x$, where $\rho^\phi_x \deq \calN(\Phi_x).$ Similarly, $\rho^\psi \deq \sum_x\Qx(x)\rho^\psi_x$. Therefore, to show that $d_{\text{E}}$ is monotonic, we need to establish the following: 
\begin{equation*}
    d_{\text{B}}(\rho^\phi,\rho^\psi) \leq d_{\text{E}}(\calE_\phi,\calE_\psi),
\end{equation*}
where $d_{\text{B}}(\rho^\phi,\rho^\psi)$ is the Bures distance \eqref{eqn:buresdistance} between $\rho^\phi$ and $\rho^\psi$. 

Consider the following inequalities:
\begin{align}
d_{\text{B}}^2(\rho^\phi,\rho^\psi) = 2-2(f_{\rho}(&\rho^\phi,\rho^\psi))^{1/2}
\overset{a}{\leq} 2-2\Big(\sum_x\Qx(x)f_\rho{(\rho^\phi_x,\rho^\psi_x)}\Big)^{\tfrac{1}{2}}\overset{b}{\leq} 
2-2\Big(\sum_x\Qx(x)|\<\phi_x|\psi_x\>|^2\Big)^{\tfrac{1}{2}}\nonumber\\
&\overset{c}{\leq} 2-2\Big|\sum_x\Qx(x)\<\phi_x|\psi_x\>\Big| = d_{\text{E}}^2(\calE_\phi,\calE_\psi),\nonumber
\end{align}
where $(a)$ follows from the joint concavity of Uhlmann fidelity \cite[Ch.9]{wilde2013quantum} and the fact that square root is a monotonically increasing function, $(b)$ follows from the fact that Uhlmann fidelity is monotone with respect to the channel $\calN$, and $(c)$ follows from the fact that $|z|^2$ is a convex function, where $z$ is a complex number, and by using Jensen's inequality. This completes the proof that the Bures distance between the channel outputs is less than or equal to the ensemble distance, thereby showing $d_{\text{E}}$ is monotonic.

\section{Proof of Theorem 1.}
We start the derivation of the E-QFIM by writing the squared ensemble distance between two infinitesimally close ensembles $\calE(\param) \eqand \calE(\param+\dparam),$ which is given as 
\begin{equation}
    d_{\text{E}}^2(\calE(\param),\calE(\param+\dparam)) = 2-2\sqrt{f_\calE(\param,\param+\dparam)} 
    .\label{eqn:derEQFIM_0}
\end{equation}
Now, assume $\param_i$ denotes parameter corresponds to ($\a^\textth$ layer, $\p^\textth$ qubit), and $\param_j$ denotes parameter corresponds to ($\b^\textth$ layer, $\q^\textth$ qubit). Then, the Taylor series of $\ket{\phi(\param+\dparam)}$ (up to the first order) is written as:
\begin{equation*}
    \ket{\phi(\param+\dparam)} = \ket{\phi(\param)} + \sum_{i=1}^c \ket{\partial_i \phi(\param)} \dparam_i 
.
\end{equation*}
Therefore, we can write the fidelity between $\calE(\param) \eqand \calE(\param+\dparam)$ as:
\begin{align}
    f_\calE(\param,\param+\dparam) &= \Big|\sum_{x\in\calX} \Qx(x) \Big(\<\phi_x(\param)|\phi_x(\param)\> + \sum_{i=1}^c \dparam_i\<\phi_x(\param)|\partial_i\phi_x(\param)\>  \Big)\Big|^2\nonumber\\
    &= 1 + \sum_{x\in\calX} \Qx(x) \sum_{i=1}^c \dparam_i\Big(\<\phi_x(\param)|\partial_i\phi_x(\param)\> + \<\partial_i \phi_x(\param)|\phi_x(\param)\>\Big)\nonumber\\
    &\hspace{20pt}+
    \sum_{x_1  x_2} \Qx(x_1)\Qx(x_2) \sum_{i j} \dparam_i\dparam_j \<\partial_i \phi_{x_1}(\param)|\phi_{x_1}(\param)\>\<\phi_{x_2}(\param)|\partial_j\phi_{x_2}(\param)\>.\label{eqn:derEQFIM_1}
\end{align}
Using 
$f_\calE(\param+\dparam,\param+\dparam) = 1$, we obtain
\begin{align}
    \Big|\sum_{x\in\calX}\Qx(x) \<\phi_x(\param\!+\!\dparam)&|\phi_x(\param\!+\!\dparam)\> \Big| 
    =  \Big|\sum_{x\in\calX}\Qx(x) \Big[1+ \sum_{i} \dparam_i\big(\<\phi_x(\param)|\partial_i\phi_x(\param)\>\! +\! \<\partial_i \phi_x(\param)|\phi_x(\param)\>\big) \nonumber\\
    &+\sum_{i j} \dparam_i\dparam_j \<\partial_i \phi_{x}(\param)|\partial_j\phi_{x}(\param)\>\Big]\Big| = 1. \label{eqn:derEQFIM_2}
\end{align}
The above equality implies the following relation:
\begin{equation}
    \sum_{i} \dparam_i\big(\<\phi_x(\param)|\partial_i\phi_x(\param)\>\! +\! \<\partial_i \phi_x(\param)|\phi_x(\param)\>\big) = -\sum_{i j} \dparam_i\dparam_j \<\partial_i \phi_{x}(\param)|\partial_j\phi_{x}(\param)\>.\label{eqn:derEQFIM_3}
\end{equation}
\vspace{-10pt}
Finally, considering \eqref{eqn:derEQFIM_0}, \eqref{eqn:derEQFIM_1}, and \eqref{eqn:derEQFIM_3} collectively, we see that 
\begin{align}
    \mqfi_{(i,j)}(\param)&= \frac{1}{2}\frac{\partial^2 }{\partial \dparam_i \partial \dparam_j} d_{\text{E}}^2(\param,\param + \dparam) \bigg\vert_{\dparam = 0}\nonumber\\
    &=\frac{1}{2}\frac{\partial^2 }{\partial \dparam_i \partial \dparam_j} \bigg[2 - 2\Big(1-\sum_{x\in\calX}\Qx(x)\sum_{i j} \dparam_i\dparam_j \<\partial_i \phi_{x}(\param)|\partial_j\phi_{x}(\param)\> \nonumber\\
    & \hspace{20pt}+\sum_{x_1, x_2} \Qx(x_1)\Qx(x_2) \sum_{i j} \dparam_i\dparam_j \<\partial_i \phi_{x_1}(\param)|\phi_{x_1}(\param)\>\<\phi_{x_2}(\param)|\partial_j\phi_{x_2}(\param)\> \Big)^{1/2}\bigg]\bigg\vert_{\dparam = 0}  \nonumber \\
    &\overset{a}{=} \text{Re} \Big\{\sum_{x\in\calX}\Qx(x)\<\partial_i \phi_{x}(\param)|\partial_j\phi_{x}(\param)\> - \sum_{x_1,  x_2} \Qx(x_1)\Qx(x_2) \<\partial_i \phi_{x_1}(\param)|\phi_{x_1}(\param)\>\<\phi_{x_2}(\param)|\partial_j\phi_{x_2}(\param)\>\Big\}\nonumber\\
    &\overset{b}{=}\Re{\Tr(\frac{\partial U^{\dagger}(\param)}{\partial \param_{i}} \frac{\partial U(\param)}{\partial \param_{j}}\rho) - \Tr(\frac{\partial U^{\dagger}(\param)}{\partial \param_{i}} U(\param)\rho)\Tr(U^{\dagger}(\param) \frac{\partial U(\param)}{\partial \param_{j}}\rho)}\nonumber\\
    &=\Re{\Tr(\Upsilon_{i}(\param)\Upsilon_{j}(\param)\rho) - \Tr(\Upsilon_{i}(\param)\rho)\Tr(\Upsilon_{j}(\param)\rho)} = \Cov(\Upsilon_{i}(\param),\Upsilon_{j}(\param))_{\rho},
    \end{align}
    where in $(a)$, the real part appears from the fact that $\dparam_i\dparam_j$ occurs twice in the summation but with the conjugated terms, and $(b)$ follows from the cyclicity of trace. This completes the proof of Theorem 1.
    
\section{Proof of Lemma 1.}
\noindent Consider the parameterized unitary of the $\a^{\textth}$ hidden layer, $U_\a(\param_\a) = \bigotimes_{p=1}^d R_{\sigmaap}(\param_{(\a,\p)})$. Recalling the notation from Introduction section, the derivative of  $W_\a(\param_\a)$ with respect to $\p^\textth$ qubit parameter can be expressed as:
\begin{equation*}
        \partial_{\ap} W_{\a}(\param_\a) \deq \frac{\partial W_\a(\param_\a)}{\partial {\param_{(\a,\p)}}} = -\frac{i}{2} W_\a(\param_\a) (\II^{\tensor [1:\p)} \tensor \sigmaap \tensor \II^{\tensor (\p:d]}).
\end{equation*}
Further, the derivative of $U(\param)$ can be written as:
\begin{align}
        \partial_{\ap} U(\param) \deq \frac{\partial U(\param)}{\partial {\param_{(\a,\p)}}} 
        &= -\frac{i}{2}  W_{[\bfL:\a]} (\II^{\tensor [1:\p)} \tensor \sigmaap \tensor \II^{\tensor (\p:d]}) W_{(\a:1]}, \label{eqn:QNNgrad}
\end{align}
Using \eqref{eqn:QNNgrad}, we compute the elements of the E-QFIM corresponding to the ($\a^{\text{th}}$ layer, $\p^{\text{th}}$ qubit) parameter and ($\b^{\text{th}}$ layer, $\q^{\text{th}}$ qubit) parameter. Consider the second term of $\mqfi$ (see (11)). 
\begin{align}
        &\Tr\set{(\partial_{\ap} U^{\dagger}(\param)) U(\param)\rho}\Tr\set{U^{\dagger}(\param) (\partial_{\bq} U(\param))\rho} \nonumber \\
        &=0.25 \! \cdot \!\Tr\set{W_{[1:\a)}^{\dagger}(\II^{\tensor [1:\p)} \tensor \sigmaap \tensor \II^{\tensor (\p:d]})W_{[\a:\bfL]}^{\dagger} W_{[\bfL:1]} \rho}\Tr\set{W_{[1:\bfL]}^{\dagger} 
    W_{[\bfL:\b]}(\II^{\tensor [1:\q)} \tensor \sigmabq \tensor \II^{\tensor (\q:d]}) W_{(\b:1]}\rho}\nonumber\\
    &=0.25 \! \cdot\! \Tr\set{W_{[1:\a)}^{\dagger}(\II^{\tensor [1:\p)} \tensor \sigmaap \tensor \II^{\tensor (\p:d]}) W_{(\a:1]} \rho}\Tr\set{W_{[1:\b)}^{\dagger} 
   (\II^{\tensor [1:\q)} \tensor \sigmabq \tensor \II^{\tensor (\q:d]}) W_{(\b:1]}\rho}\nonumber\\
   &= \Tr\set{\Upsilon_{\ap}\rho} \Tr\set{\Upsilon_{\bq} \rho}.
\end{align} 
To compute the first term of (11), without loss of generality, assume $\a \leq \b$. Thus, we get
\begin{align}
        &\Re{\Tr\tuple{(\partial_{\ap} U^{\dagger}(\param))(\partial_{\bq} U(\param))\rho}} \nonumber\\
        &= 0.25 \cdot \Re{\Tr\tuple{W_{[1:\a)}^{\dagger} (\II^{\tensor [1:\p)} \tensor \sigmaap \tensor \II^{\tensor (\p:d]}) W_{[\a:\b)}^{\dagger} (\II^{\tensor [1:\q)} \tensor \sigmabq \tensor \II^{\tensor (\q:d]}) W_{(\b:1]} \rho}}\nonumber\\
        &= 0.25 \cdot \Re{\Tr\tuple{W_{[1:\a)}^{\dagger} (\II^{\tensor [1:\p)} \tensor \sigmaap \tensor \II^{\tensor (\p:d]}) W_{(\a:1]} W_{[1:\b)}^{\dagger} (\II^{\tensor [1:\q)} \tensor \sigmabq \tensor \II^{\tensor (\q:d]}) W_{(\b:1]} \rho}}\nonumber\\
        &=\Re{\Tr\tuple{\Upsilon_{\ap}\Upsilon_{\bq}\rho}}= \frac{1}{2}\Tr\tuple{\set{\Upsilon_{\ap},\Upsilon_{\bq}} \rho},
\end{align}
where the last equality follows from the cyclicity of the trace.
Combining these expressions, the E-QFIM takes the form: $\calF_{(\ap,\bq)}(\param) = \Cov(\Upsilon_{\ap},\Upsilon_{\bq})_{\rho},$ where $\Upsilon_{\ap} = W_{[1:\a)}^{\dagger}(\II^{\tensor [1:\p)} \tensor (\sigmaap/2) \tensor \II^{\tensor (\p:d]}) W_{(\a:1]}$ and $\Upsilon_{\bq} = W_{[1:\b)}^{\dagger}(\II^{\tensor [1:\q)} \tensor (\sigmabq/2) \tensor \II^{\tensor (\q:d]}) W_{(\b:1]}.$ 

\section{Proof of Lemma 2.}
\noindent 
Using the cyclicity of trace, we rewrite the expectation of $\set{\bfA,\bfB}$ with respect to the state $\rho$ as follows:
\begin{align}
\vspace{-10pt}
 \Tr(\set{\bfA,\bfB} \rho) &= \Tr(\bfB \rho \bfA + \bfB \bfA \rho) = \Tr(\set{\bfA,\rho}\bfB). \label{eqn:tracecyclicity_anticomm}
\end{align}
Suppose $A$ is a random variable denoting the measurement outcome of $\calM = \set{\bfA_+,\bfA_-}$ on the state $\rho$. The probability of obtaining outcome $1$, i.e., $\prob{A =1} = \Tr(\bfA_+ \rho)$. Similarly, $\prob{A = -1} = \Tr(\bfA_- \rho)$. Then, we decompose the anit-commutator $\set{\bfA,\rho}$ in terms of $\bfA_+  \eqand \bfA_-$ as follows:
\begin{align*}
    \set{\bfA,\rho} = \bfA \rho I + I \rho \bfA &\overset{a}{=}(\bfA_+ - \bfA_-) \rho (\bfA_+ + \bfA_-) + (\bfA_+ + \bfA_-)\rho (\bfA_+ - \bfA_-) \\
    &=2[\bfA_+ \rho  \bfA_+  - \bfA_-\rho \bfA_-] \\
    &= 2[\prob{A=1} \rho(\bfA_+) - \prob{A=-1} \rho(\bfA_-)],
\end{align*}
where $(a)$ follows from the facts that the Hermitian matrix has orthonormal eigenvectors for distinct eigenvalues and $\bfA^2 = \II$. Here, $\rho(\bfA_+) \eqand \rho(\bfA_-)$ denote the post-measurement state when we measure outcomes $+1 \eqand -1$, respectively. Subsequently, let $B$ be the random variable denoting the measurement outcome of the post-measurement state along the eigenvectors of $\bfB$. Then,
\begin{align}
    \frac{1}{2}\Tr(\set{\bfA,\rho}\bfB) &= \prob{A=1} \Tr(\bfB\rho(\bfA_+)) - \prob{A=-1} \Tr(\bfB\rho(\bfA_-)) \nonumber\\
    &= \prob{A=1}\EE[AB|A=1] + \prob{A=-1}\EE[AB|A=-1]=\EE[AB]. \label{eqn:trace_EAB}
\end{align}
Hence, from \eqref{eqn:tracecyclicity_anticomm}  and \eqref{eqn:trace_EAB}, we get $\Tr(\set{\bfA,\bfB}\rho) = 2\cdot \EE[AB]$. This completes the proof of Lemma 2.

\section{Proof of Lemma 3.}
Let $|\phi_{x}^{a}\> \deq W_{(\a:1]}\ket{\phi_{x}}$ denote the input state to the $\a^{\text{th}}$ layer of the PQC. We denote $\Sigmaap \deq (\II^{\tensor [1:\p)} \tensor \sigmaap \tensor \II^{\tensor (\p:d]})$ and $\Sigmabq \deq (\II^{\tensor [1:\q)} \tensor \sigmabq \tensor \II^{\tensor (\q:d]})$. We first compute the expectation of diagonal entries. Observe that $\EE[v_i|\ketphi{x_i}] = \Tr(\Sigmabq \ \Phi^\b_{x_i})$ for $i = {1,2}$. For $z_{22}$, 
\begin{align}
    \EE[z_{22}] = \EE[0.25\cdot(1-v_1v_2)] &\overset{a}{=} 0.25\cdot(1-\EE_{X_1}[\EE[v_1|\ketphi{x_1}]]\EE_{X_2}[\EE[v_2|\ketphi{x_2}]]) \nonumber\\
    &= 0.25\cdot(1-\EE_{X_1}[\Tr\parens{\Sigmabq \ \Phi^\b_{x_1}}]\EE_{X_2}[\Tr\parens{\Sigmabq\ \Phi^\b_{x_2}}])\nonumber \\
    &= 0.25-0.25[\Tr\parens{\Sigmabq \ W_{(\b:1]}\rho W^{\dagger}_{[1:\b)}}]^2 = 0.25-[\Tr(\Upsilon_{\bq}\rho)]^2,\nonumber
\end{align}
where $(a)$ follows because $\ketphi{x_1} \eqand \ketphi{x_2}$ are i.i.d. samples. This gives the desired result for $z_{22}$. 
Next, we note that the structure of the pair $(v_1,v_2)$ is exactly the same as that of $(u_1,u_2)$ with the correspondence $\ap \leftrightarrow \bq$. 
Hence, for $z_{11} = 0.25 \cdot (1-u_1u_2)$, we get $\EE[z_{11}] = 0.25-\Tr(\Upsilon_{\ap}\rho)^2$. 

Finally, we compute the expectation of off-diagonal entries. Recall the definition of Hermitian operators $\Upsilon_{\ap} \eqand \Upsilon_{\bq}$, then consider the following set of equalities:
\begin{align}
  \set{\Upsilon_{\ap},\Upsilon_{\bq}} &= \Upsilon_{\ap} \Upsilon_{\bq} +\Upsilon_{\bq}\Upsilon_{\ap}\nonumber\\
  &=  0.25\cdot(W_{[1:\a)}^{\dagger}\Sigmaap \ W_{(\a:1]} W_{[1:\b)}^{\dagger}\Sigmabq \ W_{(\b:1]} + W_{[1:\b)}^{\dagger}\Sigmabq \ W_{(\b:1]} W_{[1:\a)}^{\dagger}\Sigmaap \ W_{(\a:1]}) \nonumber \\
  &=  0.25\cdot(W_{[1:\a)}^{\dagger}\Sigmaap \ W_{[\a:\b)}^{\dagger}\Sigmabq \ W_{(\b:1]} + W_{[1:\b)}^{\dagger}\Sigmabq \ W_{(\b:\a]} \Sigmaap \ W_{(\a:1]}). \nonumber
\end{align}
Using the above expansion of $\set{\Upsilon_{\ap},\Upsilon_{\bq}}$ and cyclicity of trace, we get
\begin{align}
    \expval{\set{\Upsilon_{\ap},\Upsilon_{\bq}}}{\phi_x} &= \Tr(\set{\Upsilon_{\ap},\Upsilon_{\bq}} \Phi_x) \nonumber\\ 
    &= 0.25\cdot\Tr\parens{\Sigmaap \ W_{[\a:\b)}^{\dagger}\Sigmabq \ W_{(\b:1]} \Phi_x \ W_{[1:\a)}^{\dagger}} + \Tr\parens{\Sigmabq \ W_{(\b:\a]} \Sigmaap \ W_{(\a:1]} \Phi_x \ W_{[1:\b)}^{\dagger}}\nonumber\\
    &\overset{a}{=}0.25\cdot \Tr\parens{\Sigmaap \ W_{[\a:\b)}^{\dagger} \Sigmabq \ W_{(\b:\a]} \Phi^\a_x} + \Tr\parens{\Sigmabq \ W_{(\b:\a]} \ \Sigmaap \ \Phi^\a_x \ W_{[\a:\b)}^{\dagger}}\nonumber\\
    &= 0.25\cdot\Tr\parens{\Sigmabq \ W_{(\b:\a]} \Phi^\a_x \ \Sigmaap \ W_{[\a:\b)}^{\dagger}} + \Tr\parens{\Sigmabq \ W_{(\b:\a]} \ \Sigmaap \ \Phi^\a_x \ W_{[\a:\b)}^{\dagger}}\nonumber\\
    &= 0.25\cdot\Tr\parens{\Sigmabq \ W_{(\b:\a]} \set{\Sigmaap,\Phi^\a_x}  W_{[\a:\b)}^{\dagger}}, \label{eqn:QNNanticomm}
\end{align}
where $(a)$ follows from the definition $\Phi_x^a$. 
Using the following facts: (i) $\Sigmaap$ is Hermitian, (ii) $(\Sigmaap)^2= \II$, (iii) $\Tr\tuple{\set{\bfA,\bfB}\Phi} = \Tr\tuple{\bfB\set{\bfA,\Phi}}$, and Lemma 2,
we can write \begin{equation}\label{eqn:u1w1}
  \EE[u_iw_i|\ketphi{x_{i+2}}] = 0.5 \cdot \Tr\parens{W_{[\a:\b)}^{\dagger} \Sigmabq \ W_{(\b:\a]} \set{\Sigmaap,\Phi^\bfa_{x_{i+2}}}  },
\end{equation} for $i = 1,2$, where we used the Heisenberg picture of quantum measurements:
$\Tr(\Lambda U \rho U^{\dagger})=\Tr(U^{\dagger}\Lambda U \rho)$, where $\Lambda$ and $U$ represent a measurement and unitary operator, respectively \cite{sakurai2020modern}.
Now, we are equipped to compute the expectation of the off-diagonal entries. For $z_{12} = z_{21}$,
\begin{align*}
    \EE[z_{12}] &= 0.125\cdot \EE[u_1w_1 + u_2w_2] - 0.0625\cdot \EE\left[(u_1+u_2)(v_1+v_2)\right] \nonumber\\
    &\overset{a}{=} 0.125 \cdot\big(\EE_{X_3}[\EE[u_1w_1|\ketphi{x_3}]] + \EE_{X_4}[\EE[u_2w_2|\ketphi{x_4}]]\big)- \Tr(\Upsilon_{\ap}\rho)\Tr(\Upsilon_{\bq}\rho) \nonumber\\
    &\overset{b}{=} 0.0625 \cdot\big(\EE_{X_3}[\Tr\parens{\Sigmabq \ W_{(\b:\a]} \set{\Sigmaap,\Phi^\a_{x_3}}  W_{[\a:\b)}^{\dagger}}] + \EE_{X_4}[\Tr\parens{\Sigmabq \ W_{(\b:\a]} \set{\Sigmaap,\Phi^\a_{x_4}}  W_{[\a:\b)}^{\dagger}}]\big)\nonumber\\
    &\hspace{3.43in} - \Tr(\Upsilon_{\ap}\rho)\Tr(\Upsilon_{\bq}\rho)\nonumber\\
    &\overset{c}{=} 0.125 \cdot\Tr\parens{\Sigmabq \ W_{(\b:\a]} \set{\Sigmaap,\rho^\a}  W_{[\a:\b)}^{\dagger}}- \Tr(\Upsilon_{\ap}\rho)\Tr(\Upsilon_{\bq}\rho) \nonumber\\
    &\overset{d}{=} 0.5 \cdot  \Tr(\set{\Upsilon_{\ap},\Upsilon_{\bq}}\rho) - \Tr(\Upsilon_{\ap}\rho)\Tr(\Upsilon_{\bq}\rho), \nonumber
\end{align*}
where $(a)$ follows because $\ketphi{x_1} \ketphi{x_2},\ketphi{x_3}, \eqand \ketphi{x_4}$ are i.i.d. samples and the analysis of $z_{11},z_{22}$, $(b)$ follows from \eqref{eqn:u1w1}, $(c)$ follows by defining $\rho^\a \deq W_{(\a:1]}\rho W^{\dagger}_{[1:\a)}$, and $(d)$ follows from \eqref{eqn:QNNanticomm}.
This completes the proof of Lemma 3.

\section{Proof of Theorem 2.}
Without loss of generality, assume $\a\leq \b$.
Consider the following equalities:
\begin{align*}
    \EE[\bar{\bfZ}{(\ap,\bq)}] = \frac{\nparams(\nparams-1)}{2}\EE\left[\tilde{\bfZ}(\ap,\bq) - \frac{2\beta}{\nparams}\II\right] 
    \overset{a}{=}\frac{1}{2}\sum_{\ap \neq \bq} \EE\left[\tilde{\bfZ}{(\ap,\bq)}\right] - (\nparams-1)\beta\II \quad \quad
\end{align*}
\begin{align}
&\overset{b}{=} \frac{1}{2}\sum_{\ap \neq \bq} \left[
\begin{array}{cccccccccccc}
0_{\ddots} & & &  &&&&&\\
& 0 & & & &&&& \\
&  & \frac{1}{(\nparams-1)}\mqfi_{(\ap,\ap)}+\beta & 0  & \ldots & 0 & \mqfi_{(\ap,\bq)} && \\
&& 0 & 0 & \ldots & 0 & 0 &&\\
&& \vdots & \vdots& \ddots &\vdots&  \vdots &&\\
&& 0 & 0 & \ldots & 0 & 0 &&\\
&& \mqfi_{(\bq,\ap)} & 0  &  \ldots  & 0  & \frac{1}{(\nparams-1)}\mqfi_{(\bq,\bq)}+\beta &&\\
&&&&&&& 0_{\ddots} & \\
&&&&&&&& 0
\end{array}
\right] - (\nparams-1)\beta \II \nonumber\\
 &\overset{c}{=} \frac{1}{2}(2\mqfi + 2(\nparams-1)\beta\II) - (\nparams-1)\beta \II = \mqfi ,\nonumber
\end{align}
where $(a)$ follows by taking the expectation over the random pair of coordinates $(\ap,\bq)$, $(b)$ follows from Lemma 3, and $(c)$ follows by expanding the summation and noting the following arguments. The non-zero off-diagonal terms arise in the summation only twice, whereas the non-zero diagonal terms appear in the summation $2(\nparams\!-\!1)$ times. 
This completes the proof of Theorem 2.
\section{Derivation of the Gradient of Per-Sample Expected Loss.}
\noindent Recall, from \eqref{eqn:QNNgrad}, the derivative of $U(\param)$ with respect to the ($\a^\textth$ layer, $\p^\textth$ qubit) parameter $\param_{(\a,\p)}$ is given as:
\begin{align*}
        \partial_{\ap} U(\param) \deq \frac{\partial U(\param)}{\partial {\param_{(\a,\p)}}} 
        &= -\frac{i}{2} W_{[L:\a]} \Sigmaap \ W_{(\a:1]}. 
\end{align*}
Using the above derivative, compute the following expression:
\begin{align}
    (\partial_{\ap} U(\param)) \Phi_x U^{\dagger}(\param) + U(\param) \Phi_x  (\partial_{\ap}U^{\dagger}(\param)) &= -\frac{i}{2} W_{[L:\a]}\Sigmaap \ \Phi^{\a}_x W^\dagger_{[\a:L]}+ \frac{i}{2}W_{[L:\a]}\Phi^{\a}_x \Sigmaap  \ W^\dagger_{[\a:L]} \nonumber \\
    &= -\frac{i}{2}W_{[L:\a]} [\Sigmaap, \Phi^\a_x]W^\dagger_{[\a:L]}. \label{eqn:gradcomm}
\end{align}

Thus, using \eqref{eqn:gradcomm}, the gradient of per-sample expected loss can be written as:
\begin{align}
    \frac{\partial \Loss(\param,\state,y)}{\partial \param_{(\a,\p))}} &= \sum_{\yhat \in \calY} \ell(y,\yhat) \Tr\set{\Lambda_{\yhat}((\partial_{\ap} U(\param)) \Phi U^{\dagger}(\param) + U(\param) \Phi  (\partial_{\ap}U^{\dagger}(\param)))} \nonumber\\
    &= \sum_{\yhat \in \calY} -\frac{i}{2} \ell(y,\yhat) \Tr\set{\Lambda_{\yhat}W_{[L:\a]} [\Sigmaap, \Phi^\a_x]W^\dagger_{[\a:L]}}.
\end{align}
This completes the derivation. 

\section{Proof of Lemma 4.}
We first prove the following lemma.

\begin{lemma}\label{lem:param_rule}
Consider a Hermitian operator $\bfA$ on Hilbert space $\calH$, such that $\bfA^2 = \II$. Then, for any operator $\rho$ on $\calH$ the following holds:
\[[\bfA,\rho] = i(e^{-i\pi \bfA/4} \ \rho \ e^{i\pi \bfA/4} - e^{i\pi \bfA/4} \ \rho \ e^{-i\pi \bfA/4}).\]
\end{lemma}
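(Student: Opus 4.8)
The plan is to exploit the fact that $\bfA^2 = \II$ forces $\bfA$ to have spectrum contained in $\{+1,-1\}$, so that the spectral projectors $\bfA_{\pm} \deq \tfrac{1}{2}(\II \pm \bfA)$ give a clean resolution of the identity: $\II = \bfA_+ + \bfA_-$, $\bfA = \bfA_+ - \bfA_-$, $\bfA_+\bfA_- = \bfA_-\bfA_+ = 0$, and $\bfA_\pm^2 = \bfA_\pm$. First I would use functional calculus to write $e^{\pm i\pi \bfA/4}$ in closed form: since $f(\bfA)$ acts as $f(1)$ on the range of $\bfA_+$ and $f(-1)$ on the range of $\bfA_-$, we get
\begin{equation*}
    e^{\pm i\pi \bfA/4} = e^{\pm i\pi/4}\bfA_+ + e^{\mp i\pi/4}\bfA_- = \tfrac{1}{\sqrt{2}}\big(\II \pm i\bfA\big),
\end{equation*}
where the last step uses $e^{\pm i\pi/4} = \tfrac{1}{\sqrt 2}(1\pm i)$ together with $\bfA_+ + \bfA_- = \II$ and $\bfA_+ - \bfA_- = \bfA$.

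Next I would substitute this closed form into the right-hand side. Using $e^{-i\pi\bfA/4} = \tfrac{1}{\sqrt2}(\II - i\bfA)$ and $e^{i\pi\bfA/4} = \tfrac{1}{\sqrt2}(\II + i\bfA)$, expand the two conjugation terms:
\begin{align*}
    e^{-i\pi\bfA/4}\,\rho\,e^{i\pi\bfA/4} &= \tfrac{1}{2}(\II - i\bfA)\rho(\II + i\bfA) = \tfrac{1}{2}\big(\rho + i\rho\bfA - i\bfA\rho + \bfA\rho\bfA\big),\\
    e^{i\pi\bfA/4}\,\rho\,e^{-i\pi\bfA/4} &= \tfrac{1}{2}(\II + i\bfA)\rho(\II - i\bfA) = \tfrac{1}{2}\big(\rho - i\rho\bfA + i\bfA\rho + \bfA\rho\bfA\big).
\end{align*}
Subtracting, the $\rho$ and $\bfA\rho\bfA$ terms cancel and we are left with $\tfrac{1}{2}(2i\rho\bfA - 2i\bfA\rho) = i(\rho\bfA - \bfA\rho) = -i[\bfA,\rho]$. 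Multiplying by the prefactor $i$ from the statement gives $i\cdot(-i[\bfA,\rho]) = [\bfA,\rho]$, which is exactly the claimed identity.

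There is essentially no serious obstacle here; the only thing to be careful about is the sign bookkeeping in the expansion and making sure the $\pm i$ in the exponential closed form is assigned correctly (i.e., that $e^{-i\pi\bfA/4}$ picks up $-i\bfA$, not $+i\bfA$). One could alternatively avoid the closed form entirely and argue directly from the power series $e^{\pm i\pi\bfA/4} = \cos(\pi/4)\II \pm i\sin(\pi/4)\bfA$, which follows from $\bfA^{2k} = \II$, $\bfA^{2k+1} = \bfA$ — this is really the same computation. I would present the spectral-projector version since it makes the structural reason ($\bfA$ has only two eigenvalues) transparent, and it is the form most useful for the parameter-shift-rule application in Lemma 4 that this lemma feeds into.
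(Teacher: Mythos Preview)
Your proof is correct and essentially identical to the paper's: both reduce $e^{\pm i\pi\bfA/4}$ to $\tfrac{1}{\sqrt{2}}(\II \pm i\bfA)$ and then expand the right-hand side directly to recover $[\bfA,\rho]$. The only cosmetic difference is that the paper obtains the closed form via the Taylor/Euler identity $e^{i\beta\bfA} = \cos(\beta)\II + i\sin(\beta)\bfA$ (which you yourself mention as the alternative), whereas you frame it through the spectral projectors $\bfA_\pm$.
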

\begin{proof}
    The proof follows from \cite{Mitarai2018}. However, for convenience, we provide a succinct proof below.
    With the definition of $\bfA$, note that $\bfA^2 = \II$. Therefore, using Taylor expansion, we have, 
$$e^{i\beta \bfA} = \cos(\beta)\II 
+ i \sin(\beta) \bfA \ \text{ for all } \beta \in [0,2\pi).$$
Next, we simplify the following expression:
\begin{align*}
    i(e^{-i\pi \bfA/4} \ \rho \ e^{i\pi \bfA/4} - e^{i\pi \bfA/4} \ \rho \ e^{-i\pi \bfA/4}) &= \frac{i}{2} \big((\II-i\bfA)\rho(\II+i\bfA) - (\II+i\bfA)\rho(\II-i\bfA)\big) = [\bfA,\rho].
\end{align*}
This completes the proof of the above Lemma \ref{lem:param_rule}.
\end{proof}

With the intention of employing the above lemma and considering the definition of $\bfV$, we have 
\begin{align*}\bfV (\rho \tensor \ketbra{+})\bfV^{\dagger} &= \frac{1}{2}\big(e^{i\pi \bfA/4} \ \rho \ e^{-i\pi \bfA/4} \tensor |0\>\<0| + 
e^{i\pi \bfA/4} \ \rho \ e^{i\pi \bfA/4}  \tensor |0\>\<1|  \\
& \hspace{1in} +
e^{-i\pi \bfA/4} \ \rho \ e^{-i\pi \bfA/4}  \tensor |1\>\<0|+
e^{-i\pi \bfA/4} \ \rho \ e^{i\pi \bfA/4}  \tensor |1\>\<1|\big).
\end{align*}
Next, using the above expression, we compute the following expectation:
\begin{align}
   2i \Tr\{O \bfV \Tilde{\rho} \bfV^\dagger\} &= i\big(
    \Tr\{\bfB e^{i\pi \bfA/4} \ \rho \ e^{-i\pi \bfA/4} 
    - \bfB e^{-i\pi \bfA/4} \ \rho \ e^{i\pi \bfA/4}\} 
    \big)  \overset{a}{=} \Tr\{\bfB [\rho,\bfA]\} \overset{b}{=} \Tr\{ [\bfA,\bfB]\rho\}, \nonumber
\end{align}
where $(a)$ follows from Lemma \ref{lem:param_rule} and $(b)$ follows from the cyclicity of trace. This completes the proof of Lemma 4.


\section{Proof of Lemma 5.}
Without loss of generality, assume $\a<\b$.
Next, consider the following inequalities:
\begin{align}
    \EE[\bfg{(\ap,\bq)}] &= \sum_{\ap \neq \bq} \frac{1}{\nparams(\nparams-1)} \EE\left[\frac{\nparams}{2}(g_{\ap} \bfe_{\ap} + g_{\bq} \bfe_{\bq})\right]\nonumber\\
    &=\sum_{\ap \neq \bq} \frac{1}{2(\nparams-1)} \big(\EE_{\Qxy}\big[\EE[g_{\ap}|\ketphi{x_1},y_1]\big] \bfe_{\ap} + \EE_{\Qxy}\big[\EE[g_{\bq}|\ketphi{x_2},y_2]\big] \bfe_{\bq}\big) \nonumber\\
    &\overset{a}{=}\sum_{\ap \neq \bq} \frac{1}{2(\nparams-1)} \big(\EE_{\Qxy}[\gradL_{\ap}(\ketphi{x_1},y_1)] \bfe_{\ap} + \EE_{\Qxy}[\gradL_{\bq}(\ketphi{x_2},y_2)] \bfe_{\bq}\big) \nonumber\\
     &=\sum_{\ap \neq \bq} \frac{1}{2(\nparams-1)} \big(\gradL_{\ap} \bfe_{\ap} + \gradL_{\bq} \bfe_{\bq}\big) = \gradL, \nonumber
\end{align}
where $(a)$ follows from the construction of $g_{\ap} \eqand g_{\bq}$.
\section{Proof of Theorem 3.}
We begin by considering Assumption 1 and using the update rule (21). Consider the following inequalities:
    \begin{align*}
        \Loss(\param^{(\ttt+1)}) &\overset{}{\leq} \Loss(\param^{(\ttt)}) - \eta \gradL(\param^{(\ttt)})^{\ttT} \big|\Zbartij\big|^{-1}\bfg^\ttt(i,j) + \eta^2 \frac{\mathsf{L}_2}{2} \big\||\Zbartij|^{-1}\bfg^\ttt(i,j)\big\|^2\\
        &\overset{}{=} \Loss(\param^{(\ttt)}) - \frac{\beta}{(\nparams\!-\!1)\mathsf{L}_2}\gradL^\ttt(i,j)^{\ttT}\Big|\Tilde{\bfZ}^\ttt(i,j) - \frac{2\beta}{\nparams}\II\Big|^{-1}(\git\bfe_i + \gjt \bfe_j) \\
        &\hspace{2.6in}+ \frac{\beta^2}{2(\nparams\!-\!1)^2\mathsf{L}_2} \big\||\Ztildeij^{\ttt} - (2\beta/\nparams)\II_2|^{-1}[\git,\gjt]\big\|^2,
    \end{align*}
where $\gradL^\ttt(i,j)\deq (\gradL_i(\param^{(\ttt)}) \ \bfe_i + \gradL_j(\param^{(\ttt)})\ \bfe_j)$. After taking the expectation of both sides with respect to $\tilde{\bfZ}_{[\text{I},\text{J}]}^{\ttt}, g_\text{I}^{\ttt}, g_\text{J}^{\ttt}$ conditioned on all estimates from previous iterations, we get
\begin{align}
\EE_{(g_\text{I}^{\ttt},\ g_\text{J}^{\ttt},\tilde{\bfZ}_{[\text{I},\text{J}]}^{\ttt})}&[\Loss(\param^{(\ttt+1)})] \nonumber\\
&\overset{}{\leq} \Loss(\param^{(\ttt)})-\frac{\beta}{(\nparams\!-\!1)\mathsf{L}_2} \underbrace{\sum_{i \neq j}  \frac{1}{c(c-1)}\gradL^\ttt(i,j)^{\ttT}\EE\bigg[\Big|\Tilde{\bfZ}^\ttt(i,j) - \frac{2\beta}{\nparams}\II\Big|^{-1}(\git\bfe_i + \gjt \bfe_j)\bigg]}_{\text{T}_1} \nonumber \\
&\hspace{1.1in} + \frac{\beta^2}{2(\nparams\!-\!1)^2\mathsf{L}_2} \underbrace{\sum_{i \neq j}  \frac{1}{c(c-1)} \EE\big[\||\Ztildeij^{\ttt} - (2\beta/\nparams)\II_2|^{-1}[\git,\gjt]\|^2\big]}_{\text{T}_2}. \label{eqn:T1_T2_2QNCDconv}
\end{align}
Now, we simplify the term $\text{T}_2$ as:
\begin{align*}
    \text{T}_2 & \leq \sum_{i \neq j}  \frac{1}{c(c-1)} \EE\big[\||\tilde{\bfZ}^{\ttt}_{[i,j]} - (2\beta/\nparams)\II_2|^{-1}\|^2\big]\EE[\|[\git,\gjt]\|^2] \leq \sum_{i \neq j}  \frac{1}{c(c-1)}{\alpha^2} = \alpha^2,
\end{align*}
where the first inequality follows from the definition of the spectral norm and the fact that independent quantum samples are used to construct $\tilde{\bfZ}_{[i,j]},g_i,\eqand g_j$, and the second inequality follows by defining $\alpha^2 = \max_{(i,j)} \EE\big[\||\Ztildeij^{\ttt}- (2\beta/\nparams)\II_2|^{-1}\|^2\big]\EE[\|[\git,\gjt]\|^2]$ and considering the following arguments. From Remark 2, there exists a $\beta>0$ for every $\nparams>2$ such that the  $2 \times 2$ matrix $(\tilde{\bfZ}_{[i,j]}^{\ttt} - ({2\beta}/{\nparams})\II_2)$ has positive eigenvalues for all possible measurements outcomes. Moreover, given a set of measurement outcomes, the eigenvalues of this $2 \times 2$ sub-matrix eventually saturate to a value independent of $\nparams$ and solely dependent on $\beta > 0$. 
This occurs because, as $\nparams$ increases, the diagonal entries of this $2 \times 2$ matrix are predominantly governed by $\beta$. Therefore, the spectral norm of the inverse of this $2 \times 2$ sub-matrix, i.e., the inverse of its minimum eigenvalue, is bounded. Furthermore, for a bounded loss function $\ell(y, \hat{y})$, the partial derivative estimators $g^\ttt_i \eqand g^\ttt_j$ are also bounded. This implies the product of the expectation of these estimators is also bounded.

On a similar note, we can rewrite $\text{T}_1$ as:
\begin{align*}
    \text{T}_1 &
    \overset{}{=} \sum_{i \neq j}  \frac{1}{c(c-1)}{\gradL^{\ttt}(i,j)}^{\ttT}\EE\Big[\Big|\tilde{\bfZ}^\ttt(i,j) - \frac{2\beta}{\nparams}\II\Big|^{-1}\Big]\gradL^{\ttt}(i,j)\\
    &\overset{a}{\geq} \sum_{i \neq j}  \frac{1}{c(c-1)}{\gradL^{\ttt}(i,j)}^{\ttT}\Big(\EE\Big[\Big|\tilde{\bfZ}^\ttt(i,j) - \frac{2\beta}{\nparams}\II\Big|\Big]\Big)^{\!-1}\gradL^{\ttt}(i,j)\\
    &\overset{b}{=} \sum_{i \neq j}  \frac{1}{c(c-1)}\gradL(\param^{(\ttt)})^{\ttT}_{[i,j]}\left(\mqfitildeij +\frac{2\beta}{\nparams}\II\right)^{-1}\gradL(\param^{(\ttt)})_{[i,j]} \\
    &\overset{c}{\geq}  \Bigg( \sum_{i \neq j}\frac{1}{c(c-1)}\gradL^{\ttt}(i,j)\Bigg)^{\!\!\ttT}\!\!\Bigg(\sum_{i \neq j}\frac{1}{c(c-1)} \left(\mqfitildeij +\frac{2\beta}{\nparams}\II\right)\Bigg)^{\!\!-1}\!\!\Bigg( \sum_{i \neq j}\frac{1}{c(c-1)}\gradL^{\ttt}(i,j)\Bigg)\\
    &=\frac{2}{c}\gradL(\param^{(\ttt)})^{\ttT} \Bigg(\frac{1}{(c-1)}\mqfi(\param) + \frac{2(c-2)}{c}\beta\II\Bigg)^{-1} \gradL(\param^{(\ttt)}) \\
    &\overset{d}{\geq}\frac{2}{c}\gradL(\param^{(\ttt)})^{\ttT} \Bigg(\frac{1}{(c-1)}\mqfi(\param) + 2\beta\II\Bigg)^{-1} \gradL(\param^{(\ttt)}) \\
    &\overset{e}{\geq}\frac{4\mubar}{c}(\Loss(\param^{(\ttt)})-\Loss(\paramstar)) .
\end{align*}
In the above inequalities, $(a)$ follows because 
the inverse is a convex operator function over the space of positive definite operator \cite{nordstrom2011convexity},  $(b)$ follows from Remark 2 and by defining $\mqfitildeij \deq \EE[\tilde{\bfZ}^\ttt(i,j)] - \big(\frac{4\beta}{\nparams}\big)\II(i,j)$, where $\II(i,j)$ is a projection operator with all zero elements except at diagonals corresponding to coordinates $i\eqand j$, $(c)$ follows from Kiefer inequality \cite[Lemma 3.2]{kiefer1959optimum}, $(d)$ follows from the fact that $\frac{(\nparams-2)}{\nparams}\leq 1 $ for all $\nparams>2$ and inverse is operator monotone decreasing function on the space of positive definite matrices, and $(e)$ follows from Assumption 2.
After putting the value of $\text{T}_1 \eqand \text{T}_2$ in \eqref{eqn:T1_T2_2QNCDconv} and taking expectation with respect to all estimates from previous iterations, we get 
\begin{align*}
    \EE[\Loss(\param^{(\ttt+1)}) - \Loss(\paramstar)]
&\overset{}{\leq} \bigg(1-\frac{4\mubar\beta}{\nparams(\nparams\!-\!1)\mathsf{L}_2}\bigg)\EE[(\Loss(\param^{(\ttt)}) - \Loss(\paramstar))] + \frac{\alpha^2\beta^2}{2(\nparams\!-\!1)^2\mathsf{L}_2}
\end{align*}
Finally, applying this inequality recursively, we get 
\begin{align*}
\EE[\Loss(\param^{(\ttt)}) - \Loss(\paramstar)]
&\overset{}{\leq} \bigg(1-\frac{4\mubar \beta}{\nparams(\nparams\!-\!1)\mathsf{L}_2}\bigg)^{\ttt}(\Loss(\param^{(0)}) - \Loss(\paramstar)) + \frac{\alpha^2\beta^2}{2(\nparams\!-\!1)^2\mathsf{L}_2} \sum_{k = 0}^{\ttt-1} \Big(1-\frac{4\mubar\beta}{\nparams(\nparams\!-\!1)\mathsf{L}_2}\Big)^k\\
\EE[\Loss(\param^{(\ttt)}) - \Loss(\paramstar)]
&\overset{}{\leq} \bigg(1-\frac{4\mubar\beta}{\nparams^2\mathsf{L}_2}\bigg)^{\ttt}(\Loss(\param^{(0)}) - \Loss(\paramstar)) +\frac{\alpha^2\beta}{4\mubar}.
\end{align*}
This completes the proof of Theorem 3.


\section{Details of Numerical Implementations}
\noindent \textbf{Experiment Setup.} To evaluate the 2-QNSCD performance, we utilize the Pennylane v0.34.0 open-source library \cite{bergholm2018pennylane} for implementing Algorithm 3. A constant learning rate of $\eta = 2.5 \times 10^{-3}$ is used for all experiments. 
The initial parameters $\param^{(0)}$ are chosen randomly and independently from a uniform distribution over $[0,2\pi)^{\nparams}$. The measurement used in the readout qubits has two outcomes $\set{+1,-1},$ each measured along the computational basis as $\Lambda \deq \set{\Lambda_{+1},\Lambda_{-1} = (\II-\Lambda_{+1})},$ where 
\begin{equation*}
    \Lambda_{+1} = \ketbra{000}+ \ketbra{011}+\ketbra{101}+\ketbra{110}, \quad \text{for } d= 3, \eqand
\end{equation*}  
\begin{equation*}
    \Lambda_{+1} = \sum_{j=0:(j\!\! \texttt{ mod }\!\! 2 = 0)}^{2^d-1}  \ketbra{\texttt{bin}(j)}, \quad \text{for } d= 4,5,6.
\end{equation*}

For $d=3$, this amounts to performing a complete measurement in the binary computational basis and then deciding $\yhat=+1$ if the number of $+1$ observed is even. Otherwise, this amounts to performing a complete measurement in the decimal computational basis and then deciding $\yhat = +1$ if the even outcome is observed. 

\noindent\textbf{PQC Setup.} We consider four cases involving 3, 4, 5, and 6 qubits. 
The parameters are represented using the following convention:  $\bomega_i = \param_{(i/d,i\%d)}$, where $d$ is the number of qubits, $i/d$ represents the integer (floor) division of 
$i$ by $d$, and $i\%d$ ($i$ modulo $d$) gives the remainder of 
$i$ divided by $d$.

\vspace{5pt}
\begin{itemize}
    \item 
    In the 3-qubit configuration, the PQC is composed of three layers of $R_Y$ rotation gates, with each $R_Y$ layer immediately followed by a fixed entangling layer that connects the qubits.
    
\vspace{15pt}
$3$-Qubit, $L=3$
\begin{quantikz}[column sep=0.3cm]
 & \gate{R_{Y}(\bomega_0)} & \ctrl{1} &\! \qw & \gate{R_{Y}(\bomega_3)} & \ctrl{1} &\! \qw & \gate{R_{Y}(\bomega_6)} & \ctrl{1} &\! \qw&\qw  \\
\lstick{} & \gate{R_{Y}(\bomega_1)} & \targ{} & \ctrl{1} & \gate{R_{Y}(\bomega_4)} & \targ{} & \ctrl{1} & \gate{R_{Y}(\bomega_7)} & \targ{} & \ctrl{1} &\qw \\
\lstick{} & \gate{R_{Y}(\bomega_2)} &\! \qw & \targ{} & \gate{R_{Y}(\bomega_5)} &\! \qw & \targ{} & \gate{R_{Y}(\bomega_8)} &\! \qw & \targ{}&\qw  \\
\end{quantikz}.
\item In the 4-qubit case, the PQC consists of two layers where each layer applies a combination of $R_Y$ and $R_Z$ rotations to every qubit, and each of these parameterized layers is followed by an entangling layer.

\vspace{15pt}
$4$-Qubit, $L=4$
\begin{quantikz}
 & \gate{R_{Y}(\bomega_0)} & \gate{R_{Z}(\bomega_4)} & \ctrl{1} &\! \qw & \targ{} & \gate{R_{Y}(\bomega_8)} & \gate{R_{Z}(\bomega_{12})} &\ctrl{2} & \!\qw &\qw \\
& \gate{R_{Y}(\bomega_1)} & \gate{R_{Z}(\bomega_5)} & \targ{} & \ctrl{1} &\! \qw & \gate{R_{Y}(\bomega_9)} & \gate{R_{Z}(\bomega_{13})} &\! \qw & \targ{} &\qw  \\
& \gate{R_{Y}(\bomega_2)} & \gate{R_{Z}(\bomega_6)} & \ctrl{1} & \targ{} &\! \qw & \gate{R_{Y}(\bomega_{10})} & \gate{R_{Z}(\bomega_{14})} & \targ{} &\! \qw  &\qw \\
& \gate{R_{Y}(\bomega_{3})} &\gate{R_{Z}(\bomega_{7})} & \targ{} &\! \qw &\ctrl{-3} &\gate{R_{Y}(\bomega_{11})}  & \gate{R_{Z}(\bomega_{15})} &\! \qw & \ctrl{-2} &\qw   \\
\end{quantikz}

For the 5-qubit and 6-qubit cases, we investigate two distinct circuit architectures for each. 

\item $5$-Qubit, $L=6$ (PQC-1)

\vspace{5pt}

\begin{quantikz}[column sep=0.25cm]
 & \gate{R_{Y}(\bomega_0)} & \gate{R_{Z}(\bomega_5)} & \! \qw & \gate{R_{Y}(\bomega_{10})} & \gate{R_{Z}(\bomega_{15})} &\ctrl{1} & \targ{} & \gate{R_{Y}(\bomega_{20})} & \gate{R_{Z}(\bomega_{25})} &\!\qw &\!\qw\\
 & \gate{R_{Y}(\bomega_{1})} & \gate{R_{Z}(\bomega_{6})} & \ctrl{1} & \gate{R_{Y}(\bomega_{11})} & \gate{R_{Z}(\bomega_{16})} &\targ{} & \!\qw & \gate{R_{Y}(\bomega_{21})} & \gate{R_{Z}(\bomega_{26})} &\ctrl{1} &\!\qw\\
& \gate{R_{Y}(\bomega_{2})} & \gate{R_{Z}(\bomega_{7})} & \targ{} & \gate{R_{Y}(\bomega_{12})} & \gate{R_{Z}(\bomega_{17})} &\ctrl{1} & \!\qw & \gate{R_{Y}(\bomega_{22})} & \gate{R_{Z}(\bomega_{27})} &\targ{} &\!\qw\\
& \gate{R_{Y}(\bomega_{3})} & \gate{R_{Z}(\bomega_{8})} & \ctrl{1} & \gate{R_{Y}(\bomega_{13})} & \gate{R_{Z}(\bomega_{18})} &\targ{} & \!\qw & \gate{R_{Y}(\bomega_{23})} & \gate{R_{Z}(\bomega_{28})} &\ctrl{1} &\!\qw\\
& \gate{R_{Y}(\bomega_{4})} & \gate{R_{Z}(\bomega_{9})} & \targ{} & \gate{R_{Y}(\bomega_{14})} & \gate{R_{Z}(\bomega_{19})} &\!\qw & \ctrl{-4} & \gate{R_{Y}(\bomega_{24})} & \gate{R_{Z}(\bomega_{29})} &\targ{} &\!\qw\\
\end{quantikz}

\item 
    $5$-Qubit, $L=6$ (PQC-2)

\vspace{5pt}
\begin{quantikz}[column sep=0.3cm]
 & \gate{R_{Y}(\bomega_0)} & \gate{R_{Z}(\bomega_5)} & \gate{R_{Y}(\bomega_{10})} & \!\qw &\ctrl{1} & \targ{}& \gate{R_{Z}(\bomega_{15})}  & \gate{R_{Y}(\bomega_{20})} & \gate{R_{Z}(\bomega_{25})} &\!\qw\\
 & \gate{R_{Y}(\bomega_{1})} & \gate{R_{Z}(\bomega_{6})}& \gate{R_{Y}(\bomega_{11})}  & \ctrl{1} &\targ{} & \!\qw & \gate{R_{Z}(\bomega_{16})} & \gate{R_{Y}(\bomega_{21})} & \gate{R_{Z}(\bomega_{26})}  &\!\qw\\
& \gate{R_{Y}(\bomega_{2})} & \gate{R_{Z}(\bomega_{7})} &  \gate{R_{Y}(\bomega_{12})}  & \targ{} &\ctrl{1} & \!\qw & \gate{R_{Z}(\bomega_{17})}  & \gate{R_{Y}(\bomega_{22})} & \gate{R_{Z}(\bomega_{27})}  &\!\qw\\
& \gate{R_{Y}(\bomega_{3})} & \gate{R_{Z}(\bomega_{8})} &  \gate{R_{Y}(\bomega_{13})} & \ctrl{1} &\targ{} & \!\qw  & \gate{R_{Z}(\bomega_{18})} & \gate{R_{Y}(\bomega_{23})} & \gate{R_{Z}(\bomega_{28})}  &\!\qw\\
& \gate{R_{Y}(\bomega_{4})} & \gate{R_{Z}(\bomega_{9})} &  \gate{R_{Y}(\bomega_{14})}  & \targ{} &\!\qw & \ctrl{-4} & \gate{R_{Z}(\bomega_{19})}  & \gate{R_{Y}(\bomega_{24})} & \gate{R_{Z}(\bomega_{29})} &\!\qw\\
\end{quantikz}

\item 
    $6$-Qubit, $L=6$ (PQC-1)

\vspace{5pt}
\begin{quantikz}[column sep=0.25cm]
 & \gate{R_{Y}(\bomega_0)} & \gate{R_{Z}(\bomega_6)} & \ctrl{1} & \gate{R_{Y}(\bomega_{12})} & \gate{R_{Z}(\bomega_{18})} &\!\qw & \targ{} & \gate{R_{Y}(\bomega_{24})} & \gate{R_{Z}(\bomega_{30})} &\ctrl{1} &\!\qw\\
 & \gate{R_{Y}(\bomega_1)} & \gate{R_{Z}(\bomega_7)} & \targ{} & \gate{R_{Y}(\bomega_{13})} & \gate{R_{Z}(\bomega_{19})} &\ctrl{1} & \!\qw & \gate{R_{Y}(\bomega_{25})} & \gate{R_{Z}(\bomega_{31})} &\targ{} &\!\qw\\
 & \gate{R_{Y}(\bomega_{2})} & \gate{R_{Z}(\bomega_{8})} & \ctrl{1} & \gate{R_{Y}(\bomega_{14})} & \gate{R_{Z}(\bomega_{20})} &\targ{} & \!\qw & \gate{R_{Y}(\bomega_{26})} & \gate{R_{Z}(\bomega_{32})} &\ctrl{1} &\!\qw\\
& \gate{R_{Y}(\bomega_{3})} & \gate{R_{Z}(\bomega_{9})} & \targ{} & \gate{R_{Y}(\bomega_{15})} & \gate{R_{Z}(\bomega_{21})} &\ctrl{1} & \!\qw & \gate{R_{Y}(\bomega_{27})} & \gate{R_{Z}(\bomega_{33})} &\targ{} &\!\qw\\
& \gate{R_{Y}(\bomega_{4})} & \gate{R_{Z}(\bomega_{10})} & \ctrl{1} & \gate{R_{Y}(\bomega_{16})} & \gate{R_{Z}(\bomega_{22})} &\targ{} & \!\qw & \gate{R_{Y}(\bomega_{28})} & \gate{R_{Z}(\bomega_{34})} &\ctrl{1} &\!\qw\\
& \gate{R_{Y}(\bomega_{5})} & \gate{R_{Z}(\bomega_{11})} & \targ{} & \gate{R_{Y}(\bomega_{17})} & \gate{R_{Z}(\bomega_{23})} &\!\qw & \ctrl{-5} & \gate{R_{Y}(\bomega_{29})} & \gate{R_{Z}(\bomega_{35})} &\targ{} &\!\qw\\
\end{quantikz}

\item 
    $6$-Qubit, $L=8$ (PQC-2)

\vspace{5pt}
\hspace{-0.25in}\begin{quantikz}[column sep=0.175cm]
 & \gate{R_{Y}(\bomega_0)} & \gate{R_{Z}(\bomega_6)} &  \gate{R_{Y}(\bomega_{12})} & \gate{R_{Z}(\bomega_{18})} &\ctrl{1} &\!\qw & \targ{} & \gate{R_{Y}(\bomega_{24})} & \gate{R_{Z}(\bomega_{30})}& \gate{R_{Y}(\bomega_{36})} & \gate{R_{Z}(\bomega_{42})} &\!\qw\\
 & \gate{R_{Y}(\bomega_1)} & \gate{R_{Z}(\bomega_7)}  & \gate{R_{Y}(\bomega_{13})} & \gate{R_{Z}(\bomega_{19})} & \targ{} &\ctrl{1} & \!\qw & \gate{R_{Y}(\bomega_{25})} & \gate{R_{Z}(\bomega_{31})}& \gate{R_{Y}(\bomega_{37})} & \gate{R_{Z}(\bomega_{43})} &\!\qw\\
 & \gate{R_{Y}(\bomega_{2})} & \gate{R_{Z}(\bomega_{8})}  & \gate{R_{Y}(\bomega_{14})} & \gate{R_{Z}(\bomega_{20})} & \ctrl{1}&\targ{} & \!\qw & \gate{R_{Y}(\bomega_{26})} & \gate{R_{Z}(\bomega_{32})} & \gate{R_{Y}(\bomega_{38})} & \gate{R_{Z}(\bomega_{44})} &\!\qw\\
& \gate{R_{Y}(\bomega_{3})} & \gate{R_{Z}(\bomega_{9})} & \gate{R_{Y}(\bomega_{15})} & \gate{R_{Z}(\bomega_{21})} & \targ{} &\ctrl{1} & \!\qw & \gate{R_{Y}(\bomega_{27})} & \gate{R_{Z}(\bomega_{33})}& \gate{R_{Y}(\bomega_{39})} & \gate{R_{Z}(\bomega_{45})} &\!\qw\\
& \gate{R_{Y}(\bomega_{4})} & \gate{R_{Z}(\bomega_{10})} & \gate{R_{Y}(\bomega_{16})} & \gate{R_{Z}(\bomega_{22})} & \ctrl{1} &\targ{} & \!\qw & \gate{R_{Y}(\bomega_{28})} & \gate{R_{Z}(\bomega_{34})}& \gate{R_{Y}(\bomega_{40})} & \gate{R_{Z}(\bomega_{46})}  &\!\qw\\
& \gate{R_{Y}(\bomega_{5})} & \gate{R_{Z}(\bomega_{11})}  & \gate{R_{Y}(\bomega_{17})} & \gate{R_{Z}(\bomega_{23})} & \targ{} &\!\qw & \ctrl{-5} & \gate{R_{Y}(\bomega_{29})} & \gate{R_{Z}(\bomega_{35})} & \gate{R_{Y}(\bomega_{41})} & \gate{R_{Z}(\bomega_{47})} &\!\qw\\
\end{quantikz}
\end{itemize}

\noindent\textbf{Regularization constant $\beta$.} We choose $\beta$ by empirically assessing the positive definiteness of $2\times 2$ sub-matrix $\bar{\bfZ}_{[i,j]}$ for all possible cases of measurement outcomes. If $\beta > 0.643$ for $c=9$, $\beta > 0.572$ for $c=16$, $\beta > 0.536$ for $c=30$, $\beta > 0.5295$ for $c=36$, and $\beta > 0.5218$ for $c=48$, we observe the sub-matrix $\bar{\bfZ}_{[i,j]}$ is positive definite for all possible measurement outcomes. Therefore, we chose a value for $\beta$, which is close to the threshold.

The regularization constant $\beta$ is a hyper-parameter to trade off numerical instabilities for the faithful E-QFIM estimation. A small $\beta$ leads to a faster convergence, while a larger $\beta$ aligns the update closer to the RQSGD approach. However, using a significantly small $\beta$, closer to the threshold, leads to large noisy oscillations in the 2-QNSCD loss function due to numerical instabilities in the inversion of $\bar{\bfZ}_{[i,j]}$. Figure \ref{fig:beta} shows the 3Q Exp2 example (from Fig.~2) for different values of $\beta$ and illustrates how the different values of $\beta$ impact the performance of 2-QNSCD. As $\beta$ increases, 2-QNSCD approaches the performance of 6-RQSGD and then the 2-RQSGD.

\begin{figure}[!htb]
    \centering
    \includegraphics[height=4in, width=0.95\textwidth]{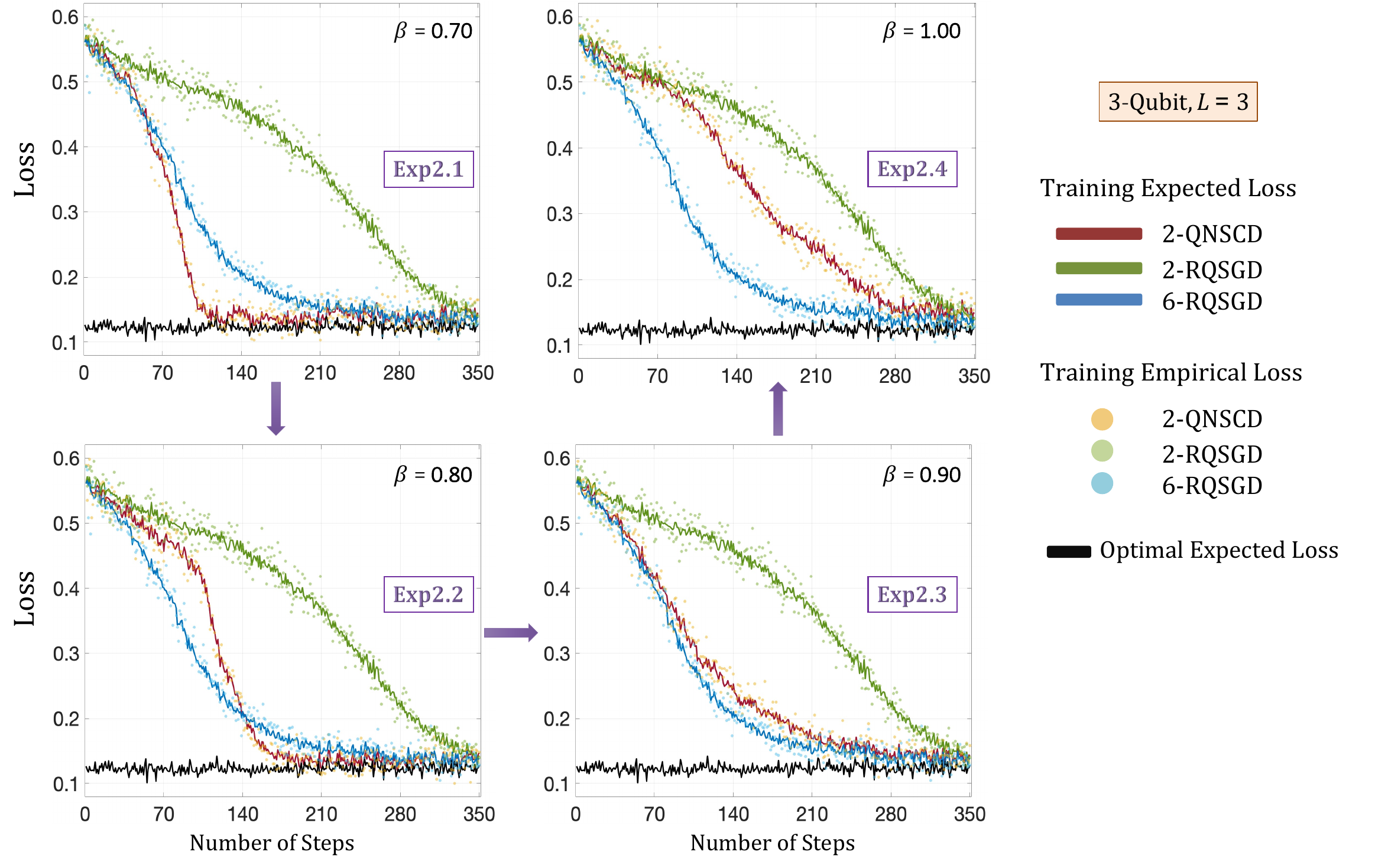}
    \caption{Performance of 2-QNSCD with 3-Qubit for different regularization constant $\beta$.}
    \label{fig:beta}
\end{figure}

\noindent\textbf{RQSGD optimization method.} In 2-RQSGD, three samples are used to estimate the partial derivative of the loss function for each parameter. Let $(\ap,\bq)$ be the pair of coordinates chosen at iteration $\ttt$. 
Then, the estimate for $\gradL_{\ap}(\param^{(\ttt)})$ is calculated as \begin{equation*}
    g_{\ap}^{\ttt} = \frac{1}{3}\sum_{j=1}^3(-1)^{(1+\ttb_j)}\ell(y_j,\yhat_j),\end{equation*} 
    where $y_j'$s are true labels and $(\ttb_j,\yhat_j)'$s are the measurement outcomes of the gradient estimation circuit (see Algorithm 2) corresponding to the three samples used. Similarly, $g_{\bq}^{\ttt}$ is calculated using the remaining three samples. Finally, the unbiased gradient estimator for 2-RQSGD is calculated as given in (20).  In 6-RQSGD, each sample is used to compute the estimate of the partial derivative of the loss function for one parameter, following a procedure similar to Algorithm 2 but applied to six coordinates. Let $(j_1,j_2,\cdots,j_6)$ be the six coordinates chosen at iteration $\ttt$. Then, the unbiased estimator of the gradient for 6-RQSGD is calculated as \begin{equation*}
        \bfg^\ttt = \left(\frac{c}{6}\right)(g_{j_1}^\ttt\bfe_{j_1} + \cdots + g_{j_6}^\ttt\bfe_{j_6}).
    \end{equation*}

\noindent \textbf{Problem of exploding gradient and E-QFIM estimators.} While comparing the performance of 2-QNSCD with RQSGD, note that the gradient estimates grow with $O(c)$ for a bounded loss function $\ell(y,\yhat)$, whereas the elements of the inverse of the E-QFIM estimator diminishes with approximately $O(c^2)$. This implies, for RQSGD, $\bfg$ explodes with $O(c)$, and for 2-QNSCD, the $\Bar{\bfZ}^{-1}\bfg$ roughly diminishes with $O(c)$. As a result, this discrepancy makes it challenging to compare RQSGD and 2-QNSCD directly. This issue of exploding and diminishing gradients has been observed in the training of classical neural networks, particularly in recurrent neural networks \cite{bengio1994learning}. Various methods have been proposed to address this problem, including gradient clipping, normalized parameter initialization, and re-scaling of the gradient \cite{pascanu2013difficulty,glorot2010understanding,goodfellow2016deep}. Similarly, to circumvent the problem of diminishing and exploding estimates, we consider appropriately scaling the estimators with a global constant, ensuring that estimators neither explode nor diminish as $\nparams$ increases. This approach preserves the underlying structure of the estimators while preventing them from becoming unstable. 


\end{document}